\DeclareMathOperator{\argmax}{arg\,max}
\renewcommand{\ALG@name}{Protocol}
\renewcommand{\textproc}{\textsf}
\newcommand{\Comments}{0}
\definecolor{gray}{gray}{0.5}
\definecolor{lightred}{rgb}{1,0.6,0.6}
\definecolor{darkgreen}{rgb}{0,0.5,0}
\definecolor{myorange}{rgb}{0.8,0.7,0.5}
\definecolor{darkblue}{rgb}{0.0,0.0,0.5}
\newcommand{\mynote}[2]{\ifnum\Comments=1\textcolor{#1}{#2}\fi}
\newcommand{\mytodo}[2]{\ifnum\Comments=1
  \todo[linecolor=#1!80!black,backgroundcolor=#1,bordercolor=#1!80!black]{#2}\fi}
\newtheorem{theorem}{Theorem}
\newtheorem*{remark*}{Remark}
\newtheorem{lemma}{Lemma}
\newtheorem{proposition}{Proposition}
\newtheorem*{proposition*}{Proposition}
\newtheorem{definition}{Definition}
\newtheorem{axiom}{Axiom}
\let\oldtextsc\textsc
\renewcommand{\textsc}[1]{\text{\oldtextsc{\upshape #1}}}
\newcommand{\ones}{\vec{1}}
\newcommand{\indvec}[1]{\boldsymbol{\delta}^{#1}}
\newcommand{\reals}{\mathbb{R}}
\let\C\undefined                
\newcommand{\C}{\mathcal{C}}
\newcommand{\G}{\mathcal{G}}
\newcommand{\N}{\mathbb{N}}
\newcommand{\Y}{\mathcal{Y}}
\newcommand{\Ginit}{\mathcal{G}_\mathrm{init}}
\newcommand{\GLP}{\mathcal{G}_\mathrm{LP}}
\newcommand{\Cinit}{\mathcal{C}_\mathrm{init}}
\newcommand{\CLP}{\mathcal{C}_\mathrm{LP}}
\newcommand{\Span}{\mathrm{span}\,}
\newcommand{\conv}{\mathrm{conv}\,}
\newcommand{\dom}{\mathrm{dom}\,}
\newcommand{\relint}{\mathrm{relint}\,}
\renewcommand{\bar}[1]{\overline{#1}}
\newcommand{\E}{\mathbb{E}}
\newcommand{\fee}{\mathsf{fee}}
\newcommand{\liabilityf}{\mathsf{liability}}
\newcommand{\pricef}{\mathsf{price}}
\renewcommand{\vec}[1]{{\mathbf{#1}}}
\renewcommand{\r}{\vec{r}}
\newcommand{\p}{\vec{p}}
\newcommand{\q}{\vec{q}}
\renewcommand{\v}{\vec{v}}
\newcommand{\x}{\vec{x}}
\newcommand{\y}{\vec{y}}
\newcommand{\0}{\vec{0}}
\newcommand{\dG}{\vec{d} G}
\newcommand{\res}{\x}
\newcommand{\xa}{x_1}
\newcommand{\xb}{x_2}
\newcommand{\inprod}[2]{\left\langle #1, #2 \right\rangle}
\newcommand{\ooverline}[1]{\overline{\dbl@overline{#1}}}
\newcommand{\dbl@overline}[1]{\mathpalette\dbl@@overline{#1}}
\newcommand{\dbl@@overline}[2]{
  \begingroup
  \sbox\z@{$\m@th#1\overline{#2}$}
  \ht\z@=\dimexpr\ht\z@-2\dbl@adjust{#1}\relax
  \box\z@
  \ifx#1\scriptstyle\kern-\scriptspace\else
  \ifx#1\scriptscriptstyle\kern-\scriptspace\fi\fi
  \endgroup
}
\newcommand{\dbl@adjust}[1]{
  \fontdimen8
  \ifx#1\displaystyle\textfont\else
  \ifx#1\textstyle\textfont\else
  \ifx#1\scriptstyle\scriptfont\else
  \scriptscriptfont\fi\fi\fi 3
}
\title{A General Theory of Liquidity Provisioning\\for Prediction Markets}
\author{Adithya Bhaskara}\affiliation{\institution{University of Colorado Boulder}\city{Boulder}\state{Colorado}\country{USA}}
\author{Rafael Frongillo}\affiliation{\institution{University of Colorado Boulder}\city{Boulder}\state{Colorado}\country{USA}}
\author{Elias Lindgren}\affiliation{\institution{University of Colorado Boulder}\city{Boulder}\state{Colorado}\country{USA}}
\author{Maneesha Papireddygari}\affiliation{\institution{Boston College}\city{Chestnut Hill}\state{Massachusetts}\country{USA}}
\begin{abstract}
  Liquidity provisioning in automated market makers is the practice of recruiting third-party liquidity providers (LPs) to contribute assets to the market in exchange for fees skimmed off of trades.
  This paper introduces a general framework for liquidity provisioning in cost function prediction markets.
  Our most general protocol allows LPs to submit or update an arbitrary cost function that specifies their liquidity over the entire price space.
  We show that our protocol encapsulates several notions of running market makers in parallel, which we prove to be equivalent.
  We also recover existing protocols from decentralized finance as special cases.
  In our protocol, liquidity can be expressed as a matrix-valued function, which we argue is necessary with three or more securities.
  Due to this inherent multidimensionality, the design of trading fees with three or more securities is nontrivial: we show that natural axioms on the design of these fees are incompatible.
\end{abstract}
\begin{document}

\maketitle

\subsection*{Acknowledgments}

This material is based upon work supported by the Ethereum Foundation and the National Science Foundation under Grant Nos.\ IIS-2045347 and DMS-1928930, the latter while the second author was in residence at the Mathematical Sciences Research Institute in Berkeley, California, during the Fall 2023 semester.
Part of this research was conducted when the fourth author was an intern at the Ethereum Foundation.
We thank Gabriel Andrade, Davide Crapis, James Grugett, Ciamac Moallemi, Alex Solleiro, and Bo Waggoner for several interesting discussions and ideas.
We also thank Guillermo Angeris for suggestions regarding the piecewise linear market maker and anonymous reviewers for their feedback.

\newpage
\section{Introduction}

\citet{hanson2003combinatorial} introduced the concept of an automated market maker to solve thin market problems in combinatorial prediction markets.
In contrast to order books and continuous double auctions, where buyers are matched to sellers, automated market makers are central authorities willing to price any bundle of assets to buy or sell.
More recently, automated market makers have gained in popularity in the context of decentralized finance as a low-gas way to implement a market on a blockchain \citep{base2025gas,bartoletti2022ammdefi,xu2023dexamms,mohan2022ammdefi}.
Along with this trend came the innovation of decoupling the roles of the market mechanism, which facilitates trade, and liquidity providers (LPs), which take on risk to stabilize prices.
In this new paradigm, the market mechanism exposes another interface to potential LPs, which may deposit assets in exchange for a cut of the fees charged on the trades using those assets.

Thus far, however, the design of liquidity provisioning interfaces has been somewhat \emph{ad hoc}, limited to its decentralized finance origins, and focused only on the case of two assets.
For example, in the Uniswap V2 interface, LPs must deposit funds proportional to the current reserves, a natural yet restrictive interface.
\citet{zetlinjones2024automated} show how these restrictions lead LPs to actively trade against the market---that is, themselves and other LPs---leading to potential inefficiencies.
Uniswap V3 adds significant flexibility, but the interface is still somewhat cumbersome:
LPs must contend with discrete ``buckets'' in the price space to allocate their funds.
Throughout, the full design space of liquidity provisioning protocols has been far from clear.

This gap is especially large in the case of prediction markets, which often exchange more than two securities.
For example, an election market might offer a security for each potential candidate, or even a combinatorial market for the outcomes per state.
Offering multiple independent markets for each pair of securities not only leads to information loss, but also creates large arbitrage opportunities that increase the risk of providing liquidity~\citep{dudik2012tractable}.
Thus, effective liquidity provisioning protocols for more than two assets could have a significant impact on the performance and prevalence of prediction markets.
Despite all these considerations, no liquidity provisioning protocol for prediction markets has been proposed.
\subsection{Contributions}

To fill this gap, we introduce a general framework for liquidity provisioning protocols for cost function prediction markets trading any number of securities (\S~\ref{sec:protocol-pred-markets}).
Our protocol allows LPs to submit an arbitrary cost function, specifying their liquidity over the entire price space.
Then, the mechanism determines the deposit required to ensure that LPs never owe the market maker in any future state.
Our framework is built on the idea of LPs running market makers ``in parallel,'' and we show that several notions of parallel market making are equivalent, including the scoring rule markets of \citet{hanson2003combinatorial} (\S~\ref{sec:equivalence}).

We also address various difficulties that arise in the case of three or more securities.
First, we justify the need for our general protocol by demonstrating the inherent multidimensionality of liquidity (\S~\ref{sec:liquidity_discussion}).
The liquidity at a given price must be described by a full matrix, allowing one to assess the liquidity in each ``direction,'' i.e., for each possible trade.
As a corollary, any fully general liquidity provisioning protocol must allow liquidity between all assets simultaneously, rather than only allowing liquidity in pairwise markets.
Moreover, we show that natural axioms on the design of trading fees are incompatible when there are three or more securities traded (\S~\ref{sec:fees}).

Finally, we additionally show that in the two-asset case, our framework recovers several existing protocols (\S~\ref{sec:DeFi}) used in decentralized finance.
We also provide several restricted protocols that are computationally feasible.
Furthermore, we contribute to the decentralized finance literature by giving a fully expressive protocol for the trade of any number of assets.

\subsection{Related work}

We direct readers to \citet{chen2007utility,abernethy2013efficient} for an overview of the literature on automated market makers for prediction markets, and to \citet{angeris2020improved} for those in decentralized finance.
Our work heavily relies on \citet{frongillo2023axiomatic}, which establishes the equivalence of (non-parallelized) automated market makers as used for prediction markets to those used in decentralized finance.
One can view our analysis of the trade split $\r=\sum_i\r^i$ as a special case of optimal routing problems stated in~\cite{angeris2022optimal,diamandis2023efficient}. 
We provide a closed form solution to this special case not yet seen in the literature.
This work is also related to recent explorations of running parallel LMSRs \cite{dudik2021logtime} and of geometric aspects of automated market makers~\citep{angeris2023geometry}. 
In particular, the Minkowski sum operations in the latter paper can be seen as implicitly computing an infimal convolution, which they also view as a combined market maker.
However, their connections to implementing liquidity provisioning are not explored beyond a very restricted setting. 
Additionally, \citet{ramseyer2024augmenting} study similar notions of price coherence and parallel market makers to ours, but while working to design batch exchanges that support constant function market makers for liquidity provision.

One could view our general framework as theoretically formalizing Minswap~\citep{nguyen2021minswap}, which is designed on Cardano to accommodate multiple LP pools.
Perhaps closest to our work is \citet{milionis2023complexity}, which asks LPs for their ``demand curves'' to be aggregated into a two-asset market maker.
Our framework can be thought of as a more general way of thinking about LPs running several markets in parallel.
Their demand curves $h(p)$ (denoted $g(p)$ in their paper) are related to our market scoring rule with $h(p)=S(p,1)$ and $-\int pdh(p)=S(p,0)$.
While demand curves are well-rooted in microeconomic foundations \citep{milionis2024myersonian}, we demonstrate in \S~\ref{sec:liquidity_discussion} that cost functions (or some other suitable higher-dimensional notion of demand curves) are necessary to give a fully general liquidity provisioning protocol for more than two assets, a crucially important case for prediction markets as we describe above.
They are also instrumental in helping us propose a closed-form construction of aggregate CFMMs and in helping us realize and prove different equivalent interpretations (Theorem~\ref{thm:equivalence-interpretations}).
We additionally handle the technical issues that arise with more than two securities (Theorem~\ref{thm:fee-impossible}).

There has additionally been some work on the \emph{incentives} and \emph{effects} that trading fees have on traders and LPs' behavior~\citep{angeris2024multidimensional,campbell2025optimal,milionis2025automated}, but the design of trading fees themselves has been unexplored.
This gap is especially large for three or more securities, where we show several natural properties to be incompatible.

\section{Background}

\subsection{Notation and convex analysis}
Vectors are denoted in bold, e.g., $\q \in \reals^n$, and $q_i\in \reals$ denotes the $i$th coordinate of $\q$.
The all-zeros vector is $\0 = (0,\dots,0)$ and the all-ones vector is $\ones = (1,\dots,1)$. We define the indicator vector $\indvec{i}$ by $\delta^i_i = 1$ and $\delta^i_j = 0$ for $j\neq i$.
Comparison between two vectors is pointwise, e.g., $\q\succ\hat{\q}$ if $q_i>q_i'$ for all $i=1,\ldots,n$, and similarly for $\succeq$.
We say $\q\gneqq\hat{\q}$ when $q_i\geq q_i'$ for all $i$ and $\q\neq\hat{\q}$. Define $\reals_{\geq0}^n=\{\q\in\reals^n\mid\q\succeq\0\}$, $\reals_{>0}^n=\{\q\in\reals^n\mid\q\succ\0\}$, \emph{et cetera}.
Finally, we denote the probability simplex by $\Delta_n = \{\p \in \reals^n_{\geq 0} \mid \inprod{\p}{\ones} = 1\}$.
Let $f:\reals^n\to\reals$. We use the following conditions:
\begin{itemize}[itemsep=0pt]
\item \textit{convex}: $\forall \x,\y\in\reals^n,\lambda\in[0,1],f(\lambda \x+(1-\lambda)\y)\leq\lambda f(\x)+(1-\lambda)f(\y)$.
\item \textit{$\ones$-invariant}: $f(\q+\alpha\ones)=f(\q)+\alpha$ for all $\q\in\reals^n$, $\alpha\in\reals$.
\item \textit{1-homogeneous (on $\reals_{\geq 0}^n$)}: $f(\alpha\q)=\alpha f(\q)$ for all $\q\succeq\0$ and $\alpha>0$.
\end{itemize}
We write $f'$ to indicate differentiation when $f$ is 1-dimensional.

\begin{definition}[Convex conjugate]
  For a function $f:\reals^n\to\reals\cup\{\infty\}$ we define its \emph{convex conjugate} $f^*:\reals^n\to\reals\cup\{\infty,-\infty\}$ by $f^*(\x^*) = \sup_{\x\in\reals^n} \inprod{\x^*}{\x} - f(\x)$.
\end{definition}

\begin{definition}[Subgradient]
  For a function $f:\reals^n\to\reals\cup\{\infty\}$ and $\x\in\reals^n$ we define the set of \emph{subgradients} of $f$ at $\x$ by $\partial f(\x) = \{\x^* \in \reals^n \mid \forall \x'\in\reals^n\, f(\x') - f(\x) \geq \inprod{\x^*}{\x'-\x}\}$.
\end{definition}

\begin{definition}[Infimal convolution]
  For functions $f_i:\reals^n\to\reals\cup\{\infty\}$ we define their \emph{infimal convolution} $f = \bigwedge_i f_i$ by $f(\x) = \inf\left\{\sum_i f_i(\x^i) \mid \sum_i \x^i = \x\right\}$, where the $\x^i$ range over $\reals^n$.
\end{definition}

\begin{definition}[1-homogeneous extension $\overline f$]
  Given $f:\Delta_n\to\reals$, we define its \emph{1-homogeneous extension} $\overline f : \reals^n_{\geq 0} \to \reals$ by $\overline f(\x) := \|\x\|_1 f(\x/\|\x\|_1)$ for $\x \neq \0$ and $\overline f(\0) = 0$.
\end{definition}

\subsection{Cost functions and prediction markets} \label{sec:predmarketscostfuncs}

Automated market makers (AMMs) are mechanisms that are always willing to trade a bundle of $n$ securities for some price.
In contrast to traditional order book settings, where buyers and sellers must be matched, traders can trade with AMMs directly.
Prediction markets are AMM mechanisms that seek to elicit probability distributions over future events by allowing traders to buy and sell securities.
Some common instantiations of AMM prediction markets can be seen in horse betting, Iowa electronic markets, and more recently, \citet{manifold2022maniswap} and \citet{andrade2025lmsr}.

Suppose a random variable $Y$ about a future event takes values from set $\Y$, which contains $n$ mutually exclusive and exhaustive outcomes.
A trader holding an Arrow-Debreu (AD) security associated with $y\in\Y$ gets paid \$1 when outcome $y$ happens and \$0 otherwise. 
A security market is \textit{complete} if it trades $n$ independent AD securities, one for each outcome.
While our Protocol~\ref{alg:general-n-asset-cost} works for incomplete markets, we restrict our attention to complete markets in this paper for ease of exposition.
Our proofs in \S~\ref{sec:equiv-interpr} rely heavily on the market being complete.
But, in Appendix~\ref{sec:app-incomplete-markets}, we discuss how our framework extends to the incomplete case.

\citet{chen2007utility,chen2013cost,abernethy2013efficient} characterize prediction markets, and \citet{abernethy2013efficient} show that they should be implemented by a cost function-based market maker satisfying certain conditions in order to satisfy certain information elicitation axioms.
We define these below.
\begin{definition}[Cost function-based market maker]
  A cost function-based market maker with $n$ securities is one that prices each security via a differentiable potential function $C:\mathbb{R}^n\to\mathbb{R}$.
  Suppose a trader wants to purchase a bundle of securities $\r\in\mathbb{R}^n_{\geq 0}$; that is, $r_i$ shares of security $i$, when the market has current liability of $\q$.
  Then, the trader must pay $C(\q+\r)-C(\q)$ in cash to the market maker.
\end{definition}

Cost function-based markets always maintain a \emph{liability} $\q\in\mathbb{R}^n$ of securities, $q_i$ of security $i$ that the market has sold so far.
The term liability comes from the fact that $q_i$ is the amount due to traders upon outcome $i$.
As shown by \citet{abernethy2013efficient}, prediction markets that elicit information well are precisely cost function-based prediction markets with $C$ convex and $\ones$-invariant. 

\begin{remark*}\label{remark:no-cash}
  As remarked in \citet{frongillo2023axiomatic}, in a complete AD securities market, holding one of each security, i.e., $\ones$, is equivalent to holding \$1 cash.
  Without loss of generality, therefore, one can restrict attention to ``net trades'' $\r=\r'-(C(\q+\r')-C(\q))\ones$, which subtract the cost of the trade $\r'$, converting cash to $\ones$.
  Using the $\ones$-invariant property of $C$, we can see that $C(\q+\r) = C(\q)$.
  While it is not customary for predictions to deal with the net trade, in our setting net trades simplify various results, and more readily connect to the decentralized finance literature.
\end{remark*}

\subsection{Scoring rules} \label{sec:scoring-rules}

Scoring rules were introduced by \citet{brier1950verification} to score forecasts of a random variable such as $Y$ above.
In this setting, we seek to design a score $S(\p,y)$ that determines the quality of prediction $\p \in \Delta_\Y$ upon observing the outcome $y\in\Y$, with the property that $\E_\p S(\hat{\p},Y)$ is maximized at $\hat{\p} = \p$.
The full characterization of such ``proper'' scoring rules takes the form
\begin{align*}
    S_G(\p,y) = G(\p) + \inprod{\dG_\p}{\delta_y - \p}~
\end{align*}
where $G:\Delta_\Y\to\reals$ is a convex function~\citep{gneiting2007strictly}.
When $\Y = \{0,1\}$, we can write $p\in[0,1]$ to be the predicted probability that $Y=1$, and write $S_g(p,y) = g(p) + g'(p)(y-p)$ for $g:[0,1]\to\reals$ convex.

\citet{hanson2003combinatorial} showed that scoring rules could be used to design AMMs in a form we call a \emph{scoring rule market}; see Protocol~\ref{alg:general-n-asset-dual-price}.
The basic idea is to pay traders according to a difference of scoring rules, with the latest trader's prediction acting as the current market price.
It was later shown that this formulation is equivalent in a strong sense to the cost function market makers described above~\citep{abernethy2013efficient}.
Specifically, the scoring rule market for $S_G$ is equivalent to the cost function market maker with cost function $C = G^*$, the convex conjugate of $G$.

A corollary of these connections, leveraged in~\citet{frongillo2023axiomatic}, is that one can use scoring rules as vectors to convert between the market price vector and the current liability/reserve vector.
Specifically, let $S_G(\p,\cdot) = (S_G(\p,y))_{y\in\Y} \in (\reals\cup\{\infty\})^n$ be the scoring rule vector for price $\p$.
Then, up to a uniform shift, the liability vector of a cost function market maker with cost function $C = G^*$ at price $\p$ is simply $S_G(\p,\cdot)$.
We will use this correspondence throughout the paper, as well as the 2-outcome version
$S_g(p,\cdot) = (g'(p),0) + (g(p)-p\cdot g'(p))\ones \in (\reals\cup\{\infty\})^2$.

\subsection{Technical definitions} \label{sec:techn-cond-lemm}

We give some technical definitions relating to \S~\ref{sec:predmarketscostfuncs} and \S~\ref{sec:scoring-rules} that we use in later sections.
\begin{definition}[Smoothness of $G$]
  We say a convex function $G:\Delta_n \to \reals$ is \emph{smooth} if its 1-homogeneous extension $\overline G : \reals^n_{\geq 0}$ is differentiable.
\end{definition}

The key class of the functions $G$ we restrict to is as follows.
\begin{definition}[Generating function]
  We say $G:\Delta_n \to \reals$ is a \emph{generating function} if it is
  convex, smooth, and bounded on $\Delta_n$.
  
\end{definition}
\begin{definition}[Pseudobarrier, \citet{abernethy2013efficient}]
  A generating function $G$ is a \emph{pseudobarrier} if for any sequence $\{\p^j\in\relint\Delta_n\}_j$ converging to the relative boundary of $\Delta_n$, and $\{\q^j \in \partial G(\p^j)\}_j$, then $\|\q^j\| \to \infty$.
\end{definition}
A common example of a pseudobarrier\footnote{This term was coined in \citet{abernethy2013efficient} and used similarly to our setting: ensuring that the market price remains in the relative interior of the simplex.} is (negative) Shannon entropy $G(\p) = \sum_y p_y \log p_y$.
Another is the dual of the constant product market maker $G(\p) = -n(\prod_y p_y)^{1/n}$.\footnote{To see that this dual is correct, one can observe that it is 1-homogeneous, and thus $S(\p,y) = -(\prod_{y'} p_{y'})^{1-1/n} (\prod_{y'\neq y} p_y)$.  Now letting $\res = -S(\p,\cdot)$ be the corresponding reserve vector, and computing the product, we have $\prod_y q_y = \prod_y (\prod_{y'} p_{y'})^{1-1/n} (\prod_{y'\neq y} p_y) = 1$.}

We now give the sets of cost and generating functions used in the general protocols.
Let $\G_n$ be the set of nonpositive generating functions $G:\Delta_n\to\reals_{\leq 0}$, and $\G_n^* \subseteq \G_n$ those which are pseudobarriers.\footnote{Given any bounded generating function $G$, to obtain the optimal liability, we can simply replace it by the function $\p \mapsto G(\p) - \inprod{\p}{\q}$ where $q_i = G(\indvec{i})$.}
Let $\C_n$ and $\C_n^*$ be the sets of conjugates of $\G_n$ and $\G_n^*$, respectively.

\subsection{Automated market makers and liquidity provisioning in decentralized finance}\label{sec:amms}

A major AMM innovation in decentralized finance is the introduction of \emph{liquidity provisioning}.
In traditional AMMs, the market maker takes on an additional risk of price fluctuations of the reserves for the ability to run a market always willing to price a bundle of assets.
The decentralized finance implementations of AMMs, though, have outsourced provisioning these reserves, and hence liquidity, to external parties called liquidity providers (LPs).
The AMMs typically define trade dynamics when liquidity is fixed.
In this setting, traders can exchange assets with the market maker in a way that keeps the reserves/liability on the same invariant curve of $\varphi$ or $C$.
Decentralized finance protocols like Uniswap V2 and Uniswap V3 also allow liquidity providers (LPs) to change the market's liquidity while keeping the price $\p$ invariant \citep{adams2020whitepaperv2,adams2021whitepaperv3}.
LPs may either add, or \textit{mint}, liquidity to the market or remove, or \textit{burn}, liquidity from the market. 
LPs make it easier for the AMM to conduct trades by absolving the market maker of the risk of providing liquidity.
As compensation for taking on the risk, LPs are rewarded using trading fees, which are skimmed off along with the trade requested.
These fees form a pool to be distributed proportionally to LPs as the liquidity they allocated is used.

Notably, in decentralized finance, the goal is not information elicitation, as in prediction markets, but rather the exchange of assets---namely cryptocurrencies.
Analogous to how cost function markets maintain a liability vector $\q$, AMMs maintain a reserve vector $\x=-\q$ of assets.
Constant function market makers (CFMMs) are a special type of AMM.
For various restrictions on their design, \citet{frongillo2023axiomatic,angeris2020improved,schlegel2022axioms} argue that CFMMs satisfy desirable market making axioms.
We define these market makers below.
\begin{definition}[Constant function market maker, CFMM]
  A constant function market maker (CFMM) is a market maker based on a potential function $\varphi:\mathbb{R}^n\to\mathbb{R}$ that maintains a liability $\q\in\mathbb{R}^n$.
  At the current liability, the set of trades $\r$ available are those that satisfy $\varphi(\q+\r)=\varphi(\q)$.
  After a trade, the liability vector updates to $\q\gets\q+\r$.
\end{definition}
Consistent with this definition, in this paper, we adopt the sign convention that trades are always oriented toward the trader.
For example, a trade $\r = (1,-3)$ corresponds to giving the trader 1 unit of asset 1 in exchange for 3 units of asset 2

The relationship between cost function prediction market makers and CFMMs is thoroughly explored in \citet{frongillo2023axiomatic}. 
The two objects are different, but give rise to equivalent characterizations of markets.
The classic cost function market makers commonly employed in prediction markets are special cases of CFMMs that retain the full flexibility of general potential functions $\varphi$.
We use cost functions throughout, even when discussing CFMMs, as they have more mathematical structure without loss of expressiveness.
For example, while for any potential $\varphi$, one can take ratios of partial derivatives to compute relative prices, this task is even easier for cost functions, as prices are normalized.
In the context of prediction markets, normalization means that prices $\p \in \partial C(\q)$ can be thought of as a probability distribution over outcomes.
With or without this interpretation, we frequently use the fact that $\partial C(\q) \subseteq \Delta_n$ for all $\q\in\reals^n$.

\section{Liquidity provisioning protocol for prediction markets}
\label{sec:protocol-pred-markets}

In this section, we first give intuition in \S~\ref{subsec:competing-markets} for why liquidity provisioning can be thought of as running ``parallel'' markets.
We detail our general, cost function-based, prediction market liquidity provisioning protocol in \S~\ref{sec:protocol_main}.
Then, in \S~\ref{sec:protocol_msr} we provide a scoring rule-based protocol.
We show both protocols to be equivalent in \S~\ref{sec:equivalence}, and we also show how these protocols lead to several equivalent ways of thinking about LPs running markets in parallel.
These intuitive interpretations not only enhance our understanding of how liquidity provisioning can be implemented in prediction markets, but also justify the framework we propose.
We end this section by discussing some constraints on the design of our protocol's market-making functions in \S~\ref{sec:no_liability} and \S~\ref{sec:optimal_deposits}.

\subsection{Liquidity provisioning as competing market makers}\label{subsec:competing-markets}

In traditional financial markets, such as continuous double-auctions, a market maker is an entity that offers both to buy and sell an asset.
Typically the buy price is lower than the sell price; the difference comprises the \emph{bid-ask spread}.
Market makers earn a profit equal to the bid-ask spread whenever both buy and sell orders are executed, while remaining even with respect to the asset.
In essence, market making is all about providing liquidity for a small premium, or ``fee,'' as given by the spread.
In these traditional markets, liquidity provisioning happens naturally, as often multiple market makers coexist.
Rational traders will only buy or sell from the most favorable price offered, switching at will between different market makers.

The key idea behind our protocol, therefore, is to implement liquidity provisioning in the same manner, with multiple coexisting automated market makers.
That is, we seek a protocol that implements an LP as simply another ``competing'' market maker, and we let traders interact with them all at once, i.e., in parallel.
How could one implement such a protocol?

We will show that there are several equivalent ways to imagine this transaction proceeding.
First, a trader could select a valid trade $\r^i$ for the automated market maker of each LP $i$, resulting in a net trade $\r = \sum_i \r^i$.
As detailed in \S~\ref{sec:amms}, these valid trades can be expressed as those satisfying $C_i(\q^i+\r^i) = C_i(\q^i)$ given a convex cost function $C_i$ and current liability vector $\q^i$ for LP $i$.
Second, a trader could execute a trade in continuous time, at each moment trading with the LP offering the most favorable price, eventually stopping and yielding a net trade $\r$.
Perhaps surprisingly, by fundamental results in convex analysis, these two approaches are identical.
Taken together, we can see that any Pareto-optimal trade leaves the combined market in a coherent state, with the price of each LP matching the global market price.

At first glance, it might appear that a major downside of our approach is the need for traders to interact directly with each LP, increasing the complexity of interaction required.
Fortunately, one can simplify the interface: there always exists a single aggregate cost function $C$ that captures the available net trades.
Specifically, given the cost functions $C_i$ defining each market maker, the valid trades in the combined market are exactly those of their \emph{infimal convolution} $C = \bigwedge_i C_i$.
Thus, the trader can simply choose any trade satisfying $C(\q+\r)=C(\q)$, and behind the scenes, the split $\r = \sum_i\r^i$ can be computed along with the corresponding fees.
This aggregation yields a third equivalent notion of competing market makers.

Equivalently, we may consider the scoring rule market (SRM) formulation of cost functions.
As mentioned in \S~\ref{sec:scoring-rules}, a cost function market given by a convex cost function $C$ is equivalent to the scoring rule market given by its convex conjugate $G=C^*$.
We show that our notion of LPs as parallel market makers extends quite elegantly to SRMs.
The combined market simply uses the sum of the scoring rules $S_{G_i}$ of all the component LPs, or equivalently uses a single scoring rule $S_G$ generated by the sum of the generating functions $G = \sum_i G_i$.
See \S~\ref{sec:protocol_msr} for the full details of the scoring rule version of our protocol.

In \S~\ref{sec:liquidity_discussion}, we discuss how liquidity can be expressed as a matrix-valued function, either by  $(\nabla^2 C)^+$, or dually, $\nabla^2 \overline G$ where $G = C^*$.
As a consequence, the total liquidity of the combined market is $\nabla^2 \overline G = \sum_i \nabla^2 \overline G_i$.
In other words, just as one would hope, adding another parallel market maker simply adds the corresponding liquidity $\nabla^2 \overline G_i$ to the pool.

\begin{remark*}
  Many existing protocols require LPs to simply deposit tokens to provide liquidity. 
  At first glance, it might appear that requiring LPs to provide cost functions $C_i$ is more complex.
  Our Protocol~\ref{alg:general-n-asset-cost}, and its equivalent protocols, clarifies that even in existing protocols that just ask LPs to deposit tokens, the LPs are implicitly specifying cost functions.
  Hence, we may recover existing protocols as special cases by specifying a restricted set $\CLP$ of cost functions that LPs may (explicitly or more commonly, implicitly) provide.
  For instance, in Protocol~\ref{alg:V2-two-asset}, we recover Uniswap V2, where LPs simply deposit some numerical quantity of ``liquidity'' that is spread evenly across the entire price space.
  Under the hood, LPs are specifying scaled copies of the ``base shape'' of Uniswap's cost function.
  In the more expressive Uniswap V3 which we recover in Protocol~\ref{alg:V3-two-asset}, LPs deposit tokens determined by a numerical quantity of ``liquidity'' in discrete price ranges called buckets, and again they are implicitly specifying cost functions.
\end{remark*}

\subsection{Cost function-based general protocol} \label{sec:protocol_main}

\begin{algorithm}[t]
  \caption{General protocol as parallel market makers}
  
  \label{alg:general-n-asset-cost}
  \begin{algorithmic}[1]
    \State \textbf{global constant} $\Cinit$, $\CLP$, $\fee(),\fee_i()\in\reals_{\geq 0}$.
    \State \textbf{global variables} $k\in\N$, $\{\q^i\in\reals^n\}_{i=0}^k$, $\{C_i\in\CLP\}_{i=0}^k$
    
    \State $\liabilityf(C) := \q\in\reals^n$ s.t.\ $\partial C_0(\q^0) \cap \partial C(\q)\neq\emptyset$ and $C(\q) = 0$ \Comment{Price matching, no-liability}
    \label{item:general-n-asset-cost-liabilityf}

    \medskip
    
    \Function{Initialize}{$\q \in \reals^n,C\in\Cinit$}
    \State $(k,\q^0,C_0) \gets (0,\q,C)$
    \State \textbf{check} $\q^0 = \liabilityf(C_0)$ s.t. no-liability and optimal deposits are satisfied.
    \EndFunction
    \medskip

    \Function{RegisterLP}{$i=k+1$}
    \State $(k,\q^i,C_i) \gets (k+1,0,\max)$
    \EndFunction
    \medskip

    \Function{ModifyLiquidity}{$i \in \N, C \in \CLP$}
    \State \textbf{request} $\r^i = \q^i - \liabilityf(C)$ from LP $i$ s.t. no-liability and optimal deposits are satisfied.
    \State $(\q^i,C_i) \gets (\q^i - \r^i,C)$
    \EndFunction
    \medskip
    
    \Function{ExecuteTrade}{$\r \in \reals^n$}

    \State $\q \gets \sum_{i=0}^k \q^i$
    \State \textbf{check} $C(\q + \r) = C(\q)$ where $C = \bigwedge_{i=0}^k C_i$
    \label{item:general-n-asset-cost-trade-check}
    \State \textbf{trader pays} $\fee(\r,\q)$ cash in fees
    \State \textbf{give} $\r$ to trader
    \State write $\r = \sum_{i=0}^k \r^i$ s.t.\ $\forall i$, $C_i(\q^i+\r^i) = C_i(\q^i)$
    
    \label{item:general-n-asset-cost-trade-split}
    \For{each LP $i$}
    \State LP $i$ gets $\fee_i(\r,\q)$ fees
    \State $\q^i \gets \q^i + \r^i$
    \EndFor
    \EndFunction
  \end{algorithmic}
\end{algorithm}

Our proposed protocol for liquidity provisioning in prediction markets is described in Protocol~\ref{alg:general-n-asset-cost}.
At a high level, the protocol works as follows.
Let $n$ be the number of securities.
The market creator acts as the initial LP, giving reserves $\q^0$ to the liquidity pool and specifying the initial cost function $C_0$.
When an additional LP $i$ enters, their liability vector and cost function are initialized to the trivial values $\q^i=\0$ and $C_i(\q)=\max(\q) := \max_j q_j$, so that they initially provide no liquidity.\footnote{To see why this choice is correct, note that the ``bid-ask spread'' of $C_i$ at $\q^i=0$ is maximal; every price vector in $\Delta_n$ is consistent, and any trade occurs at the worst feasible price.  More technically, adding the max function to the infimal convolution $C= \bigwedge_{i=0}^k C_i$ does not change the result.  Dually, the conjugate of $\max$ is the convex indicator of $\Delta_n$, so this choice adds liquidity $G_i = 0$; see \S~\ref{sec:protocol-pred-markets}.}
The \textsf{ModifyLiquidity} function handles an LP adding, removing, or otherwise altering their deposited liquidity: they simply replace their cost function with a different one, and are charged up-front the minimal deposit to ensure \emph{no liability}, i.e., that they will never owe the market maker in any future state.
We provide more discussion of the \emph{no-liability} condition in \S~\ref{sec:no_liability}.
When removing all liquidity, the LP simply sets $C_i = \max$ once again, and is given back their entire deposit.
\textsf{ExecuteTrade} checks if a trade $\r$ is an allowed trade with the overall cost function $C= \bigwedge_{i=0}^k C_i$, and if so, requests an additional fee of $\fee(\r,\q)$ cash from the trader.
Under the hood, it then finds the optimal split $\r = \sum_i \r^i$ into smaller trades, executing each with the corresponding LP and doling out $\fee_i(\r,\q)$ in fees.

Two reasonable fee choices are $\fee(\r,\q) = \beta\|\r\|$ and $\fee_i(\r,\q) = \beta\|\r\|\frac{\|\r^i\|}{\sum_j\|\r^j\|}$ for some norm $\|\cdot\|$ and $\beta>0$.
The form of $\fee_i$ here is to ensure budget balance of the fees, so that the market maker does not owe LPs more than the trader pays.
We provide a deeper discussion of the design of the fee functions in \S~\ref{sec:fees}, and prove that natural axioms are incompatible.

A key step in the protocol is the computation of liabilities, from which several questions arise.
If an LP wishes to provide liquidity using $C$, and the current price is $\p$, what do they need to deposit?
Also, does the required split $\r=\sum_i \r^i$ always exist? 
We answer these questions in the later sections.

\subsection{Scoring rule-based general protocol} \label{sec:protocol_msr}

As some readers might be more familiar with the  scoring rule markets of \citet{hanson2003combinatorial}, we give Protocol~\ref{alg:general-n-asset-dual-price} as an alternative to Protocol~\ref{alg:general-n-asset-cost}.
Here, we keep track of the market price $\p$ instead of the liability vector $\q$ and use scoring rules instead of cost functions.
We prove in Appendix~\ref{sec:equiv-full-prot} that Protocols~\ref{alg:general-n-asset-cost} and ~\ref{alg:general-n-asset-dual-price} are equivalent under mild conditions.

\begin{algorithm}
  \caption{General protocol as parallel scoring rule markets}
  \label{alg:general-n-asset-dual-price}
  \begin{algorithmic}[1]
    \State \textbf{global constant} $\Ginit$, $\GLP$, $\fee(),\fee_i()\in\reals_{\geq 0}$.
    \State \textbf{global variables} $k\in\N$, $\p\in\relint\Delta_n$, $\{G_i\in\GLP\}_{i=0}^k$

    \State $\liabilityf(G,\p) := S_G(\p,\cdot)$ \hfill where $ S_G(\p,\cdot) = \dG_\p + (G(\p) - \inprod{\dG_\p}{\p})\ones$.
    \label{item:MSR_protocol_liabilitysf}
    
    \medskip
    \Function{Initialize}{$\p \in \relint\Delta_n,G\in\Ginit$}
    \State $(k,\p,G_0) \gets (0,\p,G)$
    \State $\q^0 := -S_{G_0}(\p,\cdot)$ s.t. no-liability and optimal deposits are satisfied.
    \EndFunction
    \medskip

    \Function{RegisterLP}{$i = k+1$}
    \State $(k,G_k) \gets (k+1,0)$
    \EndFunction
    \medskip

    \Function{ModifyLiquidity}{$i \in \N, G' \in \GLP$}
    \State \textbf{request} $S_{G_i}(\p,\cdot) - S_{G'}(\p,\cdot)$ from LP $i$ s.t. no-liability and optimal deposits are satisfied.
    \State $G_i \gets G'$
    \EndFunction
    \medskip

    \Function{ExecuteTrade}{$\hat\p \in \relint\Delta_n$}

    \State $\r \gets S_G(\hat \p,\cdot) - S_G(\p,\cdot)$ where $G = \sum_i G_i$ 
    \State \textbf{pay} $\fee(\r,\p)$ cash in fees
    \State give $\r$ to trader
    \For{each LP $i$}
    \State $\r^i \gets  S_{G_i}(\hat \p,\cdot) - S_{G_i}(\p,\cdot)$.
    \State \textbf{pay} $\fee_i(\r,\p)$ in fees to LP $i$
    \State $\q^i \gets \q^i + \r^i$
    \EndFor
    \EndFunction
  \end{algorithmic}
\end{algorithm}

\subsection{Ensuring no liability}\label{sec:no_liability}

To capture the \emph{no-liability condition}, we require that each LP $i$ should never owe the market maker shares of any security.
When entering the market, LP $i$ is required to deposit some assets $\x\in\reals^n_{\geq 0}$ according to its cost function $C_i$ and the current overall liability.
We consider LP $i$'s liability to be $\q^i=-\x$ where $\q^i$ is chosen so that $C_i(\q)$ implies $\q \preceq 0$ for all possible future states $\q$.

One way to ensure this is that, if an LP provides a cost function $C$ where some level set satisfies no-liability, the protocol could compute it and request a corresponding deposit to cover the liability.
More formally, if $\{\q\in\reals^n \mid C_i(\q) = \alpha\} \subseteq \reals_{\leq 0}^n$, i.e.\ every liability vector in the $\alpha$-level set of $C_i$ is nonpositive, the protocol could compute this $\alpha$ and request a deposit $-\q$ in the $\alpha$-level set such that $\nabla C_i(\q) = \p$, the current price.
Equivalently, we may require that the LP encode the valid trades in the zero level set by submitting an offset cost function $\hat C_i := C_i - \alpha$, so that the $\alpha$-level set is shifted to have value 0.
In Protocol~\ref{alg:general-n-asset-cost}, we require LPs submit such a cost function.
Equivalently, we may ensure no-liability in Protocol \ref{alg:general-n-asset-dual-price} by requiring the generating function $G_i$ to be nonpositive everywhere on the simplex.

Recall that for two outcomes, the constant product potential $\varphi(\q) = q_1 q_2$ has an equivalent cost function \citep{chen2007utility,frongillo2023axiomatic} given by $$C(\q) = \frac 1 2 \left( q_1 + q_2 + \sqrt{4\alpha^2 + (q_1-q_2)^2} \right).$$
Taking the $0$-level set, we have
$q_1 + q_2 + \sqrt{4\alpha^2 + (q_1-q_2)^2} = 0$ which implies $q_1 + q_2 < 0$.
Then $(q_1 + q_2)^2 = 4\alpha^2 + (q_1-q_2)^2$, which reduces to $q_1 q_2 = \alpha^2$.
By the first observation, we must have $\q \prec \0$, giving no liability.
While one could take $\hat C(\q) = C(\q) + 1$ to arrive at the same liquidity level, one can check that $C$ gives rise to the minimal deposit required for that level.

\subsection{Optimal deposits} \label{sec:optimal_deposits}

While the no-liability condition requires that an LP's liability always be nonnegative, the optimal deposits condition requires that all the assets deposited by some LP may be used in some trade, and therefore none of it is ``wasted.''
To satisfy optimal deposits, the LP must deposit a liability $\q$ such that for every asset $i$ and any $\epsilon>0$, there exists a valid trade $\r$ such that the resulting liability $\hat{\q}=\q+\r$ contains less than $\epsilon$ securities\footnote{written as $\hat{\q}_i>\epsilon$ for any $\epsilon<0$, as liabilities are negative.} of asset $i$
We can characterize optimal deposits in the SRM protocol (Protocol~\ref{alg:general-n-asset-dual-price}) by requiring that the generating function $G$ approaches 0 at the vertices of the simplex.
\begin{proposition}
Let $G:\Delta_n\to\mathbb{R}$ be a convex generating function where $\lim_{j\to\infty}G(\p^j)=0$ for any Cauchy sequence $(\p^j)_{j=0}^\infty$ of prices where $\lim_{j\to\infty}$ $\p^j=\hat{\p}$ where $\hat{\p}$ is the unique price with $\hat{\p}_i=1$.
Given any price $\p$ and corresponding liability vector $\q=S_G(\p,\cdot)$, for any $\epsilon<0$, there exists a legal trade $\r$ at liability $\q$ such that $q_i+r_i>\epsilon$.
\end{proposition}
\begin{proof}
Given an initial liability $\q$, legal trade $\r$, and resulting liability $\hat{\q}$, recall from Protocol~\ref{alg:general-n-asset-dual-price} that
\(
\hat{\q}=S_G(\hat{\p},\cdot) = \dG_{\hat{\p}} + (G(\hat{\p}) - \inprod{\dG_{\hat{\p}}}{\hat{\p}})\ones.
\)
The component $\hat{q}_i$ of $\hat{\q}$ corresponding to security $i$ is therefore
\(
S_G(\hat{\p},y_i)=\dG_{\hat{p}_i}+G(\p)-\dG_{\hat{p}_i}=G(\hat{\p})
,\)
where $y_i$ is the outcome corresponding to asset $i$ and $\dG_{\hat{p}_i}$ is the $i^{th}$ component of some subgradient $\dG_{\hat{\p}}$ of $G$ at $\hat{\p}$.
Now, let $\hat{\p}$ be the unique price such that $\hat{p}_i=1$. Let $(\p^j)_{j=0}^\infty$ be some Cauchy\footnote{with respect to a norm on the price space.} sequence of prices approaching $\hat{\p}$, and $(\q^j)$ be the corresponding sequence of liability vectors.
We have that
\[
\lim_{j\to\infty} \q^j = \lim_{j\to\infty}\ S_G(\p^j,y_i) = \lim_{j\to\infty}\dG_{p^j_i}+G(\p^j)-\dG_{p^j_i} = \lim_{j\to\infty} G(\p^j)=0.
\]
As the limit approaches 0 from below, for any $\epsilon<0$, there exists some $\p^j$ such that $q^j_i>\epsilon$.
\end{proof}

\subsection{Example of equivalent protocols in action} \label{sec:example_run}
We construct an example of both equivalent protocols in action that makes use of the Logarithmic Market Scoring Rule (LMSR), popularized by \citet{hanson2003combinatorial}.
We consider a setting with three assets and two LPs.
While we focus on the SRM version of our protocol for the sake of easy and clean computation, it is simple to derive the exchange of securities and change in liabilities from price movement simply by computing the value scoring rule $S_G(\p,\cdot)$.
We thus interleave both interpretations at once.
Consider the following generating functions:
\begin{align*}
& G_1(\p) = p_1\log(p_1)+p_2\log(p_2)-(p_1+p_2)\log(p_1+p_2), \\
& G_2(\p) = p_2\log(p_2) + p_3\log(p_3)-(p_2+p_3)\log(p_2+p_3).
\end{align*}
Both functions are 1-homogeneous versions of the LMSR defined on two assets.
Their corresponding scoring rules are
\begin{align*}
& S_{G_1}(\p,\cdot) = \nabla G_1 = \left(\log\left(\frac{p_1}{p_1+p_2}\right),\log\left(\frac{p_2}{p_1+p_2}\right),0\right), \\
& S_{G_2}(\p,\cdot) = \nabla G_2 = \left(0, \log\left(\frac{p_2}{p_2+p_3}\right),\log\left(\frac{p_3}{p_2+p_3}\right)\right).
\end{align*}

Suppose we start at an initial price vector of $\p=(1/3,1/3,1/3)$, and trade to a price of $\hat{\p}=(1/7,2/7,4/7).$
Both LPs must make an initial deposit equal to $S_{G_i}(\p,\cdot)$:
\begin{align*}
\q^1 = \nabla G_1(\p) = (\log(1/2),\log(1/2),0), \quad
 \q^2 = \nabla G_2(\p) = (0,\log(1/2),\log(1/2)).
\end{align*}
For each LP $i$, we calculate the trade $\r^i$ as $S_{G_i}(\hat{\p},\cdot)-S_{G_i}(\p,\cdot)$, and then set a new liability for LP$_i$ of $\hat{\q}^i=\q^i+\r^i$.
Equivalently, we can calculate $\hat{\q}^i=S_{G_i}(\hat{\p},\cdot)$, and then calculate $\r^i$ as the difference in liabilities $\hat{\q}^i-\q^i$.
At the new price of $\hat{\p}=(1/7,2/7,4/7)$, the respective liabilities are
\begin{align*}
\hat{\q}^1 = \nabla G_1(\hat{\p}) = (\log(1/3),\log(2/3),0), \quad \hat{\q}^2 = \nabla G_2(\hat{\p}) = (0, \log(1/3),\log(2/3)).
\end{align*}
And the net trades $\r^1,\r^2$ are therefore:
\begin{align*}
& \r^1 = \hat{\q}^1 - \q^1 =  (\log(1/3),\log(2/3),0)-(\log(1/2),\log(1/2),0) = (\log(2/3),\log(4/3),0), \\
& \r^2 = \hat{\q}^2 - \q^2 = (0, \log(1/3),\log(2/3)) - (0,\log(1/2),\log(1/2)) = (0, \log(2/3),\log(4/3)).
\end{align*}
With some fee $\fee_i(\r^i,\p)$ assessed for each LP $i$.

Note that these generating functions exhibit an intuitive property: they make a deposit of 0 and facilitate a trade of 0 in the asset they are not parameterized by.
The same holds for 1-homogeneous generating functions in general.
When we discuss matrix-valued liquidity in \S~\ref{sec:liquidity-hessian}, we will note that these functions also do not provide any liquidity for trading in those assets.

Also observe that this LP configuration is exactly equivalent to having a single LP with the generating function $G=G_1+G_2$.
For the same initial and final prices, $S_G(\p,\cdot)=S_{G_1}(\p,\cdot)+S_{G_2}(\p,\cdot)$, and accordingly the LP's initial liability will be $\q=\q^1+\q^2$, its final liability will be $\hat{\q} = \hat{\q}^1+\hat{\q}^2$, and the net trade will be $\r=\r^1+\r^2$.

We now consider the convex conjugates $C_1=G^*_1$ and $C_2=G^*_2$, and show that the cost function protocol yields an equivalent result to the scoring rule protocol. We have
\begin{align*}
C_1(\q) = \max\{\log(e^{q_1}+e^{q_2},q_3)\},\quad 
C_2(\q) = \max\{\log(e^{q_2}+e^{q_3},q_1)\}.
\end{align*}
While one might expect $C_1$ and $C_2$ to be parameterized by the same 2 assets as their conjugates, the maximization term has a natural interpretation---purchasing more than a small amount of an asset for which an LP provides no liquidity will cause its price to reach 1.
We can verify that the trades described above comport with the cost function protocol.
Observe that
\begin{align*}
C_1(\q^1+\r^1)-C_1(\q^1)&=C_1(\hat{\q}^1)-C_1(\q^1)=\log(e^{\log(1/3)}+e^{\log(2/3)})-\log(e^{\log(1/2)}+e^{\log(1/2)}) \\
&=\log(1/3+2/3)-\log(1/2+1/2)=0.
\end{align*}
As $C(\q^1+\r^1)-C(\q^1)=0$, $\r^1$ is a legal trade at liability $\q^1$ Both $C_2$ and the infimal convolution $C=C_1 \wedge C_2=(G_1+G_2)^*$ can be checked similarly.

\section{Equivalence of interpretations}\label{sec:equivalence}

As we have argued informally, one can regard liquidity provisioning as (a) recruiting multiple market makers, which then (b) process trades in parallel.
We now study (b) formally, showing that five natural ways to interpret this parallelism are all equivalent.
The equivalence of (a) in these interpretations, the process of recruiting market makers and securing deposits, is then straightforward (\S~\ref{sec:equiv-full-prot}).

\subsection{Five interpretations of parallel market making}\label{sec:interpretations}

To begin, we revisit the interpretation of liquidity provisioning as running several market makers in parallel, and show that five natural interpretations of this idea lead to equivalent protocols.

In interpretations 1, 2, and 5, each market maker $i$ is specified by a cost function $C_i$ and a state $\q^i$.
In interpretations 3 and 4, market makers instead each have a scoring rule $S_i$ generated by a convex function $G_i=C_i^*$ and maintain a price $\p^i$.
In all cases, we assume the trader behaves rationally, in the sense that the overall trade is Pareto optimal: if $\r,\r'$ are both valid trades, and $\r \succeq \r'$, the trader chooses $\r$.
Recall that trades are oriented toward the trader, so here $\r$ gives the trader weakly more of each security.

In each interpretation, we capture the market state by the collection of liability vectors $\{\q^i \in \reals^n\}_i$.
After a trade $\r = \sum_i \r^i$, the state updates to $\{\q^i + \r^i\}_i$.
The set of consistent market prices is defined to be $\partial C_i(\q^i)$ for interpretations 1, 2, and 5, and an analogous definition for 3 and 4.
We say the overall market state is \emph{coherent} if there is a consistent price $\p\in\relint\Delta_n$ for all market makers simultaneously.

\begin{enumerate}
\item \textbf{The trader selects a valid trade from each market maker's cost function and executes them all.}
  Formally, for each market maker $i$ the trader selects $\r^i$ such that $C_i(\q^i+\r^i) = C_i(\q^i)$, for a total trade of $\r = \sum_i \r^i$.
\item \textbf{A centralized market maker uses the infimal convolution of cost functions.}
  This interpretation corresponds to the rules for trade in Protocol~\ref{alg:general-n-asset-cost}.
  Formally, the trader selects any $\r\in\reals^n$ such that $C(\q + \r) = C(\q)$, where $C = \bigwedge_i C_i$ and $\q = \sum_i \q^i$.
  The central market maker first gives $\r$ to the trader.
  Behind the scenes, it then computes a split $\r = \sum_i \r^i$ such that $C_i(\q^i + \r^i) = C_i(\q^i)$, whose existence we establish below, and executes these trades in each constituent market maker.
\item \textbf{The trader is paid according to the sum of each market maker's scoring rule.}
  Formally, each market maker has a scoring rule $S_i(\p,y) = G_i(\p) + \inprod{\dG_\p}{\indvec{y} - \p}$ where $G_i = C_i^*$ and $\{\vec d G_\p\in\partial G(\p) \mid \p\in\relint\Delta_n\}$ is an arbitrary selection of subgradients.
  Market maker $i$ maintains a price vector $\p^i \in \relint\Delta_n$, and the trader may choose any $\hat\p^i \in \relint\Delta_n$, resulting in the trade $\r^i = S_i(\hat\p^i,\cdot) - S_i(\p^i,\cdot) \in \reals^n$.
  See Protocol~\ref{alg:general-n-asset-dual-price}.
  For the purposes of comparing interpretations, 
  define the set of consistent prices as $\{\p \in \relint\Delta_n \mid \forall i\, S_i(\p,\cdot) = S_i(\p^i,\cdot)\}$.
\item 
  \textbf{A centralized market maker chooses the generating function $G=\sum_{i=1}^k G_i$ and the corresponding scoring rule $S_G=\sum_{i=1}^k S_{G_i}$.}
  The market maker maintains a price vector $\p \in \relint\Delta_n$, and the trader may choose any $\hat{\p} \in \relint\Delta_n$, resulting in the trade $\r = S_G(\hat{\p},\cdot) - S_G(\p,\cdot) \in \reals^n$.
\item \textbf{The trader continuously trades at the most favorable price and at some point stops.}
  Recall that we can interpret a cost function $C_i$ as quoting a cost $C_i(\q^i + \v^i) - C_i(\q^i)$ for each bundle of assets/securities $\v^i\in\reals^n$.
  Formally, in this interpretation, the trader specifies a direction $\v\in\reals^n$, and a stopping point $\alpha$, and continuously purchases $\v dt$ for the smallest price $C'_i(\q^i;\v)$ over all $i$, for $\alpha$ units of time.
  Here $C'_i(\q^i;\v) := \lim_{h\to 0^+} \frac{C_i(\q^i+h\v) - C_i(\q^i)}{h}$ is the directional derivative of $C_i$.
  In other words, the trades are of the form $(\v - C_i'(\q^i;\v)\ones)dt$, where we recall that the numeraire is simply $\ones$.
  Crucially, we also allow the trader to take advantage of any arbitrage opportunity that arises from this continuous trade: after the $\alpha$ units of time, the trader may place trades $\{\hat \r^i\}_i$ if they have negative net cost and $\sum_i \hat\r^i = \0$.
\end{enumerate}

\subsection{Equivalence of the interpretations}
\label{sec:equiv-interpr}

Before stating the equivalence of these interpretations, we must address an important technical point.
Recall the definitions in \S~\ref{sec:techn-cond-lemm}.
By our assumptions on $G$, the resulting scoring rule vectors are unique for each price $\p\in\relint\Delta_n$; see Lemma~\ref{lem:cost-scoring-duality} in \S~\ref{sec:app-proofs-of-equivalence}.
But, on the relative boundary of the simplex $\Delta_n$, uniqueness can fail.
If we take $G(\p) = \|\p\|_2^2$, or any LP that does not provide infinite liquidity at the boundary of $\Delta_n$, the resulting $G$ will not have a unique subgradient (even modulo $\ones$) at those boundary points, meaning we will have $\q,\hat{\q} \in \reals^n$ with $\p \in \partial C(\q)\cap\partial C(\hat{\q})$, but with $\hat{\q} - \q \neq \alpha\ones$ for any $\alpha$.
The scoring rule $S$ must pick just one of these vectors, meaning the scoring rule market (Interpretation 4) will be strictly less expressive than the others.
Somewhat conversely, consider $G$ to be negative Shannon entropy, which gives rise to the log scoring rule $S(\p,y) = \log p_y$.
Here the liquidity does become infinite on the boundary, and consequently the scoring rule vectors have infinite entries.
These vectors cannot be captured by any $\q\in\reals^n$, only in the limit.

For these two reasons, we restrict to $\relint\Delta_n$ in Lemma~\ref{lem:cost-scoring-duality}.
The first issue is somewhat surmountable, however: if one defines $\dG_\p$ to be the most favorable $\q$ (modulo $\ones$) tangent to $G$ at a boundary point $\p$, then all of the Pareto optimal trades will still be available to the scoring rule market trader.
Indeed, the only trades missing are those at the maximum possible price, which is Pareto-suboptimal for the trader anyway---consider a two outcome example when the price of the first security is 1, the maximum possible, so that purchasing the first security at this price is weakly worse than simply refraining from trade.
Thus, while in Theorem~\ref{thm:equivalence-interpretations} we assume there is a ``log-like'' LP, typically the market creator, which keeps the price away from the boundary, in principle one could generalize this statement using the ideas above to the case where liquidity runs out.

\begin{theorem}\label{thm:equivalence-interpretations}
  Let $C_i = G_i^*$ for generating functions $G_i$, where at least one $G_i$ is a pseudobarrier.
  Then the interpretations 1-5 above are equivalent in the following sense:
  given a coherent market state, the set of valid trades is identical, and the resulting market state is coherent.
\end{theorem}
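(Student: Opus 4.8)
The plan is to take interpretation~3 (the centralized infimal-convolution market) as canonical and show that each of interpretations~1, 2, and~4 is equivalent to it. The engine is the duality identity $(\sum_i G_i)^* = \bigwedge_i G_i^* = \bigwedge_i C_i$, which holds here because all the $G_i$ share the domain $\Delta_n$, so the Moreau--Rockafellar hypotheses are met, the infimal convolution is \emph{exact} (its infimum is attained, since each $C_i$ is coercive modulo $\ones$), and $C := (\sum_i G_i)^* = \bigwedge_i C_i$ is closed. Writing $G := \sum_i G_i$, $\overline G = \sum_i \overline{G_i}$, and $\nabla\overline G = \sum_i \nabla\overline{G_i}$, the first step I would carry out is the structural lemma:

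\medskip
\noindent\emph{A state $\{\q^i\}$ with $C_i(\q^i)=0$ for all $i$ is coherent iff $\bigcap_i \partial C_i(\q^i)\neq\emptyset$ iff there is $\p\in\relint\Delta_n$ with $\q^i = \nabla\overline{G_i}(\p)$ for every $i$; and then $\q := \sum_i\q^i = \nabla\overline G(\p)$ with $C(\q)=0$ and $\p\in\partial C(\q)$.}
\medskip

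\noindent For this I would combine: (i) Fenchel's identity plus $1$-homogeneity of $\overline{G_i}$, which give $C_i(\nabla\overline{G_i}(\p))=0$ and $\p\in\partial C_i(\nabla\overline{G_i}(\p))$; (ii) Lemma~\ref{lem:cost-scoring-duality}, which says the vector $\q$ with $\p\in\partial C_i(\q)$ is unique modulo $\ones$ for $\p\in\relint\Delta_n$, so the constraint $C_i(\q^i)=0$ pins it to exactly $\nabla\overline{G_i}(\p)$; and (iii) the pseudobarrier hypothesis on some $G_j$, which implies $G_j$ has no finite subgradient on the relative boundary of $\Delta_n$, hence $\partial C_j(\cdot)\subseteq\relint\Delta_n$, forcing every consistent price into $\relint\Delta_n$. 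Exactness then gives $C(\q)=\sum_iC_i(\q^i)=0$ precisely when the decomposition admits a common subgradient, with $\partial C(\q)=\bigcap_i\partial C_i(\q^i)$ in that case.

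With this, interpretation~3 is immediate: from a coherent state at price $\p$, a valid trade $\r$ satisfies $C(\q+\r)=C(\q)=0$; since $\dom C=\reals^n$, the point $\q+\r$ has a subgradient $\p'\in\partial C(\q+\r)\subseteq\Delta_n$, which lies in $\relint\Delta_n$ by the pseudobarrier, so $\q+\r=\nabla\overline G(\p')$ is again coherent, with split $\r^i=\nabla\overline{G_i}(\p')-\nabla\overline{G_i}(\p)$; a one-line convexity argument (using $\p'\succ\0$) shows no such trade is Pareto-dominated, so the valid-trade set of interpretation~3 is exactly $\{\nabla\overline G(\p')-\nabla\overline G(\p):\p'\in\relint\Delta_n\}$ and the rationality assumption is vacuous here. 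Interpretation~4 reduces to this after translating through $\q^i=S_i(\p^i,\cdot)=\nabla\overline{G_i}(\p^i)$: a Pareto-optimal choice must take all $\hat\p^i$ equal, since distinct $\hat\p^i\neq\hat\p^j$ admit a first-order arbitrage on an $\ones$-neutral bundle $\b$ with $\inprod{\hat\p^i}{\b}\neq\inprod{\hat\p^j}{\b}$, yielding a net gain of a positive multiple of $\ones$ and contradicting Pareto-optimality; hence $\r^i=\nabla\overline{G_i}(\hat\p)-\nabla\overline{G_i}(\p)$, matching interpretation~3. The same arbitrage idea handles interpretation~1: any valid total trade $\r=\sum_i\r^i$ with each $C_i(\q^i+\r^i)=0$ gives $C(\q+\r)\le\sum_iC_i(\q^i+\r^i)=0$, with equality iff the post-trade states share a common price, and Pareto-optimality rules out strict inequality (a strict gap means either the mismatched prices create arbitrage, or $C(\q+\r)<0$ lets the trader pocket a free positive $\ones$), so again $\r^i=\nabla\overline{G_i}(\p')-\nabla\overline{G_i}(\p)$.

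The main obstacle is interpretation~2, the continuous ``buy at the cheapest quote, then arbitrage'' dynamics, because it requires making the limiting flow precise. The plan: for a fixed direction $\v$, the instantaneous trade $(\v-C_{i^\star}'(\q^{i^\star};\v)\ones)\,dt$ with the currently cheapest market maker $i^\star=\argmin_iC_i'(\q^i;\v)$ has zero directional derivative for $C_{i^\star}$ (by sublinearity of directional derivatives and $\ones$-invariance), so the flow stays on every level set $\{C_i=0\}$; convexity makes $C_i'(\cdot;\v)$ nondecreasing along the flow, so always picking the minimum is a water-filling process that, after $\alpha$ units of time, equalizes the directional marginal cost across the market makers it touches. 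Integrating the per-instant trades, the net displacement of $\sum_i\q^i$ over the process is $\alpha\v-\mu\ones$ for a scalar $\mu$, and the terminal arbitrage step---placing $\ones$-neutral, negative-cost trades until no price discrepancy remains---drives the state to the unique coherent state with the given liability sum, namely $\nabla\overline G(\p')$ with $\nabla\overline G(\p')=\q+\alpha\v-C(\q+\alpha\v)\ones$. As $\alpha\v$ ranges over all of $\reals^n$, and projecting any point onto the zero level set of $C$ by subtracting $C(\cdot)\ones$ is surjective onto that level set, interpretation~2 reaches exactly the coherent states reachable by interpretation~3, closing the loop. The delicate points to nail down are well-posedness of the flow (an inclusion with convex, possibly non-smooth right-hand side---handled either by maximal-monotone-operator theory or, more elementarily, as the $\delta\to0$ limit of the discrete process ``repeatedly buy $\v\delta$ from the cheapest market maker'') and convergence of the terminal arbitrage; both are routine but deserve care, and they are the only places where the argument goes beyond convex conjugate duality and the coherent-state parametrization.
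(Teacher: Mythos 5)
Your proposal is correct and follows essentially the same route as the paper's proof: exactness of the infimal convolution $(\sum_i G_i)^* = \bigwedge_i C_i$ with a common subgradient characterizing coherent splits, the pseudobarrier hypothesis keeping prices in $\relint\Delta_n$ so that Pareto optimality of a trade $\r$ is equivalent to $C(\q+\r)=C(\q)$, and the scoring-rule reparametrization $\q^i = S_i(\p^i,\cdot)$ (modulo $\ones$) for interpretation~4. The only substantive divergence is interpretation~2, where you attack the continuous flow head-on (water-filling, monotone-operator well-posedness) while the paper sidesteps well-posedness entirely by fixing $\v=\r$, $\alpha=1$ and arguing at the level of the aggregate purchased amounts $\alpha_i$ and costs $\beta_i$, recovering the discrepancy $\beta$ via the terminal arbitrage; the paper's route is lighter, but both reach the same conclusion.
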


Implicit in the proof of Theorem~\ref{thm:equivalence-interpretations} is that, in interpretations 1, 3, and 5, if the market state is not initially coherent, it becomes coherent after a sufficiently large trade.

\section{Defining liquidity as matrix-valued price insensitivity} \label{sec:liquidity_discussion}

Informally, liquidity is the extent to which assets/securities can be exchanged.
One way to capture liquidity, locally around a given price, is to quantify the extent to which the price stays stable during a transaction.
In other words, the lower the rate of change of the price, the higher the liquidity.

Thus far in the prediction market literature, even for large numbers of securities, liquidity is often captured by a single parameter.
The most popular approach is to consider some base cost function $C$ and define $C_\eta$ via the \emph{perspective transform} $C_\eta(\q) = \eta C(\q/\eta)$, where $\eta$ corresponds to the liquidity of the market~\citep{othman2013practical,li2013axiomatic,abernethy2014general,othman2011liquidity-sensitive}.
Another approach is to define the liquidity of $C$ to be some function of its Hessian $\nabla^2 C$ such as the inverse of its norm~\citep{abernethy2013efficient}.
A noted exception is \citet{dudik2014market}, where liquidity is acknowledged to depend on which specific subset of securities is under consideration.
Our approach is closest to the latter: as we argue soon in \S~\ref{sec:liquidity-hessian}, liquidity is indeed inherently multidimensional.
Any subspace of securities could have any degree of liquidity.
It is true that for very restricted protocols such as Uniswap V2, which corresponds to the perspective transform, liquidity has a fixed shape that can be scaled by a single real parameter.
In general, however, liquidity must be captured by a higher-dimensional object.
We begin by considering the case of 2 securities and constructing a scalar measure of liquidity that naturally corresponds to price insensitivity.
Then, we construct a matrix-valued measure of liquidity that recovers our scalar measure for the 2-security case.
We show that we can use our matrix-valued liquidity measure to recover a scalar liquidity measure \emph{in the direction} of any trade.

\subsection{Scalar-valued liquidity for 2 securities} \label{sec:liquidity-scalar}

Consider a prediction market with 2 securities.
We first note that by $\ones$-invariance, a 2-dimensional cost function may be specified by a 1-dimensional convex function.
Given any $\ones$-invariant cost function $C:\reals^n\to\reals$, we may write $C(\q) = c(q_1-q_2) + q_2$ for some convex $c:\reals\to\reals$ given by $c(q) = C((q,0))$.
Letting $\q = (q,0)$, then, the price of security 1 is $\nabla C(\q) = c'(q)$.
Note that we additionally use the lowercase variants of $C$ and $G$ in \S~\ref{sec:DeFi}.

Appealing to the above notion of liquidity as a function of the local price, let us define the liquidity at price $p$ to be the reciprocal of the rate of change of the price when the price is $p$.
While not required in general, for the purpose of this derivation, suppose that $c'' > 0$ everywhere.
Then we may define the liquidity at price $p\in[0,1]$ as $\ell(p) = 1/c''(q) > 0$, where $p = c'(q)$.
Since $\ell$ is strictly positive, we find that $\ell(p) = g''(p)$ for some convex function $g:[0,1]\to\reals$.
This relationship is in essence a special case of convex conjugate duality: we may simply take $g = c^*$.
From this duality, we have $g' = (c')^{-1}$, which is well-defined as $c'$ is strictly monotone; by the inverse function theorem, we could equivalently derive $\ell$ as $\ell(p) = ((c')^{-1})'(p) = g''(p)$.

\subsection{Liquidity as a Hessian matrix} \label{sec:liquidity-hessian}
In the 2-security case, we showed that liquidity can be measured as $\ell(p)=1/c''(q)=g''(p)$, adopting the 1-dimensional expression of the market.
In higher dimensions, we can analogously define the liquidity at price $\p$ to be $\ell(\p) = (\nabla^2 C(\q))^+$ at any vector $\q$ with price $\nabla C(\q) = \p$, where $A^+$ is the pseudoinverse of $A$.\footnote{The pseudoinverse is needed as the Hessian $\nabla^2 C$ is rank-deficient since $C$ is always flat in the $\ones$ direction.
  This observation also explains why we can express liquidity between 2 securities in one real number, since there is only 1 free parameter in $\nabla^2 C$ in that case.}
Here, liquidity is a matrix, which specifies the (inverse) rate of change of the price in any direction (or more generally, subspace) of interest.
Again appealing to convex duality, we can write $\q = \nabla \overline G(\p)$, where $\overline G$ is the 1-homogeneous extension of the dual function $G = C^*$.
Thus, we have $\ell(\p) = \left(\nabla^2 C(\nabla \overline G(\p))\right)^+$, or equivalently, $\ell(\p) = \nabla^2 \overline G(\p)$.
We may recover the usual measure of scalar liquidity as price insensitivity in the direction of any price $\hat{\p} \neq \p$.
Let $H(\p)=\nabla^2(\overline G(\p))$, and consider the product $(\hat{\p})^\top H(\p)\hat{\p}$.
This gives us the second derivative with respect to $t$ of the function $g(t)=f(\p+t\hat{\p})$, representing the price insensitivity in the direction of $\hat{\p}$ locally at $\p$.
Consider again the generating functions used in the example from \S~\ref{sec:example_run}:
\begin{align*}
& G_1(\p) = p_1\log(p_1)+p_2\log(p_2)-(p_1+p_2)\log(p_1+p_2), \\
& G_2(\p) = p_2\log(p_2) + p_3\log(p_3)-(p_2+p_3)\log(p_2+p_3).
\end{align*}
As $G_1$ and $G_2$ are already 1-homogeneous, they are equal to their 1-homogeneous extensions and we may take their Hessians directly.
Respectively, these are
\begin{align*}
H_1 &= \nabla ^2 G_1 = 
\begin{bmatrix}
\frac{1}{p_1} - \frac{1}{p_1+p_2} & - \frac{1}{p_1+p_2} & 0 \\
- \frac{1}{p_1+p_2} & \frac{1}{p_2} - \frac{1}{p_1+p_2} & 0 \\
0 & 0 & 0 \\
\end{bmatrix} = 
\begin{bmatrix}
\frac{1}{p_1}& 0 & 0 \\
0 & \frac{1}{p_2} & 0 \\
0 & 0 & 0 \\
\end{bmatrix} - 
\begin{bmatrix}
\frac{1}{1-p_3} & \frac{1}{1-p_3} & 0 \\
\frac{1}{1-p_3} & \frac{1}{1-p_3} & 0 \\
0 & 0 & 0 \\
\end{bmatrix},
\\
H_2 &= \nabla ^2 G_2 = 
\begin{bmatrix}
0 & 0 & 0 \\
0 & \frac{1}{p_2} - \frac{1}{p_2+p_3} & - \frac{1}{p_2+p_3}\\
0 & - \frac{1}{p_2+p_3} & \frac{1}{p_3} - \frac{1}{p_2+p_3}\\
\end{bmatrix} = 
\begin{bmatrix}
0 & 0 & 0 \\
0 & \frac{1}{p_2} & 0 \\
0 & 0 & \frac{1}{p_3} \\
\end{bmatrix} - 
\begin{bmatrix}
0 & 0 & 0 \\
0 & \frac{1}{1-p_1} & \frac{1}{1-p_1} \\
0 & \frac{1}{1-p_1} & \frac{1}{1-p_1} \\
\end{bmatrix}.
\end{align*}

Note that $G_1$ is flat in the direction of $\hat{\p}=(0,0,1)$, and $G_2$ is flat in the direction of $\hat{\p}=(1,0,0)$.
As we observed in our earlier example, neither function provides any liquidity to trades involving those assets.
But, by  adding the generating functions, and therefore adding their Hessians, we get
\begin{align*}
\nabla^2 G = 
\begin{bmatrix}
\frac{1}{p_1} - \frac{1}{p_1+p_2} & - \frac{1}{p_1+p_2} & 0 \\
- \frac{1}{p_1+p_2} & \frac{2}{p_2} - \frac{1}{p_1+p_2} - \frac{1}{p_2+p_3} & - \frac{1}{p_2+p_3} \\
0 & - \frac{1}{p_2+p_3} & \frac{1}{p_3} - \frac{1}{p_2+p_3}
\end{bmatrix},
\end{align*}
and liquidity is present for all assets.

\section{An impossibility result on the design of trading fees} \label{sec:fees}

Our protocol calls for assessing a cash fee to be paid by the trader and to each LP after each trade, parameterized by the trade vector, liability, and cost/generating function of the LP.
We do not inherently place any other restrictions on the fee function.

\subsection{Natural fees ``break'' when $n\geq 3$}

In practice, the design of fees has not been studied for markets with $n\geq 3$ securities.
Commonly used fees, like those used in Uniswap, exhibit deeply undesirable behavior when extended to these markets; see Appendix~\ref{sec:app-uniswap-fee}.
Here, we enumerate a set of axioms describing desirable behavior of the fee function.
Unfortunately, we show that it is impossible for a fee function to satisfy all properties.
We denote the fee that LP $i$ receives by $\fee_i(\{\vec{r}^i\},\{\vec{q}^i\})\in \reals$.
Let the fee that a trader pays be given by $\fee_{\mathrm{T}}(\{\vec{r}^i\},\{\vec{q}^i\})\in\reals$. 

\begin{axiom}[Budget Balanced, (BB)] \label{axi:bb}
    The sum of the fees collected by the LPs must be equal to the fee that a trader pays. That is, for all $\{\r^i\},\{\q^i\}$,
    \begin{equation*}
        \fee_{\mathrm{T}}(\{\vec{r}^i\},\{\vec{q}^i\})=\sum_{i=1}^k \fee_i(\{\vec{r}^i\},\{\vec{q}^i\}).
    \end{equation*}
\end{axiom}
\begin{axiom}[Trader Simple, (TS)] \label{axi:ts}
    The fee paid by a trader should be independent of how the trade is split across LPs and their individual liquidities and should depend only on the aggregate trade $\r$ and aggregate liquidity $\q$.
    The axiom captures the goal of keeping the trader interface simple and abstracted from the inner dynamics of LPs.
    Hence, we require that there exists \(\overline{\fee}_{\mathrm{T}}\) such that for all \(\{\r^i\},\{\q^i\}\),
    \begin{equation*}
        \fee_{\mathrm{T}}(\{\vec{r}^i\},\{\vec{q}^i\})=\overline{\fee}_{\mathrm{T}}(\vec{r},\vec{q})
    \end{equation*}
    whenever $\vec{r}=\sum_i\vec{r}^i$ and $\vec{q}=\sum_i\vec{q}^i$.
\end{axiom}
\begin{axiom}[LP Decomposability, (LD)] \label{axi:ld}
    The fee LP $i$ charges should only depend on inputs $\r^i,\q^i$ to enable the parallel market interpretation proposed in \S~\ref{sec:interpretations}.
    Hence there exists $\fee_{\mathrm{LP}}$ such that for all $i\in\{1,\ldots,k\}$,
    \begin{equation*}
        \fee_i(\{\vec{r}^i\},\{\vec{q}^i\})=\fee_{\mathrm{LP}}(\vec{r}^i,\vec{q}^i).
    \end{equation*}
\end{axiom}

\begin{axiom}[Nonnegativity, (NN)] \label{axi:nn} 
    All fees must be nonnegative, so for all $\{\r^i\},\{\q^i\}$,
    \begin{align*}
        \fee_i(\{\r^i\},\{\q^i\}),\fee_{\textsc{T}}(\{\r^i\},\{\q^i\})&\in\mathbb{R}^n_{\geq0}
    \end{align*}
    and $\fee_i(\{\r^i\},\{\q^i\}), \fee_{\textsc{T}}(\{\r^i\},\{\q^i\}) = 0$ if and only if $\r^i=\0$ or $\r=\0$ respectively.
\end{axiom}

The above axioms are sufficient for our impossibility result; see \S~\ref{sec:fee-practice} for an additional Axiom~\ref{axi:pf}.
We start by stating a useful lemma.
See Appendix~\ref{sec:fees-appendix} for proofs.
\begin{lemma} \label{lem:feeuni}
    Axioms~\ref{axi:bb}, \ref{axi:ts}, and \ref{axi:ld} allow us to write 
    $
        \overline{\fee}_\mathrm{T}=\fee_\mathrm{LP}.
    $
\end{lemma}
Due to Lemma~\ref{lem:feeuni}, we use \(\fee()\) from now on in the place of functions \(\overline{\fee}_\mathrm{T}()\) and \(\fee_\mathrm{LP}()\).
We now state our impossibility result, providing a proof sketch but deferring the full constructive proof to the appendix.
\begin{theorem} \label{thm:fee-impossible}

Axioms~\ref{axi:bb}, \ref{axi:ts}, \ref{axi:ld}, and \ref{axi:nn} are incompatible.
\end{theorem}
\begin{proof}[Proof Sketch]
We approach the problem from the market scoring rule point of view, considering a 3 asset instance.
We consider the three generating functions $G_1=-2\sqrt{2p_2p_3},G_2=-2\sqrt{2p_1p_3}$, and $G_3=-2\sqrt{2p_1p_2}$, which are symmetric (i.e., equal with each other up to permutation of assets), 1-homogeneous, and flat with respect to one asset.
We consider a setting in which LPs with generating functions $G_1$ and $G_2$ are present. We set initial and final prices such that the trades facilitated are $\r^1=(0,-1,1)$ and $\r^2=(1,0,-1)$, and therefore the overall trade is $\r^1+\r^2=(1,-1,0)$.
We then consider a setting in which only LP$_3$ with $G_3$ is present, and set prices to create a direct trade of $\r=\r^3=(1,-1,0)$.
We add ``filler'' LPs to both settings that add liquidity so that the starting liability vector is the same in both settings, but are flat with respect to both asset 1 and asset 2, and therefore do not participate in the trade. By Axiom~\ref{axi:nn}, the fees that these ``filler'' LPs collect is $\0$, since the trade facilitated by these LPs is $\0$.

In both settings, the trader makes a trade of $(1,-1,0)$, with the same starting liability $\q$, and so pays the same fee.
By Lemma~\ref{lem:feeuni}, we have
\begin{align*}
\bar{\fee}_T(\r,\q) = \fee_1(\r^1,\q)+\fee_2(\r^2,\q),\quad \bar{\fee}_T(\r,\q)= \fee_3(\r^3,\q).
\end{align*}

Now, we consider an alternative pair of scenarios, in which we permute the assets and the roles of the LPs.
With LP$_1$ and LP$_3$ present, we set prices so that $\r^1=(0,1,-1)$, the negative of its original value, $\r^3=(1,-1,0)$, and the total trade is $(1,0,-1)$. In the second scenario, LP$_2$ facilitates a trade of $\r^2=(1,0,-1)$.
Note that both $\r^2$ and $\r^3$ have the same value throughout.
 We again introduce ``filler'' LPs so that the overall initial liability is the same as in the previous pair of settings.
We then have
\begin{align*}
\bar{\fee}_T(\r,\q) = \fee_1(-\r^1,\q)+\fee_3(\r^3,\q),\quad \bar{\fee}_T(\r,\q) = \fee_2(\r^2,\q).
\end{align*}
So we have that $\fee_1(\r^1,\q)+\fee_2(\r^2,\q)=\fee_3(\r^3,\q)$ and $\fee_1(-\r^1,\q)+\fee_3(\r^3,\q)=\fee_2(\r^2,\q).$ But by Axiom~\ref{axi:nn}, all these fees are strictly positive, and so this is impossible.
\end{proof}

\begin{figure}[htbp]
    \centering
    
    \begin{subfigure}[b]{0.45\textwidth}
        \centering
        \begin{tikzpicture}[scale=0.5,
    >={Stealth[round, length=3mm]},
    thick,                         
    main node/.style={font=\Large, outer sep=3pt},
    edge label/.style={font=\small, auto}
]

    \node[main node] (1) at (90:3) {1};
    \node[main node] (2) at (210:3) {2};
    \node[main node] (3) at (330:3) {3};

    \draw[->] (3) -- node[edge label] {LP3} node[edge label, swap] {$r^2$} (1);
    \draw[->] (2) -- node[edge label] {LP1} node[edge label, swap] {$r^1$} (3);
    \draw[->, dotted, very thick] (2) -- node[edge label] {$r^3=r^1+r^2$} (1);

\end{tikzpicture}
        \caption{trades $r^1$ and $r^2$ facilitated by LP1 and LP2 add to $r^3$.}
        \label{fig:A}
    \end{subfigure}
    \hfill
    \begin{subfigure}[b]{0.45\textwidth}
        \centering
        \begin{tikzpicture}[scale=0.5,
    >={Stealth[round, length=3mm]},
    thick,                         
    main node/.style={font=\Large, outer sep=3pt},
    edge label/.style={font=\small, auto}
]

    \node[main node] (1) at (90:3) {1};
    \node[main node] (2) at (210:3) {2};
    \node[main node] (3) at (330:3) {3};

    \draw[->] (2) -- node[edge label] {$r^3$} node[edge label, swap] {LP3} (1);

\end{tikzpicture}
        \caption{Trade $r^3$ facilitated directly by LP3.}
        \label{fig:B}
    \end{subfigure}
    
    \vspace{1cm}
    
    \begin{subfigure}[b]{0.45\textwidth}
        \centering
        \begin{tikzpicture}[scale=0.5,
    >={Stealth[round, length=3mm]},
    thick,                         
    main node/.style={font=\Large, outer sep=3pt},
    edge label/.style={font=\small, auto}
]

    \node[main node] (1) at (90:3) {1};
    \node[main node] (2) at (210:3) {2};
    \node[main node] (3) at (330:3) {3};

    \draw[->] (2) -- node[edge label] {LP3} node[edge label, swap] {$r^3$} (1);
    \draw[->] (3) -- node[edge label] {LP1} node[edge label, swap] {$r^1$} (2);
    \draw[->, dotted, very thick] (3) -- node[edge label, right] {$r^2=r^1+r^3$} (1);

\end{tikzpicture}
        \caption{Trades $r^1$ and $r^3$ facilitated by LP1 and LP3 add to $r^2$.}
        \label{fig:C}
    \end{subfigure}
    \hfill
    \begin{subfigure}[b]{0.45\textwidth}
        \centering
        \begin{tikzpicture}[scale=0.5,
    >={Stealth[round, length=3mm]},
    thick,                         
    main node/.style={font=\Large, outer sep=3pt},
    edge label/.style={font=\small, auto}
]

    \node[main node] (1) at (90:3) {1};
    \node[main node] (2) at (210:3) {2};
    \node[main node] (3) at (330:3) {3};

    \draw[->] (3) -- node[edge label] {LP$2$} node[edge label, swap] {$r^2$} (1);

\end{tikzpicture}
        \caption{Trade $r^2$ facilitated directly by LP2}
        \label{fig:D}
    \end{subfigure}
    
    \caption{A set of trades that, with the starting liquidity set to be equal in all cases, results in
    $\fee_3=\fee_1+\fee_2$ and
    $\fee_2=\fee_1+\fee_3$,
    violating Axiom \ref{axi:nn}.
    }
    \label{fig:grid}
\end{figure}
\vspace{-17pt}
\subsection{Fee design in practice and future work} \label{sec:fee-practice}

In light of our impossibility result, what fee structures are best in practice?
Many markets use $\fee_i(\r^i,\q^i)=\beta\|\r^i\|_1$ or $\fee_i(\r,\q) = \beta\|\r\|\frac{\|\r^i\|}{\sum_j\|\r^j\|}$ for some $0<\beta<1$, or some minor variant thereof. Uniswap uses $\fee_i(\r^i,\q^i)=\beta(-\r^i)_+$.
We provide a worked example of Uniswap's fee in Appendix~\ref{sec:app-uniswap-fee}.
While these fees violate trader simplicity (Axiom~\ref{axi:ts}), any violation of the sort we construct in the proof of Theorem~\ref{thm:fee-impossible} is ``reasonable'' in the sense that the trader must pay a higher fee for an ``indirect'' trade that requires the participation of multiple LPs. Axiom~\ref{axi:pf}, given below, is also violated, but only when the difference between the trader's belief and the price for any security they are trading is less than $\beta$.

A possible approach is to allow fees that are a function not only of $\r$ and $\q$ but also the cost function/scoring rule set by the liquidity provider.
Surprisingly, adding this additional parameter allows us to overcome the impossibility result of Theorem~\ref{thm:fee-impossible} and satisfy Axioms 1 through 4.
However, while such fees do not violate Axiom~\ref{axi:ts} as written, they violate the spirit of the axiom, as the trader fee depends on the configuration of LPs.
Despite this concern, such fees exist whose dependence on $G$ can be reduced to trader-interpretable quantities such as $\p$ and $\hat{\p}$.
One such fee is $\r\cdot(\hat{\p}-\p)$, which we use to prove that a fee function parameterized by $G$ may satisfy Axioms 1 through 4.
\begin{proposition}
A fee function of the form $\fee(\r,\q,G)$ may satisfy Axioms 1 through 4.
\end{proposition}
\begin{proof}
To show that Axioms~\ref{axi:bb}, \ref{axi:ts} and \ref{axi:ld} hold, first note that for any fixed $\q$, $\p$, legal trade $\r$, and set of LPs with generating functions $\{G_i\}$, the final price $\hat{\p}$ after performing trade $\r$ is exactly determined by $\q,\p$, and $\{G_i\}$. We then observe that
\[
\fee(\r,\q)=\r\cdot(\hat{\p}-\p)=\sum_{i=1}^k \r^i\cdot(\hat{\p}-\p)=\sum_{i=1}^k \fee_i(\r^i,\q^i).
\]
To show that Axiom~\ref{axi:nn} is satisfied, note that whenever an element $r_j$ of $\r$ is positive, so is $\hat{p}_j-p_j$, and whenever $r_j$ is negative, so is $\hat{p}_j-p_j$, and so $\r\cdot(\hat{\p}-\p)=\sum_{j=1}^n r_j(\hat{p}_j-p_j) \geq 0$, with equality exactly when $\r=\0$.
\end{proof}
\begin{axiom}[Profitability, (PF)] \label{axi:pf}

If $\r$ is a legal trade given $\q$ and $C$, then $\r^* := \sum_i \max\{r_i,0\}$ is the maximum possible profit that a trader can make off of $\r$,
if all acquired securities are sold at price 1.
Then $\fee_{\textsc{T}}(\{\r\},\{\q\}) \leq \r^*$.
\end{axiom}
However, this fee still exhibits very undesirable behavior.
A trader may reduce their fee by dividing their trade into smaller sub-trades, each of which creates a smaller difference in price.
Indeed, for sufficiently large trades, the fee can be arbitrarily large, even though the maximum profit of any trade is bounded by the deposited liquidity, severely violating Axiom~\ref{axi:pf}.
For example, given a pseudobarrier generating function $G$ on three assets, any trade $\r$ to $\hat{\p}=(1-\epsilon,0,0)$ will be bounded in the first term by the maximum deposit of $G$ but approach $-\infty$ in the other two terms.
Therefore, for any starting price $\p$, the fee for trading to $(1-\epsilon,0,0)$ grows arbitrarily large as $\epsilon\to 0$, while the trader's maximum profit is bounded above by $-\min_{\q\in\Delta_3}G(\q)$.
Ultimately, these concerns nullify any practical use for the fee.
The design of simple, liquidity-dependent fees that are more practical for future use is a promising direction of future work.

\section{Recovering and extending protocols from decentralized finance}\label{sec:DeFi}
In this section, we discuss our contributions to the decentralized finance literature.
We recover several common protocols, like Uniswap V2 and V3, as special instances of our general protocol.
For this section, we restrict to the exchange of two securities, as is common in decentralized finance.
In Appendix~\ref{sec:app-general-two-asset}, we provide a general protocol (Protocol~\ref{alg:general-two-asset-g}) for this setting.
In Appendix~\ref{sec:new_protocols_detailed}, we use the flexibility of our general protocols to propose new ones, to be used either in decentralized finance or for prediction markets.

\subsection{Conventional differences}\label{subsec:conventions}

There are many conventional differences to note between 
the prediction market and the decentralized finance AMM literature. 
For example, with regard to prices, decentralized finance typically uses an ``exchange-rate'' version of the contract price: the rate at which one can exchange one asset for another.
Taking advantage of the structure of cost functions, we instead adopt a \textit{normalized price} convention.
One can view normalized prices $\p \in \Delta_n$ as an exchange rate between assets and the ``grand bundle'' $\ones$ of all assets; $p_i$ denotes the instantaneous price, in units of $\ones$, to purchase asset $i$.

Converting between the two conventions is straightforward.
Given normalized prices $\p$, one can simply define the exchange rate between $i$ and $j$ as $\hat p_{ij} = p_i / p_j$.
Conversely, given pairwise exchange rates, one can define $\x = (1, \hat p_{21}, \hat p_{31}, \ldots, \hat p_{n1})$ and take $\p = \x / \|\x\|_1$.
The conversion simplifies in the case of two assets, as $\hat p_{12}=\frac{p}{1-p}$ and $p = \hat p_{12}/(\hat p_{12}+1)$.

In the decentralized finance literature, trades are also typically oriented toward the market maker, whereas we instead consider all trades to be oriented towards the trader.
Also, CFMMs in decentralized finance track the reserves held by the market maker, while cost function prediction markets typically track their liabilities as a function of the eventual outcome.
So, a vector $\res$ of reserves corresponds to a vector $-\q$ of liabilities.

\subsection{Uniswap V2}
\label{sec:uniswap-v2}

Uniswap V2, introduced by~\citet{adams2020whitepaperv2}, is a commonly used AMM in the Ethereum ecosystem to trade assets and has the functional invariant $\varphi_\alpha(\res)=\xa\xb=\alpha^2$.
We track the reserve vector $\res$, so that $\xa$ and $\xb$ represent the amount of assets 1 and 2, respectively, held by the market maker.
We also still use normalized prices, leading to minor differences from the literature.
The normalized price of the first asset can be computed as $p=\frac{\xb}{\xa+\xb}$, while the exchange rate model has $\hat{p}=\frac{x_2}{x_1}$.
We refer the reader to~\citet{fan2022differentialliquidity, fan2023strategicliquidity} for a detailed breakdown of Uniswap V2 mechanics.

Uniswap V2 restricts how an LP can add or remove their liquidity by constraining them to use the same base function $g_0=-2\sqrt{p(1-p)}$ and to express their liquidity only via a parameter $\alpha^i$ where $g_i=\alpha^ig_0$.
We state the Uniswap V2 protocol as Protocol~\ref{alg:V2-two-asset} explicitly in Appendix~\ref{sec:app-uniswapv2}.
Recall that from \citet{fan2022differentialliquidity}, the bundle required to change liquidity from $\alpha^i$ to $\alpha'$ while keeping the price invariant is $\left(\frac{\alpha'-\alpha^i}{\sqrt{\hat{p}}},(\alpha'-\alpha^i)\sqrt{\hat{p}}\right)$. Using normalized prices, this is represented as $\left((\alpha'-\alpha^i)\sqrt{\frac{1-p}{p}},(\alpha'-\alpha^i)\sqrt{\frac{p}{1-p}}\right)$ in our protocol.
We note that instead of skimming $\gamma$ from $(-\r)_+$ for trading fees, we ask for $\beta(-\r)_+$ from the trader when they request the trade $\r$.
These two fee schemes are equivalent when $\beta=\frac{\gamma}{1+\gamma}$.
\begin{lemma}[Informal]\label{lemma:uniswap_is_subcase}
    Protocol \ref{alg:V2-two-asset} is a special case of Protocol \ref{alg:general-two-asset-g} for specific restrictions on $\Ginit$.
\end{lemma}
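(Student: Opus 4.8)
The plan is to exhibit an explicit dictionary translating the state and operations of Protocol~\ref{alg:V2-two-asset} into those of Protocol~\ref{alg:general-two-asset-g}, with the restriction on $\Ginit$ being that every LP is forced to use a scalar multiple of the base function $g_0(p) = -2\sqrt{p(1-p)}$. Concretely, the restricted $\Ginit$ consists precisely of the functions $\{\alpha g_0 : \alpha \geq 0\}$ (together with whatever shift is needed to satisfy the no-liability normalization $g(0)=g(1)=0$, which $g_0$ already satisfies). Under this restriction, LP $i$'s cost function is determined by the single parameter $\alpha^i$, recovering exactly the state variable of Protocol~\ref{alg:V2-two-asset}. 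First I would record the two facts linking the two ``sides'': (i) the convex conjugate of $\overline{\alpha g_0}$ is the cost function whose $0$-level set is the curve $x_1 x_2 = \alpha^2$, which is exactly the statement quoted from \citet{chen2007utility,frongillo2023axiomatic} in \S~\ref{sec:no_liability}, so that the aggregate cost function $C = \bigwedge_i C_i$ with $C_i^* = \overline{\alpha^i g_0}$ has dual $G = \sum_i \alpha^i g_0 = (\sum_i \alpha^i) g_0 = \alpha g_0$ — i.e., the aggregate market is again a constant-product market with parameter $\alpha = \sum_i \alpha^i$; and (ii) the normalized/exchange-rate price conversion $\hat p = p/(1-p)$ from \S~\ref{subsec:conventions}, so that $\varphi$-level sets and $C$-level sets describe the same trade set and the price functions agree.

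The main steps, in order, are: (1) \textbf{State correspondence.} Show that the reserve vector $\res$ in Protocol~\ref{alg:V2-two-asset} equals $-\q = -\sum_i \q^i$, where $\q^i$ is the liability LP $i$ would hold in Protocol~\ref{alg:general-two-asset-g} when running $g_i = \alpha^i g_0$ at the current price; this uses that the scoring-rule/liability vector for $\alpha g_0$ at price $p$ is the point on the curve $x_1 x_2 = \alpha^2$ with price $p$, which is $\big((\alpha)\sqrt{(1-p)/p},\,(\alpha)\sqrt{p/(1-p)}\big)$ up to sign. (2) \textbf{\textsf{Initialize} and \textsf{RegisterLP}.} Check that setting $\alpha^0 = \sqrt{x_1^0 x_2^0}$ is exactly $\liabilityf$ applied to $\alpha^0 g_0$ at the initial price, and that a new LP registering with $\alpha^i = 0$ corresponds to $g_i = 0$, i.e.\ to the trivial cost function $C_i = \max$ in Protocol~\ref{alg:general-n-asset-cost}/\ref{alg:general-two-asset-g}. (3) \textbf{\textsf{ModifyLiquidity}.} Verify that the requested bundle $\big((\alpha'-\alpha^i)\sqrt{(1-p)/p},\,(\alpha'-\alpha^i)\sqrt{p/(1-p)}\big)$ is precisely $\q^i - \liabilityf(\alpha' g_0)$ at the current price — this is just the difference of two points on the constant-product curve at the same price, and matches the quoted formula from \citet{fan2022differentialliquidity}. (4) \textbf{\textsf{ExecuteTrade}.} Using fact (i), the aggregate constraint $C(\q+\r) = C(\q)$ is the same as $\varphi(\res - \r) = \varphi(\res)$ with $\varphi(\res) = x_1 x_2$, since both cut out the hyperbola through the current state; the state update $\res \gets \res - \r$ matches $\q \gets \q + \r$ under $\res = -\q$. (5) \textbf{Fees.} Observe that the trade split $\r = \sum_i \r^i$ for cost functions in a common perspective-transform family is proportional: $\r^i = (\alpha^i/\alpha)\r$ (since each $C_i$ is $\alpha^i/\alpha$ times a rescaling of $C$ and all share the same price), so the fee $\fee_i = \beta\|\r\|_1 \cdot \|\r^i\|_1/\sum_j\|\r^j\|_1$ collapses to $\frac{\beta\alpha^i}{\alpha}\|\r\|_1$, and since a Pareto-optimal trade has a definite sign pattern, $\|\r\|_1$ agrees with the relevant $(-\r)_+$ bookkeeping up to the $\beta \leftrightarrow \gamma/(1+\gamma)$ reparametrization already noted.

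The step I expect to be the main obstacle is step~(5), and more specifically pinning down that the trade split is exactly proportional. It is intuitively clear — all constituent markets are scaled copies of one another and a Pareto-optimal trade equalizes their prices — but making it rigorous requires the characterization of the optimal split from the infimal convolution (the ``closed form'' the paper advertises in its introduction and presumably proves alongside Protocol~\ref{alg:general-two-asset-g}), namely that at optimality all $C_i$ share a common price $\nabla C(\q)$, combined with the fact that for $C_i$ in a common perspective-transform family the point on $C_i$'s level set at a given price is $\alpha^i/\alpha$ times the corresponding point for $C$. Granting that characterization, proportionality is immediate; the remaining fee equivalence is the routine $\beta = \gamma/(1+\gamma)$ computation already stated in the text. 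A secondary, purely bookkeeping obstacle is being careful with the reserve-versus-liability sign flip and the normalized-versus-exchange-rate price convention throughout, so that the ``check'' lines and requested bundles line up verbatim; since the lemma is stated as informal, it suffices to carry these conversions through cleanly rather than belabor them.
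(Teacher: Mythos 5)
Your proposal is correct and follows essentially the same route as the paper's Propositions~\ref{prop:is_uniswap} and~\ref{prop:protocol_equal}: restrict $\Ginit,\GLP$ to scalar multiples of $g_0(p)=-2\sqrt{p(1-p)}$, show $\liabilityf(\alpha g_0,p)$ is the point on the hyperbola $q_1q_2=\alpha^2$ at price $p$, and then match the \textsf{Initialize}, \textsf{ModifyLiquidity}, and \textsf{ExecuteTrade} checks and the proportional split $\r^i=(\alpha^i/\alpha)\r$. The only simplification you miss is that in Protocol~\ref{alg:general-two-asset-g} the split is computed directly as $\r^i=\liabilityf(g_i,p')-\q^i$, so proportionality follows immediately from linearity of $S_{g}(p,\cdot)$ in the scale parameter, without needing the general infimal-convolution characterization you flag as the main obstacle.
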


We defer the formal statements and proofs to Appendix~\ref{sec:omitted_proofs-defi}.
There, we show that the liability vectors from the latter indeed satisfy the constant product invariant for our choice of generating functions using Propositions~\ref{prop:is_uniswap} and \ref{prop:protocol_equal}.

\subsection{Uniswap V3 and general bucketing}\label{sec:general_bucketing}

Uniswap V2 requires LPs to provide liquidity on the entire price space. This restriction may look intuitive, but it is suboptimal since liquidity allocations far from the current price may not be used.
For example, a market that trades securities on a sports game might not benefit from having liquidity at prices in the $(0,0.1)$ range, say.
Moreover, when LPs provide liquidity, they take on the risk of price volatility, and ideally we would like to allow them to bound that risk.
On the other hand, it is computationally challenging to maintain an infinitely flexible LP protocol.

Motivated by these concerns, the Uniswap V3 protocol, \citep{adams2021whitepaperv3}, partitions the price space, allowing each LP $i$ to contribute a proportion $\alpha^{ij}$ of liquidity on any price bucket $[a_j,b_j]$ of their choosing.
We refer the reader to \citet{fan2022differentialliquidity,fan2023strategicliquidity} for a detailed analysis of Uniswap V3.
In Appendix~\ref{sec:uniswapV3} we show that Uniswap V3, given by Protocol~\ref{alg:V3-two-asset}, is a special case of our general protocol.

In this section, we \emph{generalize} the idea of bucketing with regards to an arbitrary cost function $C$.
For ease of exposition, let us restrict again to the two-outcome setting.
We do so as this allows for LPs to be more expressive, depending on the number of price intervals and also keeps the complexity of implementation from blowing up.
Let $G(\p)$ indicate the corresponding dual where $p_2=1-p_1$. 
Let $\ell^{(j)}$ be its liquidity function restricted to the $j$-th price interval $ [a_j,b_j]$ of $p_1$ and be given by 
$
  \ell^{(j)}=\nabla^2\bar{G}\ones_{p_1\in[a_j,b_j]}.
$
Let the liability vector associated with the corresponding cost function dual for $\ell^{(j)}$ be given below, which we derive in Appendix~\ref{sec:bucket_liquidity_vec}.
\begin{align*}
  \liabilityf(C^{(j)}) 
  &= \begin{pmatrix}
      S_g(\max\{a_j,p_1\},1)-S_g(\max\{b_j,p_1\},1)
      \\
      S_g(\min\{b_j,p_1\},0)-S_g(\min\{a_j,p_1\},0)
  \end{pmatrix}
\end{align*}
where $\p=(p_1,p_2)$ is the current price, $S_g(p,y)=g(p)+g'(p)(y-p)$ and $G(\p)=g(p_1)$ as discussed in \S~\ref{sec:scoring-rules}, \S~\ref{sec:liquidity_discussion} and $C^{(j)}=(\iint \ell^{(j)})^*$.

In Table~\ref{tab:buckets}, we use the above general liability vector to state the liability vectors for Uniswap V3, as well as new bucketing protocols where we use $G(\p)=p_1\log p_1+p_2\log p_2$ from LMSR and $S_g(p,y) = -(p-y)^2$ of the Brier scoring rule as the base ``shapes.''
We defer detailed workings for the LMSR bucketing protocol to Appendix~\ref{sec:bucket_lmsr}.
\begin{table}[ht] 
  \begin{center} 
  \begin{tabular}{c|ccc}
    \hline
    & Uniswap V3 & LMSR & Brier \\
    \hline
    $p<a_j$ & $\alpha_j\begin{pmatrix}
      \sqrt{\frac{1-b_j}{b_j}}-\sqrt{\frac{1-a_j}{a_j}}\\0
    \end{pmatrix}$ & $\begin{pmatrix} 
      \log\frac{a_j}{b_j}
      \\
      0
    \end{pmatrix}$ & $\begin{pmatrix} 
      (1-b_j)^2-(1-a_j)^2
      \\
      0
    \end{pmatrix}$ \\
    $p\in[a_j,b_j]$ & $\alpha_j\begin{pmatrix}
      \sqrt{\frac{1-b_j}{b_j}}-\sqrt{\frac{1-p}{p}}\\\sqrt{\frac{a_j}{1-a_j}}-\sqrt{\frac{p}{1-p}}
    \end{pmatrix}$& $\begin{pmatrix}
      \log\frac{p}{b_j} \\
      \log\frac{1-p}{1-a_j}
    \end{pmatrix}$& $\begin{pmatrix}
      (1-b_j)^2-(1-p)^2 \\
      a_j^2-p^2
    \end{pmatrix}$\\
    $p>b_j$ & $\alpha_j\begin{pmatrix}
      0\\\sqrt{\frac{a_j}{1-a_j}}-\sqrt{\frac{b_j}{1-b_j}}
    \end{pmatrix}$& $\begin{pmatrix}
      0 \\ \log\frac{1-b_j}{1-a_j}
    \end{pmatrix}$
    & $\begin{pmatrix}
      0 \\ a_j^2-b_j^2
    \end{pmatrix}$\\
    \hline
  \end{tabular}
\end{center}
  \caption{Liability vectors for bucketing liquidity protocols.\label{tab:buckets}} 
  \vspace{-23pt}
\end{table}

\section{Discussion and open directions} \label{sec:discussion-open-dir}

We have given a general protocol for liquidity provisioning in prediction markets (\S~\ref{sec:protocol-pred-markets}), and more broadly any automated market making setting including decentralized finance.
In a sense that we formalize in \S~\ref{sec:equivalence} and \S~\ref{sec:liquidity_discussion}, this protocol is maximally expressive, though it can be restricted for computational convenience, or LP ease-of-use.
We also provide an impossibility result on the design of trading fees (\S~\ref{sec:fees}).
One avenue for future work is the design of simple liquidity-dependent fees that can work in practice.
Another direction is to further explore specific restrictions, and study the tradeoffs in expressiveness and computation, and the implications for market efficiency.

When implementing Protocol~\ref{alg:general-n-asset-cost} in practice, there is a tradeoff between LPs' expressiveness and the computational cost of running the protocol. 
This tradeoff is a strong consideration in decentralized finance, as all computations must be done on-chain. 
As prediction markets are typically not as constrained, much more expressive protocols are possible.
For example, one could allow LPs to specify their liquidity functions via polynomials of bounded degree, or weighted sums of basis functions, or any computationally convenient class of functions that well approximate any possible liquidity allocation.
In decentralized finance, Protocol~\ref{alg:general-n-asset-dual-price} seems to be the more practically appealing of the two.
Instead of specifying a trade directly, the ``inversion'' from proposed trades to implied prices is handled off-chain by the trader.
The remaining computation, of the implied liability vectors, is more straightforward.
We leave the analysis of further restrictions and tradeoffs to future work.

\pagebreak
\bibliographystyle{plainnat}
\bibliography{refs,cfmms}
\pagebreak
\appendix

\section{Proofs From Section~\ref{sec:equivalence}}
\label{sec:app-proofs-of-equivalence}

We prove in Appendix~\ref{sec:equiv-full-prot} that Protocol~\ref{alg:general-n-asset-dual-price} and Protocol~\ref{alg:general-n-asset-cost} are essentially equivalent under some mild conditions.

\subsection{Technical lemmas}

We begin with some standard facts from convex analysis.
\begin{proposition}[{\citet[Theorem 23.5]{rockafellar1997convex}}]\label{prop:convex-duality}
  Let $f:\reals^n\to\reals\cup\{\infty\}$ be a closed convex function and $f^*$ its conjugate.
  Then for all $\x,\x^*\in\reals^n$ the following are equivalent:
  \begin{enumerate}[itemsep=0pt]
  \item $\x^* \in \partial f(\x)$
  \item $\x \in \partial f^*(\x^*)$
  \item $f(\x) + f^*(\x^*) = \inprod{\x^*}{\x}$
  \end{enumerate}
\end{proposition}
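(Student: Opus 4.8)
The plan is to reduce everything to the Fenchel--Young inequality together with the biconjugation theorem $f^{**}=f$. First I would record the Fenchel--Young inequality: for every $\x,\x^*\in\reals^n$ we have $f(\x)+f^*(\x^*)\ge\inprod{\x^*}{\x}$, which is immediate from $f^*(\x^*)=\sup_{\y}\inprod{\x^*}{\y}-f(\y)\ge\inprod{\x^*}{\x}-f(\x)$. Thus in each of the three statements the real content is that this inequality holds with equality, and the job is to see that tightness is characterized by either subgradient relation.

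Next I would prove $(1)\Leftrightarrow(3)$ directly from the definitions, using no convexity. Unfolding the subgradient definition, $\x^*\in\partial f(\x)$ says $f(\x')-f(\x)\ge\inprod{\x^*}{\x'-\x}$ for all $\x'$, equivalently $\inprod{\x^*}{\x}-f(\x)\ge\inprod{\x^*}{\x'}-f(\x')$ for all $\x'$; taking the supremum over $\x'$, this is exactly $\inprod{\x^*}{\x}-f(\x)\ge f^*(\x^*)$, which combined with Fenchel--Young gives the equality in $(3)$. Conversely, if $(3)$ holds then $\inprod{\x^*}{\x}-f(\x)=f^*(\x^*)\ge\inprod{\x^*}{\x'}-f(\x')$ for every $\x'$, and rearranging recovers the subgradient inequality. (In the degenerate case $f(\x)=+\infty$ one has $\partial f(\x)=\emptyset$, and for proper $f$ statement $(3)$ also fails since $f^*>-\infty$ everywhere, so the equivalence is unproblematic there.)

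Finally, for $(2)\Leftrightarrow(3)$ I would apply the equivalence $(1)\Leftrightarrow(3)$ just established to the closed convex function $g:=f^*$, at the point $\x^*$, with $\x$ as the candidate subgradient. This gives $\x\in\partial f^*(\x^*)$ if and only if $f^*(\x^*)+f^{**}(\x)=\inprod{\x}{\x^*}$. Invoking the Fenchel--Moreau biconjugation theorem---here is where closedness (and properness) of $f$ is used---we have $f^{**}=f$, so this condition is exactly $(3)$, completing the cycle of equivalences.

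The one substantive ingredient, and hence the main obstacle if a self-contained proof is wanted, is the identity $f^{**}=f$. Proving it from scratch requires a separating-hyperplane argument: $\epi f$ is closed and convex, so any point lying strictly below it can be strictly separated by a non-vertical hyperplane (non-verticality uses properness of $f$), and the supremum over the resulting affine minorants is simultaneously $f^{**}$ and $f$. Since the proposition is quoted from \citet{rockafellar1997convex}, the cleanest route in the paper is simply to cite the biconjugation theorem and carry out only the elementary manipulations above.
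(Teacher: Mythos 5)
Your proof is correct and is the standard textbook argument: $(1)\Leftrightarrow(3)$ is an elementary unwinding of the definitions via Fenchel--Young, and $(2)\Leftrightarrow(3)$ follows by applying that equivalence to $f^*$ together with biconjugation $f^{**}=f$, which is exactly where closedness (and properness) enters. The paper does not prove this proposition at all --- it is quoted directly from \citet[Theorem 23.5]{rockafellar1997convex} --- so your write-up supplies precisely the argument the cited source gives, and nothing further is needed.
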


\begin{proposition}\label{prop:infimal-conv}
  Let $f_i:\reals^n\to\reals\cup\{\infty\}$ be convex for $i\in\{1,\ldots,k\}$
  such that $\bigcap_i \relint\dom f_i \neq \emptyset$.
  Let $f = \sum_i f_i$.
  Then $f^* = \bigwedge_i f_i^*$, and the infimum in $\bigwedge$ in the definition of $f^*$ is always attained.
  Moreover, for $\v \in \relint\dom f^*$, and any split $\v = \sum_i \v^i$, we have
  $f^*(\v) = \sum_i f_i^*(\v^i)$ if and only if $\bigcap_i \partial f_i(\v^i) \neq \emptyset$.
\end{proposition}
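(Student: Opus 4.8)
The plan is to prove Proposition~\ref{prop:infimal-conv} by combining the standard Fenchel conjugacy theorem with the subgradient characterization of optimality, handling the three assertions in sequence: (i) the conjugate formula $f^* = \bigwedge_i f_i^*$ with attainment; (ii) the characterization of optimal splits via $\bigcap_i \partial f_i(\v^i) \neq \emptyset$. The key input is Proposition~\ref{prop:convex-duality} (Rockafellar, Thm 23.5), which gives the equivalence of $\x^* \in \partial f(\x)$, $\x \in \partial f^*(\x^*)$, and the Fenchel equality $f(\x) + f^*(\x^*) = \inprod{\x^*}{\x}$.

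\textbf{Step 1: the conjugate formula and attainment.} For the formula $f^* = \bigwedge_i f_i^*$, I would cite the classical result \citet[Theorem 16.4]{rockafellar1997convex}, which states precisely that when $\bigcap_i \relint \dom f_i \neq \emptyset$, the conjugate of the sum $\sum_i f_i$ equals the infimal convolution $\bigwedge_i f_i^*$, and moreover the infimum defining $(\bigwedge_i f_i^*)(\v)$ is attained for every $\v$ (the qualification hypothesis is exactly what guarantees attainment and exactness). So this first part is essentially a direct citation; the only care needed is to match the hypothesis $\bigcap_i \relint\dom f_i \neq \emptyset$ to the form in which Rockafellar states it (he typically phrases it for two functions and iterates, or uses the relative-interior condition directly for finitely many).

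\textbf{Step 2: the characterization of optimal splits.} Fix $\v \in \relint\dom f^*$ and a split $\v = \sum_i \v^i$. I want to show $\sum_i f_i^*(\v^i) = f^*(\v)$ iff $\bigcap_i \partial f_i(\v^i) \neq \emptyset$. For the ``if'' direction: suppose $\x \in \bigcap_i \partial f_i(\v^i)$. By Proposition~\ref{prop:convex-duality} applied to each $f_i$, this means $f_i(\x) + f_i^*(\v^i) = \inprod{\v^i}{\x}$ for every $i$. Summing over $i$ and using $\sum_i \v^i = \v$ gives $\sum_i f_i(\x) + \sum_i f_i^*(\v^i) = \inprod{\v}{\x}$, i.e. $f(\x) + \sum_i f_i^*(\v^i) = \inprod{\v}{\x}$. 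Since $f^*(\v) \geq \inprod{\v}{\x} - f(\x) = \sum_i f_i^*(\v^i)$ and always $f^*(\v) \leq \sum_i f_i^*(\v^i)$ (by definition of infimal convolution, since the split is feasible), we get equality, and as a bonus $\x$ attains $f^*(\v) = \sup_\x \inprod{\v}{\x} - f(\x)$. For the ``only if'' direction: suppose the split is optimal, $\sum_i f_i^*(\v^i) = f^*(\v)$. Since $\v \in \relint\dom f^* = \relint\dom f$ (duality of domains under the qualification — or rather, I should just say $\partial f^*(\v) \neq \emptyset$ because $\v \in \relint\dom f^*$ and $f^*$ is convex), pick $\x \in \partial f^*(\v)$; by Proposition~\ref{prop:convex-duality} this gives $f^*(\v) + f(\x) = \inprod{\v}{\x}$. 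Now I need to push this back to each coordinate: $f(\x) = \sum_i f_i(\x) \geq \sum_i (\inprod{\v^i}{\x} - f_i^*(\v^i)) = \inprod{\v}{\x} - \sum_i f_i^*(\v^i) = \inprod{\v}{\x} - f^*(\v) = f(\x)$, so every inequality $f_i(\x) \geq \inprod{\v^i}{\x} - f_i^*(\v^i)$ must in fact be an equality, i.e. $f_i(\x) + f_i^*(\v^i) = \inprod{\v^i}{\x}$ for every $i$, which by Proposition~\ref{prop:convex-duality} means $\x \in \partial f_i(\v^i)$ for all $i$, so $\x \in \bigcap_i \partial f_i(\v^i)$.

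\textbf{The main obstacle} I anticipate is purely bookkeeping around existence of the subgradient $\x \in \partial f^*(\v)$ in the ``only if'' direction and making sure $f(\x)$ is finite (so that the chain of (in)equalities is legitimate and doesn't involve $\infty - \infty$); this is handled by the hypothesis $\v \in \relint\dom f^*$, which guarantees $\partial f^*(\v) \neq \emptyset$ for a closed proper convex function, and then Fenchel equality forces $f(\x) < \infty$. A secondary subtlety is whether we need $f$ (equivalently each $f_i$) to be closed/proper for Proposition~\ref{prop:convex-duality} to apply; since the $f_i$ in our application are cost functions (finite-valued convex functions on $\reals^n$, hence closed and proper), I would either add ``closed proper'' to the hypotheses or note it is automatic in the intended use. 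Everything else is a short symmetric computation with Fenchel's inequality, so I would keep the writeup to about a paragraph per direction.
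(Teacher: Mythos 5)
Your proposal is correct in substance, and for the conjugate formula and the ``if'' direction it follows essentially the same route as the paper (a citation of \citet[Theorem 16.4]{rockafellar1997convex}, then summing Fenchel equalities over $i$; the paper routes the latter through \citet[Theorem 23.8]{rockafellar1997convex} rather than summing directly, which is cosmetic). Where you genuinely diverge is the ``only if'' direction: the paper obtains the common subgradient by invoking \citet[Theorem 23.4]{rockafellar1997convex} for $\partial f^*(\v)\neq\emptyset$ together with a structural identity of \citet[Theorem 3.6]{stromberg1994study} equating $\partial f^*(\v)$ with the intersection of the $\partial f_i^*(\v^i)$ at an optimal split, whereas you pick $\x\in\partial f^*(\v)$ and force termwise equality in the Fenchel--Young inequalities $f_i(\x)+f_i^*(\v^i)\geq\inprod{\v^i}{\x}$. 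Your version is more elementary and self-contained, and the finiteness bookkeeping you flag is handled exactly as you describe; the closedness/properness of the $f_i$ that Proposition~\ref{prop:convex-duality} requires is likewise left implicit in the paper and holds in every application.

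One point of caution. The printed statement says $\bigcap_i \partial f_i(\v^i)\neq\emptyset$, but the condition the paper actually proves, and the one used downstream in Lemma~\ref{lem:infimal-conv-consistent-price} (where the common point is a price $\p\in\partial C_i(\q^i)=\partial f_i^*(\q^i)$), is $\bigcap_i \partial f_i^*(\v^i)\neq\emptyset$; the statement has a typo. Your Fenchel manipulations are precisely those for $\x\in\partial f_i^*(\v^i)$, equivalently $\v^i\in\partial f_i(\x)$. Read literally, however, $\x\in\partial f_i(\v^i)$ yields $f_i(\v^i)+f_i^*(\x)=\inprod{\x}{\v^i}$, not the identity $f_i(\x)+f_i^*(\v^i)=\inprod{\v^i}{\x}$ that you write down, so your single ``this means'' step is formally a misapplication of Proposition~\ref{prop:convex-duality} against the statement as printed---even though it lands on the intended and correct condition. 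You should either correct the statement to $\partial f_i^*(\v^i)$ or make the translation $\x\in\bigcap_i\partial f_i^*(\v^i)\iff \v^i\in\partial f_i(\x)\ \forall i$ explicit.
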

\begin{proof}
  The first statement follows from \citet[Theorem 16.4]{rockafellar1997convex}; we will prove the second.
  First suppose $f^*(\v) = \sum_i f_i^*(\v^i)$ for $\v\in\relint\dom f^*$ and $\v = \sum_i \v^i$.
  From \citet[Theorem 23.4]{rockafellar1997convex}, $\partial f^*(\v) \neq \emptyset$.
  Now \citet[Theorem 3.6]{stromberg1994study} gives $\bigcap_i \partial f_i^*(\v^i) = \partial f^*(\v) \neq \emptyset$.  \footnote{As needed for that result to apply, the assumption that there exists some $\x \in \bigcap_i \relint\dom f_i$ implies that $f^*_i$ is bounded from below by the same affine function, namely one with gradient $\x$.}

  For the converse, let $\x \in \bigcap_i \partial f_i^*(\v^i)$ and define $\v = \sum_i \v^i$.
  From Proposition~\ref{prop:convex-duality}, $\v^i \in \partial f_i(\x)$ for all $i$.
  Now \citet[Theorem 23.8]{rockafellar1997convex} gives $\v = \sum_i \v^i \in \partial f(\x)$.
  Proposition~\ref{prop:convex-duality} again implies
  gives $f^*(\v) = \inprod{\x}{\v} - f(\x) = \sum_i \inprod{\x}{\v^i} - \sum_i f_i(\x) = \sum_i f^*_i(\v^i)$.
\end{proof}

We now prove several technical lemmas we use in the proof of Theorem~\ref{thm:equivalence-interpretations}.
  
\begin{lemma}\label{lem:ndt-increasing}
  Let $G$ be a pseudobarrier and $C = G^*$.
  Then for all $\q,\hat\q\in\reals^n$ we have $\partial C(\q) \subseteq \relint\Delta_n$  and $\hat\q \nsucceq	 \q \implies C(\hat\q) > C(\q)$.
\end{lemma}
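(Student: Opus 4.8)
The plan is to prove the two claims separately, both flowing from the pseudobarrier hypothesis on $G$ and the conjugate duality $C = G^*$. For the first claim, $\partial C(\q) \subseteq \relint\Delta_n$: I would start from the general fact noted earlier in the paper that $\partial C(\q) \subseteq \Delta_n$ for any cost function $C$ (the dual of a bounded generating function), so it suffices to rule out the relative boundary. Suppose toward contradiction that $\p \in \partial C(\q)$ with $\p$ on the relative boundary of $\Delta_n$. By Proposition~\ref{prop:convex-duality}, $\p \in \partial C(\q)$ is equivalent to $\q \in \partial C^*(\p) = \partial G(\p)$ (identifying $G$ with its $\ones$-invariant/$1$-homogeneous extension so that $C^* = G$). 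But then taking the constant sequence $\p^j \equiv \p$ on the relative boundary — or, more carefully, a sequence $\p^j \in \relint\Delta_n$ converging to $\p$ together with subgradients $\q^j \in \partial G(\p^j)$ chosen to converge to $\q$ (possible since $\q \in \partial G(\p)$ and subgradients are outer semicontinuous) — the pseudobarrier condition forces $\|\q^j\| \to \infty$, contradicting $\q^j \to \q$. So no boundary price is consistent, giving $\partial C(\q) \subseteq \relint\Delta_n$.

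For the second claim, I want to show $\hat\q \nsucceq \q \implies C(\hat\q) > C(\q)$; equivalently, $C(\hat\q) \le C(\q) \implies \hat\q \succeq \q$. First note $C$ is monotone nondecreasing in each coordinate (a standard property: it's convex, $\ones$-invariant, and $\partial C(\q) \subseteq \Delta_n \subseteq \reals_{\geq 0}^n$, so all subgradients have nonnegative entries), hence $\hat\q \succeq \q$ already gives $C(\hat\q) \ge C(\q)$. The content is the strict/contrapositive direction. Suppose $C(\hat\q) \le C(\q)$ but $\hat\q \nsucceq \q$, i.e.\ $\hat q_i < q_i$ for some coordinate $i$. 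Consider the segment $\q_t = (1-t)\q + t\hat\q$ for $t \in [0,1]$, or more simply compare along coordinate directions. The idea: pick $\p \in \partial C(\hat\q)$, which by the first part lies in $\relint\Delta_n$, so $p_j > 0$ for all $j$. Then by the subgradient inequality, $C(\q) - C(\hat\q) \ge \inprod{\p}{\q - \hat\q} = \sum_j p_j(q_j - \hat q_j)$. If we could guarantee the right-hand side is positive we'd be done, but it need not be since some coordinates of $\q - \hat\q$ could be negative. So instead I would use a one-coordinate perturbation: let $\q' = \hat\q + (q_i - \hat q_i)\indvec{i}$ where $q_i - \hat q_i > 0$; then $C(\q') - C(\hat\q) \ge p_i (q_i - \hat q_i) > 0$ strictly because $p_i > 0$, so $C(\q') > C(\hat\q)$. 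Now I need to connect $C(\q')$ back to $C(\q)$: since $\q' \preceq \hat\q + \|\hat\q - \q\|_\infty \indvec{i}$... this needs care. The cleaner route: argue directly that if $\hat q_i < q_i$ then $C(\hat\q) < C(\hat\q + (q_i - \hat q_i)\indvec{i})$, and then bound $C(\q)$ from below by noting $\q \succeq$ some vector obtained from $\hat\q$ by increasing only coordinate $i$ is \emph{not} generally true — so actually the right statement to prove is the contrapositive phrased as: if $C(\hat\q) \ge C(\q)$ then $\hat\q \succeq \q$ fails in general, so I should go back to proving $\hat\q \nsucceq \q \implies C(\hat\q) > C(\q)$ by picking a subgradient $\p \in \partial C(\hat\q) \subseteq \relint\Delta_n$ and the single bad coordinate $i$ with $\hat q_i < q_i$: consider $\q'' := \hat\q$ but replace $\hat\q$-comparison by evaluating $C$ at $\hat\q$ versus at the point $\hat\q$ with coordinate $i$ raised to $q_i$; strict positivity of $p_i$ gives strict increase, and since raising coordinate $i$ up to $q_i$ while leaving the others at $\hat q_j \le \max(\hat q_j, q_j)$... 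The obstacle, and the step I expect to need the most care, is precisely this: reducing the multi-coordinate comparison $\hat\q \nsucceq \q$ to a genuinely strict single-coordinate increase that is dominated by $C(\q)$. I would resolve it by instead proving the cleaner equivalent statement $C(\hat\q) \le C(\q) \implies \hat\q \succeq \q$: assume $\hat q_i < q_i$ for contradiction; by monotonicity $C(\hat\q) \le C(\q)$ combined with $C$ being $\ones$-invariant lets us reduce to the case $\hat\q \preceq \q$ coordinatewise except we've assumed otherwise — so the genuinely robust argument is: $C(\q) \ge C(\max(\q,\hat\q))$ is false in general, hence use the subgradient of $C$ at $\hat\q$, which is strictly positive in coordinate $i$, to get $C(\hat\q + (q_i - \hat q_i)\indvec i) - C(\hat\q) \ge p_i(q_i - \hat q_i) > 0$, and then observe $\hat\q + (q_i - \hat q_i)\indvec i \preceq \hat\q + \sum_{j: q_j > \hat q_j}(q_j - \hat q_j)\indvec j =: \q^\vee$ with $\q^\vee \preceq \q$ NOT holding — which shows the single-coordinate bump alone is the wrong target. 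I therefore expect the actual proof (which the authors will present) to instead induct on coordinates or use a cleverer monotone path; my plan is to set it up as: strict positivity of all entries of subgradients of $C$ (from part one) plus the subgradient inequality along the straight segment from $\hat\q$ to $\hat\q \vee \q$, yielding $C(\hat\q \vee \q) > C(\hat\q)$ whenever $\hat\q \nsucceq \q$, and then combining with monotonicity $C(\q) \ge C(\hat\q \vee \q)$ — wait, that inequality goes the wrong way since $\q \preceq \hat\q \vee \q$. The correct monotonicity use is $C(\q) \le C(\hat\q \vee \q)$, so this still does not immediately give $C(\hat\q) < C(\q)$; the resolution must use $\ones$-invariance to shift, or the first claim more cleverly. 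I would therefore, in the writeup, take the contrapositive, assume $C(\hat\q) \le C(\q)$ and $\hat\q \nsucceq \q$, derive via a strictly-positive subgradient at $\hat\q$ that moving toward $\q$ in the deficient coordinates strictly increases $C$, and then use $\ones$-invariance together with the value constraint to extract the contradiction — this gluing step is the crux I would spend the most effort on.
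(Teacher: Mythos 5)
Your argument for the first claim, $\partial C(\q)\subseteq\relint\Delta_n$, is essentially the paper's: pass to $\q\in\partial G(\p)$ via conjugate duality and contradict the pseudobarrier condition with an interior sequence $\p^j\to\p$ carrying subgradients $\q^j\to\q$. That part is fine (the paper is no more careful than you are about constructing that sequence).

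The second claim is where there is a genuine gap, and its source is a misreading of the hypothesis. You interpret $\hat\q\nsucceq\q$ literally as ``not ($\hat\q\succeq\q$),'' i.e.\ $\hat q_i<q_i$ for some $i$. Under that reading the conclusion $C(\hat\q)>C(\q)$ is simply \emph{false}: take $\hat\q=\q-\epsilon\indvec{1}$, which fails to dominate $\q$ yet has $C(\hat\q)\leq C(\q)$ by monotonicity. This is why every route you try (the single-coordinate bump, the contrapositive $C(\hat\q)\leq C(\q)\Rightarrow\hat\q\succeq\q$, the join $\hat\q\vee\q$) runs into an inequality pointing the wrong way --- you even flag the symptom yourself when you note that $\inprod{\p}{\q-\hat\q}$ ``need not be positive since some coordinates could be negative.'' The intended meaning, which is unambiguous from how the lemma is invoked in the proof of Theorem~\ref{thm:equivalence-interpretations} (there $\hat\r=\r+\alpha\ones$ with $\alpha>0$) and from the paper's own proof, is $\hat\q\gneqq\q$ in the notation of the paper: $\hat q_y\geq q_y$ for all $y$ with at least one inequality strict. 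With that hypothesis the argument is a one-liner and uses the first claim exactly as you suspected it should be used: pick $\p\in\partial C(\q)$ (nonempty since $G$ is bounded on $\Delta_n$, hence $C$ is finite everywhere), which by the first part lies in $\relint\Delta_n$, so $\p\succ\0$; then the subgradient inequality at $\q$ (not at $\hat\q$) gives $C(\hat\q)-C(\q)\geq\inprod{\p}{\hat\q-\q}>0$, since $\hat\q-\q\gneqq\0$ and every coordinate of $\p$ is strictly positive. None of your proposed repairs converges to this because they all try to salvage the literal reading; once the hypothesis is read correctly, no induction on coordinates, monotone path, or $\ones$-invariance shift is needed.
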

\begin{proof}
  Let $\p \in \partial C(\q)$.
  By Proposition~\ref{prop:convex-duality}, $\q \in \partial G(\p)$.
  If $\p \notin \relint\Delta_n$, then we have an interior sequence $\{\p^j\}_j$ such that $\p^j \to_j \p$ and subgradients $\q^j \to_j \q$, violating the definition of a pseudobarrier.
  For this $\p$, we have $p_y > 0$ for all $y \in \{1,\ldots,n\}$.
  As $\hat\q \nsucceq \q$, we have $\hat q_y \geq q_y$ for all $y$, with at least one inequality strict.
  Thus by the subgradient inequality, $C(\hat\q) - C(\q) \geq \inprod{\p}{\hat\q - \q} > 0$, as desired.  
\end{proof}

The following lemma is essentially a restatement of results due to \citet{ovcharov2018proper,ovcharov2015existenc}.
It says that subgradients of generating functions $G$ are unique modulo $\ones$.
\begin{lemma}\label{lem:smooth-G-unique-subgradients-mod-ones}
  Let $G$ be a generating function.
  Then for all $p\in\Delta_n$, and all $\q,\hat{\q} \in \partial G(\p)$, there exists $\alpha \in \reals$ such that $\hat{\q} = \q + \alpha\ones$.
\end{lemma}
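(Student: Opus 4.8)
The plan is to use the 1-homogeneous extension $\overline G$ as the bridge. Recall that a generating function $G:\Delta_n\to\reals$ is by definition smooth, meaning $\overline G:\reals^n_{\geq 0}\to\reals$ is differentiable. The key observation is that subgradients of $G$ at $\p\in\Delta_n$ and subgradients (hence, by differentiability, the gradient) of $\overline G$ at $\p$ are tightly related: since $\overline G$ agrees with $\|\x\|_1\,G(\x/\|\x\|_1)$, a subgradient $\q\in\partial G(\p)$ extends to a subgradient of $\overline G$ at $\p$ after adjusting the component along $\ones$, and conversely $\nabla\overline G(\p)$ restricts to a subgradient of $G$ at $\p$. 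So first I would make precise the claim: for $\p\in\relint\Delta_n$ (and then separately handle the relative boundary), $\q\in\partial G(\p)$ if and only if $\q=\nabla\overline G(\p)+\alpha\ones$ for some $\alpha\in\reals$. Once this correspondence is established, uniqueness modulo $\ones$ of $\partial G(\p)$ follows immediately from uniqueness of the gradient $\nabla\overline G(\p)$, which exists because $\overline G$ is differentiable.

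The main work is establishing that correspondence carefully. In one direction: if $\q\in\partial G(\p)$, I would use the subgradient inequality on $\Delta_n$ together with 1-homogeneity of $\overline G$ to show that $\q+\beta\ones$, for the right choice of $\beta$ depending on $\langle\q,\p\rangle$ and $\overline G(\p)=G(\p)$, is a subgradient of $\overline G$ at $\p$ in $\reals^n$; concretely, for any $\x\succeq\0$, writing $\x=t\p'$ with $t=\|\x\|_1$ and $\p'\in\Delta_n$, I would check $\overline G(\x)-\overline G(\p)\geq\langle\q+\beta\ones,\x-\p\rangle$ reduces (using $\langle\ones,\x-\p\rangle=t-1$) to the scaled version of the simplex subgradient inequality. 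Since $\overline G$ is differentiable, that subgradient must equal $\nabla\overline G(\p)$, so $\q=\nabla\overline G(\p)-\beta\ones$. In the other direction: $\nabla\overline G(\p)$ restricted to directions tangent to the simplex (i.e.\ summing to zero) gives back a subgradient of $G$, again by the definition of $\overline G$ as the 1-homogeneous extension. The boundedness of $G$ on $\Delta_n$ guarantees $\partial G(\p)\neq\emptyset$ so the statement is not vacuous.

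I expect the main obstacle to be the behavior on the relative boundary of $\Delta_n$, where the reparametrization $\x\mapsto(\|\x\|_1,\x/\|\x\|_1)$ still works but one must be slightly more careful that the perturbations $\x'\in\Delta_n$ used to test the subgradient inequality can approach $\p$ from within the simplex while remaining feasible; this is where citing \citet{ovcharov2018proper,ovcharov2015existenc} directly may be cleanest, since they handle exactly the existence and essential uniqueness of such "tangent" representations for proper scoring rules. An alternative, cleaner route that avoids boundary subtleties is to argue purely via the conjugate: differentiability of $\overline G$ is equivalent to strict convexity of $\overline G^*$ on the relevant domain, and two subgradients $\q,\q'\in\partial G(\p)$ both lie in $\partial\overline G(\p)$ up to $\ones$-shifts, so they must coincide up to $\ones$; but this essentially repeats the argument above. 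Either way, the crux is the $\overline G$-to-$G$ subgradient dictionary, and once that is in hand the conclusion is one line.
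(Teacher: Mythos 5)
Your proposal is correct and follows essentially the same route as the paper: shift $\q$ by $\beta\ones$ with $\beta = G(\p)-\inprod{\q}{\p}$ so that it becomes a subgradient of the 1-homogeneous extension $\overline G$ at $\p$ (verified by writing $\x = \|\x\|_1\cdot(\x/\|\x\|_1)$ and applying the simplex subgradient inequality), then invoke differentiability of $\overline G$ to conclude $\partial\overline G(\p)$ is a singleton. The boundary case needs no separate treatment, since the subgradient inequality for $G$ is only ever tested at points $\x/\|\x\|_1\in\Delta_n$, which is valid whether or not $\p\in\relint\Delta_n$.
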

\begin{proof}
  Let $\q \in \partial G(\p)$.
  We will show $\hat\q \in \partial \overline G(\p)$, where $\hat\q = \q + (G(\p) - \inprod{\q}{\p})\ones$.
  By construction, $\inprod{\hat\q}{\p} = \inprod{\q}{\p} + G(\p) - \inprod{\q}{\p} = G(\p)$.
  For all $\x\in\reals^n_{\geq 0} \setminus \{\0\}$, we have
  \begin{align*}
    \overline G(\x)
    &= \|\x\|_1 G(\x/\|\x\|_1)
    \\
    &\geq \|\x\|_1 \left(G(\p) + \inprod{\q}{\x/\|\x\|_1 - \p}\right)
    \\
    &= \|\x\|_1 \left(G(\p) + \inprod{\hat\q}{\x/\|\x\|_1 - \p}\right)
    \\
    &= \|\x\|_1 \left(\inprod{\hat\q}{\x/\|\x\|_1} + G(\p) - \inprod{\hat\q}{\p}\right)
    \\
    &= \inprod{\hat\q}{\x}
    \\
    &= \overline G(\p) - \inprod{\hat\q}{\p} + \inprod{\hat\q}{\x}
    \\
    &= \overline G(\p) + \inprod{\hat\q}{\x - \p}~.
  \end{align*}
  Thus, every element of $\partial G(\p)$, up to a shift by $\ones$, is an element of $\partial \overline G(\p)$.
  As the latter is a singleton set, by assumption on $G$, the result follows.
\end{proof}

\begin{lemma}
  \label{lem:infimal-conv-consistent-price}
  Let $C = \bigwedge_i C_i$ where $C^*_i$ are generating functions.
  
  Then for all $\q\in\reals^n$, the infimum in the definition of $C(\q)$ is attained, and $C(\q) = \sum_i C_i(\q^i)$ if and only there exists $\p\in\Delta_n$ such that $p \in \partial C_i(\q^i)$ for all $i$.
\end{lemma}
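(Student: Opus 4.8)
The plan is to obtain this lemma as an essentially immediate corollary of Proposition~\ref{prop:infimal-conv}, applied to the \emph{conjugates} of the $C_i$ rather than to the $C_i$ themselves. Concretely, I would set $f_i := C_i^{*} = G_i$, so that $f := \sum_i f_i = \sum_i G_i$. Since each $G_i$ is a generating function it is convex, and its domain agrees with $\Delta_n$ up to relative-boundary points, so $\bigcap_i \relint\dom f_i = \relint\Delta_n \neq \emptyset$ and the hypothesis of Proposition~\ref{prop:infimal-conv} is met. Because each $C_i$ is a conjugate function it is closed, convex and proper --- indeed finite everywhere, since $G_i$ is bounded on the compact set $\Delta_n$ --- so $C_i^{**} = C_i$, and therefore $f^{*} = \bigwedge_i f_i^{*} = \bigwedge_i C_i = C$, matching the lemma's definition of $C$.

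Next I would read off the two conclusions. First, Proposition~\ref{prop:infimal-conv} says the infimum in $\bigwedge$ is \emph{always} attained, which is the first half of the lemma; here I would also note that $C$ is finite everywhere --- bounded above by $\sum_i C_i$ evaluated at the trivial split, and bounded below by the affine function $\q \mapsto \inprod{\p}{\q} - \sum_i G_i(\p)$ for any fixed $\p \in \relint\Delta_n$ --- so $\relint\dom C = \reals^n$ and the hypothesis ``$\q \in \relint\dom f^{*}$'' in the proposition imposes nothing, covering all $\q \in \reals^n$ as required. Second, for every split $\q = \sum_i \q^i$, the proposition gives $C(\q) = \sum_i C_i(\q^i)$ if and only if $\bigcap_i \partial C_i(\q^i) \neq \emptyset$ (its proof establishes exactly this intersection-of-subgradients criterion, with the subgradients taken of the $f_i^{*} = C_i$).

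Finally I would upgrade ``$\bigcap_i \partial C_i(\q^i) \neq \emptyset$'' to ``there exists $\p \in \Delta_n$ with $\p \in \partial C_i(\q^i)$ for all $i$''. One direction is trivial; for the other I would invoke the standard fact, used throughout \S~\ref{sec:amms}, that $\partial C_i(\q) \subseteq \Delta_n$ for every $\q$ --- a consequence of $C_i = G_i^{*}$ being $\ones$-invariant and monotone nondecreasing, the latter since $G_i$ is supported on $\Delta_n \subseteq \reals^n_{\geq 0}$ --- so any point of $\bigcap_i \partial C_i(\q^i)$ automatically lies in $\Delta_n$. I do not expect a genuine obstacle here; the only points requiring care are directing Proposition~\ref{prop:infimal-conv} at the $C_i^{*}$ rather than the $C_i$, checking finiteness of $C$ so that the relative-interior restriction disappears, and using $\partial C_i \subseteq \Delta_n$ to pass to the price-vector formulation.
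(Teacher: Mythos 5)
Your proposal is correct and follows essentially the same route as the paper, which proves this lemma by directly invoking Proposition~\ref{prop:infimal-conv} with $f_i = C_i^* = G_i$, noting $\dom G_i = \Delta_n$ and $\dom C_i = \reals^n$. Your additional checks (finiteness of $C$ so the $\relint\dom f^*$ restriction is vacuous, and $\partial C_i(\q^i)\subseteq\Delta_n$ to land the common subgradient in the simplex) are exactly the details the paper leaves implicit.
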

\begin{proof}
  We have $\dom C_i^* = \Delta_n$ and $\dom C_i = \reals^n$ for all $i$, so Proposition~\ref{prop:infimal-conv} applies.
  
\end{proof}

\begin{lemma}
  \label{lem:cost-scoring-duality}
  Let $C = G^*$ where $G$ is a generating function.
  For $\p\in\relint\Delta_n$, let $S(\p,y) = G(\p) - \inprod{\dG_\p}{\indvec{y} - \p}$, where $\{\dG_\p\in\partial G(\p) \mid \p\in\relint\Delta_n\}$ is a selection of subgradients.
  
  Then for all $\p\in\relint\Delta_n$ and $\q\in\reals^n$, we have $(\p \in \partial C(\q) \wedge C(\q) = 0) \iff \q = S(\p,\cdot)$.
  In particular,
  $\{S(\p,\cdot) \mid \p\in\relint\Delta_n\} = \{\q \in C^{-1}(0) \mid \partial C(\q) \cap \relint\Delta_n \neq \emptyset\}$.

\end{lemma}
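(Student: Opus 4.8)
The plan is to prove the biconditional $(\p \in \partial C(\q) \wedge C(\q) = 0) \iff \q = S(\p,\cdot)$ by unwinding the definitions through convex duality (Proposition~\ref{prop:convex-duality}) and the fact, established in Lemma~\ref{lem:smooth-G-unique-subgradients-mod-ones}, that subgradients of a generating function are unique modulo $\ones$. First I would record the two facts that make everything work: (i) $C = G^*$ with $G$ closed convex (it is bounded on $\Delta_n$ and $+\infty$ elsewhere), so $C$ is $\ones$-invariant — indeed $C^* = G$ has domain $\Delta_n \subseteq \{\p : \inprod{\p}{\ones}=1\}$, which forces $C(\q+\alpha\ones) = C(\q)+\alpha$; and (ii) by Proposition~\ref{prop:convex-duality}, $\p \in \partial C(\q) \iff \q \in \partial G(\p) \iff C(\q) + G(\p) = \inprod{\p}{\q}$.

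For the forward direction, suppose $\p \in \partial C(\q)$ and $C(\q) = 0$. From (ii), $\q \in \partial G(\p)$ and $\inprod{\p}{\q} = C(\q) + G(\p) = G(\p)$. By Lemma~\ref{lem:smooth-G-unique-subgradients-mod-ones}, $\q$ and the selected subgradient $\dG_\p$ differ by a multiple of $\ones$, say $\q = \dG_\p + \beta\ones$. Taking inner products with $\p$ (using $\inprod{\p}{\ones}=1$) gives $\beta = \inprod{\p}{\q} - \inprod{\p}{\dG_\p} = G(\p) - \inprod{\dG_\p}{\p}$, so $\q = \dG_\p + (G(\p) - \inprod{\dG_\p}{\p})\ones$, which is exactly the coordinate expansion of $S(\p,\cdot)$ given in \S~\ref{sec:scoring-rules} (writing $S(\p,y) = G(\p) + \inprod{\dG_\p}{\indvec{y}-\p}$). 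For the converse, set $\q := S(\p,\cdot) = \dG_\p + (G(\p) - \inprod{\dG_\p}{\p})\ones$. Since $\dG_\p \in \partial G(\p)$ and $G$ is $\ones$-translation-compatible in the sense just discussed — more carefully, one shows directly that $S(\p,\cdot) \in \partial\overline G(\p)$, which is essentially the computation already carried out in the proof of Lemma~\ref{lem:smooth-G-unique-subgradients-mod-ones} — we get $\q \in \partial G(\p)$, hence $\p \in \partial C(\q)$ by (ii). Finally $C(\q) = \inprod{\p}{\q} - G(\p) = \inprod{\dG_\p}{\p} + (G(\p) - \inprod{\dG_\p}{\p}) - G(\p) = 0$.

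The ``in particular'' claim then follows by quantifying over $\p$. The inclusion $\{S(\p,\cdot) : \p\in\relint\Delta_n\} \subseteq \{\q \in C^{-1}(0) : \partial C(\q)\cap\relint\Delta_n \neq\emptyset\}$ is immediate from the converse direction just proved (each $S(\p,\cdot)$ lies in $C^{-1}(0)$ and has $\p$ as a subgradient). For the reverse inclusion, take $\q \in C^{-1}(0)$ with some $\p \in \partial C(\q)\cap\relint\Delta_n$; the forward direction gives $\q = S(\p,\cdot)$. I expect the main obstacle to be the bookkeeping around the $\ones$-shift: one must be careful that the selection $\dG_\p$ is an arbitrary (not necessarily ``normalized'') subgradient, so the identity $\q = S(\p,\cdot)$ holds on the nose only because the additive constant is pinned down by pairing with $\p$ — and the argument that $S(\p,\cdot)$ genuinely belongs to $\partial\overline G(\p)$ (equivalently $\partial G(\p)$) reuses the chain of inequalities in Lemma~\ref{lem:smooth-G-unique-subgradients-mod-ones}, so I would cite that computation rather than repeat it.
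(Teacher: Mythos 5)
Your proof is correct and follows essentially the same route as the paper's: both directions go through Proposition~\ref{prop:convex-duality} together with Lemma~\ref{lem:smooth-G-unique-subgradients-mod-ones} to pin down $\q$ modulo $\ones$, with the additive constant fixed by $C(\q)=0$ (you recover it by pairing with $\p$, the paper by $\ones$-invariance of $C$ --- an immaterial difference). Your parenthetical correction of the sign in the lemma's displayed formula for $S(\p,y)$, so that it matches the expansion $S(\p,\cdot)=\dG_\p+(G(\p)-\inprod{\dG_\p}{\p})\ones$ used everywhere else, is also the right reading.
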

\begin{proof}
  Let $\q = S(\p,\cdot)$.
  Then $C(\q) = C(\dG_\p + (G(\p) - \inprod{\dG_\p}{\p})\ones) = G(\dG_p) + G(\p) - \inprod{\dG_\p}{\p} = 0$
  by Proposition~\ref{prop:convex-duality}.
  Furthermore, by the same theorem, $\dG_\p \in \partial G(\p) \iff \p \in \partial C(\dG_\p)$, and by $\ones$-invariance, $\partial C(\dG_\p) = \partial C(S(\p,\cdot))$.
  Thus, $\p \in \partial C(\q)$.

  Now let $\q$ such that $C(\q) = 0$, and take $\p \in \partial C(\q)$; we will show $\q = S(\p,\cdot)$.
  By Proposition~\ref{prop:convex-duality}, $\q \in \partial G(\p)$.
  From Lemma~\ref{lem:smooth-G-unique-subgradients-mod-ones}, we have $\partial G(\p) = \{\dG_\p + \alpha\ones \mid \alpha \in\reals\}$.
  Thus $\q = \dG_\p + \alpha\ones$ for some $\alpha\in\reals$.
  Now $0 = C(\q) = C(\dG_\p + \alpha\ones) = C(\dG_\p) + \alpha = \inprod{\dG_\p}{\p} - G(\p) + \alpha$ by Proposition~\ref{prop:convex-duality}.
  Thus $\alpha = G(\p) - \inprod{\dG_\p}{\p}$, and we have $\q = \dG_\p + (G(\p) - \inprod{\dG_\p}{\p})\ones = S(\p,\cdot)$, as desired.
  
\end{proof}

\subsection{Proof of Theorem~\ref{thm:equivalence-interpretations}}
      
\begin{proof}
  We will show that 1 and 5 are each equivalent to 2, 3 is equivalent to 1, and 4 is equivalent to 1, and 3 is equivalent to 4.
  For each, let $\{\q^i\}_i$ be the current market state, $\q = \sum_i \q^i$, $\p$ a consistent price, and $C = \bigwedge_i C_i$.
  From Lemma~\ref{lem:infimal-conv-consistent-price}, price consistency implies
  $C(\q) = \sum_i C_i(\q^i)$, i.e., the $\q^i$ vectors achieve the infimum in the definition of the infimal convolution.
  \begin{enumerate}\setlength{\itemsep}{4pt}
  \item[($1\leftrightarrow 2$)]
    A trade for 2 satisfies the conditions for 1, so we only need to show that this choice is Pareto optimal for the trader; coherence will then follow by Lemma~\ref{lem:infimal-conv-consistent-price}.
    More formally, let $\{\r^i\}_i$ satisfy $C_i(\q^i + \r^i) = C_i(\q^i)$ for all $i$.
    Letting $\r = \sum_i\r^i$, from the definition of infimal convolution, $C(\q + \r) \leq \sum_i C_i(\q^i + \r^i) = \sum_i C_i(\q^i) = C(\q)$.
    We wish to show that $\r$ is Pareto optimal if and only if $C(\q + \r) = C(\q)$, or equivalently, that $\r$ is Pareto suboptimal if and only if $C(\q + \r) < C(\q)$.

    Suppose first that we had some $\hat \r \nsucceq \r$ such that $C_i(\q^i + \hat\r^i) = C_i(\q^i)$ where $\hat\r = \sum_i\hat\r^i$.
    From the same argument as above, we have $C(\q + \hat\r) \leq \sum_i C_i(\q^i + \hat\r^i) = \sum_i C_i(\q^i) = C(\q)$.
    By Lemma~\ref{lem:ndt-increasing}, $C(\q + \hat\r) > C(\q + \r)$, giving $C(\q + \r) < C(\q)$.
    Conversely, suppose $C(\q + \r) \neq C(\q)$, which from the inequality above implies $C(\q + \r) < C(\q)$.
    Let $\hat\r = \r + (C(\q) - C(\q+\r))\ones \nsucceq \r$.
    By $\ones$ invariance, we have $C(\q + \hat \r) = C(\q)$.
    From part (3) of the proof below, there exists a split $\hat \r = \sum_i \hat \r^i$ such that of $C_i(\q^i + \hat \r^i) = C_i(\q^i)$.
    Thus, $\r$ was not a Pareto-optimal total trade.
  \item[($2\leftrightarrow 5$)]
    Let $\r$ be a trade from interpretation 2, so that $C(\q + \r) = C(\q)$ where $C = \bigwedge_i C_i$.
    Set $\v = \r$ and $\alpha = 1$ for interpretation 5.
    Let $\alpha_i \geq 0$ be the amount of $\r$ purchased from cost function $i$, and $\beta_i \in \reals$ the total cost, so that the net trade from cost function $i$ is $\r^i = \alpha_i \r - \beta_i \ones$.
    Then we have $\sum_i \alpha_i = 1$, so that the net trade is $\r - \beta \ones$ where $\beta = \sum_i \beta_i$.
    By definition of $\r^i$, the cost of trades, and the $\ones$-invariance of $C_i$, we have $C_i(\q^i + \r^i) = C_i(\q^i)$.

    If $\beta = 0$ we are done.
    Otherwise, as $\r$ is Pareto optimal from part (1) above, we must have $\beta > 0$.

    From part (3) of the proof below, there exists a set of trades $\{\hat \r^i\}_i$ with $\sum_i \hat\r^i = \r$ such that $C_i(\q^i + \hat \r^i) = C_i(\q^i) = C_i(\q^i + \r^i - \beta_i\ones)$ for all $i$.
    Thus, from state $\{\hat{q}^i\}_i := \{\q^i + \r^i - \beta_i\ones\}_i$, the trades $\{\hat{r}^i\}_i := \{\hat \r^i - \r^i\}_i$ are an arbitrage, with $\sum_i \hat{r}^i = \r - \r = \0$ and net cost
    \begin{align*}
      \sum_i C_i(\hat{q}^i + \hat{r}^i) - C_i(\hat{q}^i)
      &= \sum_i C_i(\q^i + \hat \r^i - \beta_i\ones) - C_i(\q^i + \r^i - \beta_i\ones)
      \\
      &= \sum_i C_i(\q^i + \hat \r^i - \beta_i\ones) - C_i(\q^i + \hat \r^i)
      \\
      &= -\sum_i \beta_i = -\beta < 0~.
    \end{align*}

    For the converse, let $\r$ be any net trade from the continuous trading process, and $\beta \leq 0$ the optimal net cost from any arbitrage.
    By part (1) and the argument above, we have $\beta = C(\q) - C(\q+\r)$;
    taking $\hat\r = \r + (C(\q) - C(\q+\r))\ones$,
    
    we again split $\hat{\r}$ into $\{\hat{\r}^i\}_i$ and construct an arbitrage $\{\r^i\,'\}_i := \{\hat{\r}^i - \r^i\}_i$ which achieves net cost $C(\q + \r) - C(\q) = -\beta$.
    
  \item[($3\leftrightarrow 1$)]
    Let $\r$ such that $C(\q + \r) = C(\q)$.
    We must show that there exist $\{\r^i\}_i$ such that $C_i(\q^i + \r^i) = C_i(\q^i)$ and $\r = \sum_i \r^i$.
    From the definition of infimal convolution, $C(\q + \r) = \inf \{\sum_i C_i(\v^i) \mid \sum_i \v^i = \q+\r\}$.
    By Lemma~\ref{lem:infimal-conv-consistent-price}, this infimum is attained by some $\{\v^i\}_i$.
    Define $\r^i := \v^i - \q^i + (C_i(\q^i) - C_i(\v^i))\ones$.
    For the first condition, $C_i(\q^i + \r^i) = C_i(\v^i  + (C_i(\q^i) - C_i(\v^i))\ones) = C_i(\q^i)$.
    For the second,
    \begin{align*}
      \sum_i \r^i
      &= \sum_i \v^i - \sum_i \q^i + \sum_i (C_i(\q^i) - C_i(\v^i))\ones
      \\
      &= (\q + \r) - \q + \left(\sum_i C_i(\q^i) - \sum_i C_i(\v^i)\right)\ones
      \\
      &= \r + (C(\q) - C(\q + \r))\ones = \r~.
    \end{align*}
    Coherence again follows from Lemma~\ref{lem:infimal-conv-consistent-price}.
  \item[($4\leftrightarrow 1$)]
    We will show equivalence to interpretation 1.
    
    Let $\alpha_i = C_i(\q^i)$ for all $i$, so that $C(\q^i - \alpha\ones) = 0$.
    By Lemma~\ref{lem:cost-scoring-duality}, we may therefore write $\q^i = S(\p^i,\cdot) + \alpha_i \ones$.
    From Lemma~\ref{lem:infimal-conv-consistent-price}, we again have $C(\q) = \sum_i C(\q^i)$.
    From Lemma~\ref{lem:cost-scoring-duality} again, and $\ones$-invariance,
    \begin{align*}
      &\{\r^i\in\reals^n \mid C_i(\q^i+\r^i) = C_i(\q^i)\}
      \\
      &= \{\r^i\in\reals^n \mid C_i(\q^i+\r^i - \alpha_i\ones) = C_i(\q^i - \alpha_i\ones)\}
      \\
      &= \{\r^i\in\reals^n \mid C_i(S_i(\p^i,\cdot)+\r^i) = C_i(S_i(\p^i,\cdot))\}
      \\
      &= \{\r^i\in\reals^n \mid C_i(S_i(\p^i,\cdot)+\r^i) = 0\}
      \\
      &= \{\hat\q^i - S(\p^i,\cdot) \mid C_i(\hat\q^i) = 0\}
      \\
      &= \{S(\hat\p^i,\cdot) - S(\p^i,\cdot) \mid \hat\p^i\in\relint\Delta_n\}~.
    \end{align*}
    We conclude that the possible trades $\{\r^i\}_i$ in interpretation 1, such that $C_i(\q^i + \hat\r^i) = C_i(\q^i)$ for all $i$, are exactly the same as the trades $\{S_i(\hat\p^i,\cdot) - S_i(\p^i,\cdot)\}_i$ allowed in interpretation 4; we have simply reparameterized the trades by $\{\hat\p^i\}_i$.

    \item[($4\leftrightarrow 3$)] 
    We will show equivalence to interpretation 3.
    In interpretation 3, suppose $k$ market makers use scoring rules $S_{G_i}(\p,\cdot)$ for some corresponding set of generating functions $\{G_i\}_{i=1}^k$.
    Assuming a coherent market state, all market makers maintain the same initial price $\p$.
    The trader chooses a price $\hat{p}$, and receives the trade $\r^i=S_{G_i}(\hat{\p},\cdot)-S_{G_i}(\p,\cdot)$ from market maker $i$.
    Therefore, for a given initial price $\p$ and final price $\hat{\p}$, the trader receives the trade
    \[
    \r = \sum_{i=1}^k \r^i=\sum_{i=1}^k S_{G_i}(\hat{\p},\cdot)-S_{G_i}(\p,\cdot)
    \]
    Let $G=\sum_{i=1}^k G_i$, and let $S_G(\p,\cdot)$ be the corresponding scoring rule.
    It follows that $S_G(\p,\cdot)=\sum_{i=1}^k S_{G_i}(\p,\cdot)$.
    Now, considering interpretation 4, let the market maker pick generating function $G$ and thus scoring rule $S_G$, and the same initial price $\p$ as above.
    A trader chooses a new price $\hat{p}$, and receives trade $\r=S_G(\hat{\p},\cdot)-S_G(\p,\cdot)$, the same as in interpretation 3.
  \end{enumerate}
  \vspace*{-17pt}
\end{proof}

\subsection{Equivalence of the full protocols and practical considerations}
\label{sec:equiv-full-prot}

Theorem~\ref{thm:equivalence-interpretations} tells us that the process of trading in Protocols~\ref{alg:general-n-asset-cost} and~\ref{alg:general-n-asset-dual-price} are the same.
The equivalence of the rest of the protocol follows from Lemma~\ref{lem:cost-scoring-duality}, as $\liabilityf(C) = S_G(\p,\cdot)$ where $\p\in\partial C(\q)$.

The two-outcome case is similar; we need only verify the translation from $G$ and $C$ to thei 1-dimensional counterparts.
Letting $G(\p) = g(p_1)$, we have $\overline G(\x) = (x_1+x_2) g(x_1/(x_1+x_2))$ which is differentiable.
Lemma~\ref{lem:smooth-G-unique-subgradients-mod-ones} now gives $\partial G(\p) = \{(g'(p_1),0) + \alpha\ones \mid \alpha \in \reals\}$ and thus $S_G(\p,\cdot) = \dG_\p + (G(\p) + \inprod{\dG_\p}{\p})\ones = (g'(p_1),0) + (g(p_1) - p_1 g'(p_1))\ones = \liabilityf(g,p_1)$.
Computing the conjugate, we have
\begin{align*}
  G^*(\q)
  &= \sup_{\p\in\Delta_2} \inprod{\p}{\q} - G(\p)
  \\
  &= \sup_{p\in[0,]1} p q_1 + (1-p) q_2 - g(p)  
  \\
  &= \left(\sup_{p\in[0,1]} p (q_1 - q_2) - g(p)\right) + q_2
  \\
  &= g^*(q_1 - q_2) + q_2~,
\end{align*}
as desired.
Finally, to verify $\pricef(\cdot)$, note that $c' = (g')^{-1}$ whenever both derivatives are defined.
By assumption on $\Ginit$, any argument to $\pricef$ is both differentiable and strictly convex, and thus $c$ is differentiable.
We have now established Proposition~\ref{prop:equivalence-two-asset-general}.

\section{Proofs and additional work from Section~\ref{sec:fees}} \label{sec:fees-appendix}

\subsection{Worked example for Uniswap fees} \label{sec:app-uniswap-fee}

Consider a market with two LPs $G=G_1+G_2$ where $G_1(\p)=-2\sqrt{p_1p_2}$ and $G_2(\p)=-2\sqrt{p_2p_3}$. 
Per Protocol~\ref{alg:general-n-asset-dual-price}, a trade $\p \to\ \hat{\p}$ is given by $\r = S_G(\hat{\p},\cdot) - S_G(\p,\cdot)$, where
\begin{align*}
  S_G(\p,\cdot) = \nabla \overline G(\p) = \nabla \overline G_1(\p) + \nabla \overline G_2(\p) = \left(\sqrt{p_1/p_2},\sqrt{p_2/p_1}+\sqrt{p_2/p_3},\sqrt{p_3/p_2}\right)~.
\end{align*}
The trade is then split among the two LPs, as
\begin{align*}
  \r^1 &= \nabla G_1(\hat{\p}) - \nabla G_1(\p) = \left(\sqrt{\hat{p}_1/\hat{p}_2}-\sqrt{p_1/p_2},\sqrt{\hat{p}_2/\hat{p}_1}-\sqrt{p_2/p_1},0\right)~,
  \\
  \r^2 &= \nabla G_2(\hat{\p}) - \nabla G_2(\p) = \left(0,\sqrt{\hat{p}_2/\hat{p}_3}-\sqrt{p_2/p_3},\sqrt{\hat{p}_3/\hat{p}_2}-\sqrt{p_3/p_2}\right)~.
\end{align*}
Start the market at the uniform price $\p = (1/3,1/3,1/3)$, and consider a trader wishing to purchase security 1 in exchange for security 3, as above.
Intuitively, as there is liquidity between securities 1 and 2 (provided by LP 1) and between securities 2 and 3 (provided by LP 2), there should be ``combined'' liquidity between 1 and 3.
And indeed that is the case: if the trader selects $\hat{\p} = \left( \frac{3/2 + \sqrt{2}}{3 + \sqrt{2}}, \frac{1}{3 + \sqrt{2}}, \frac{1}{2(3 + \sqrt{2})} \right)$, an expression chosen for arithmetic convenience, we have a resulting trade $\r = \nabla \overline G(\hat{\p}) - \nabla \overline G(\p) = (\sqrt{3/2}-1,0,\sqrt{1/2}-1)$.
The split $\r = \r^1 + \r^2$ between the LPs is also roughly as one would expect, each $\r^i$ being between the corresponding pair of securities:
\begin{align*}
  \r^1 &= \nabla G_1(\hat{\p}) - \nabla G_1(\p) = \left(0,\sqrt{2}-1,\sqrt{1/2}-1\right)~,
  \\
  \r^2 &= \nabla G_2(\hat{\p}) - \nabla G_2(\p) = \left(\sqrt{3/2}-1,1-\sqrt{2},0\right)~.
\end{align*}

Using Uniswap's fee, i.e., $\fee(\r,\q)=\beta (-\r)_+$, $\fee_i(\r,\q)=\beta (-\r^i)_+$, presents an issue.
The fee charged to the trader, $\beta(-\r)_+ = \beta(0,0,1-\sqrt{1/2})$, ignores the fact that LP 2 provided liquidity that facilitated the trade!
Indeed, looking at the fees paid to LPs, we see this same fee $\beta(-\r^1)_+ = \beta(0,0,1-\sqrt{1/2})$ paid to LP 1, plus an additional fee of $\beta(-\r^2)_+ = \beta(0,\sqrt{2}-1,0)$ to LP 2.

Fortunately, this issue is not present in 2-asset protocols like Protocol~\ref{alg:general-two-asset-g}, but clearly it can emerge beyond 2 assets/outcomes.
Moreover, it seems to emerge precisely when there is ``synergy'' among the LPs, enabling trades that fruitfully combine their liquidity.
While one could easily fix this issue by directly charging the trader for the sum of the fees to the LPs, doing so may be problematic.
For example, this proposal would break the abstraction barrier, in the sense that the fees would depend intimately on the LP profile, not just their combined liquidity.

\subsection{Proof of Lemma~\ref{lem:feeuni}}

\setcounter{lemma}{0}
\begin{lemma} 
    Axioms~\ref{axi:bb}, \ref{axi:ts}, and \ref{axi:ld} allow us to write 
    \begin{equation*}
        \overline{\fee}_\mathrm{T}=\fee_\mathrm{LP}.
    \end{equation*}
    \begin{proof}
    Assume without loss of generality, by Theorem~\ref{thm:equivalence-interpretations}, that the market maker has only one LP providing all the liquidity.
    Hence $\r=r^1$ and $\q=q^1$.
    Then, Axioms~\ref{axi:bb},~\ref{axi:ld} give us that 
    \begin{align*}
        \fee_{1}(\{\vec{r}^1\},\{\vec{q}^1\})&=\fee_{\mathrm{T}}(\{\vec{r}^1\},\{\vec{q}^1\}) \\
        &=\fee_{\mathrm{LP}}(\r^1,\q^1)
    \end{align*}
    Moreover, Axiom \ref{axi:ts} says
    \begin{align*}
        \fee_{\mathrm{T}}(\{\r^1\},\{\q^1\})&=\overline\fee_{\textsc{T}}(\r^1,\q^1)
    \end{align*}
    Hence $\overline\fee_{\textsc{T}}() = \fee_{\mathrm{LP}}()$.
    \end{proof}
\end{lemma}

\subsection{Proof of Theorem~\ref{thm:fee-impossible}}

\setcounter{theorem}{1}
\begin{theorem} 
Axioms~\ref{axi:bb}, \ref{axi:ts}, \ref{axi:ld}, and \ref{axi:nn} are incompatible.
\end{theorem}
\begin{proof}
\newcommand{\ppost}{\p_{\mathrm{post}}}
\newcommand{\qpost}{\q_{\mathrm{post}}}
\newcommand{\bppost}{\bar{\p}_{\mathrm{post}}}
\newcommand{\bqpost}{\bar{\q}_{\mathrm{post}}}
We will consider the scoring rule interpretation of liquidity provisioning and parallel market making, according to Protocol 3. Consider $n=3$ assets, and 5 LPs, which provide liquidity according to the following 5 generating functions:
\begin{align*}
    & G_1 = -2\sqrt{2p_2p_3}, \\
    & G_2 = -2\sqrt{2p_1p_3}, \\
    & G_3 = -2\sqrt{2p_1p_2}, \\
    & G_4 = 49\cdot\frac{p_3^2-p_3}{21p_3+4}, \\
    & G_5 = 9(p_3^2-p_3).
\end{align*}
Note that $G_1,G_2,$ and $G_3$ are 1-homogeneous generating functions, each symmetric with respect to two assets and flat with respect to the third, and equal to each other up to permutation of the assets.
$G_4$ and $G_5$ are designed to provide specific liquidity vectors in certain scenarios while also not facilitating any part of a particular trade.
Now consider the following market.
Set an initial price vector of $\p=(1/7,4/7,2/7)$.
Let LPs LP$_1$, LP$_2$, and LP$_4$ enter the market with generating functions $G_1$, $G_2$, and $G_4$, respectively. The individual liabilities $\q^i$ for each LP $i$ according to $S_{G_i}(\p,\cdot)$ are
\begin{align*}
& \q^1 = (0, -1, -2),   \\
& \q^2 = (-2, 0, -1),\\
& \q^4 = (-1,-1,-1).
\end{align*}
And the total liability is \[\q=\q^1+\q^2+\q^4=(-3,-2,-4)\]
according to $S_{G_1+G_2+G_4}(\p,\cdot)$.
Now, we trade to the price $\ppost=(4/7,1/7,2/7)$.
At price $\ppost$, the liabilities according to the protocol are
\begin{align*}
    & \qpost^1 = (0, -2, -1), \\
    & \qpost^2 = (-1, 0, -2),\\
    & \qpost^4 = (-1, -1, -1), \\
    & \qpost = (-1, -2, -3).
\end{align*}
Therefore, the net trades facilitated by each LP are
\begin{align*}
    & \r^1 = (0, -1, 1), \\
    & \r^2 = (1, 0, -1),\\
    & \r^4 = (0, 0, 0).
\end{align*}
And the total net trade is \[\r=\r^1+\r^2+\r^4=(1,-1,0).\]
Then, the fees assessed are
\begin{equation}
\begin{aligned} \label{eq:1}
    &\fee(\r,\q) \\
    =&\fee((1,-1,0),(-3,-2,-4))\\
    =& \fee_1(\r^1,\q^1)+\fee_2(\r^2,\q^2)+\fee_4(\vec{0},\q^4) \\
    =& \fee_1((0,-1,1),(0,-1,-2))+\fee_2((1,0,-1),(-2,0,-1))
\end{aligned}
\end{equation}
Note the application of Axiom \ref{axi:nn}, which states that the fee for the trade of $\r^4=\mathbf{0}$ must be 0.

Now, consider the following alternative scenario.
We set an initial price of $\p=(2/9, 4/9, 1/3)$.
LPs LP$_3$ and LP$_5$ enter the market with generating functions $G_3$ and $G_5$, respectively.
The liabilities deposited are
\begin{align*}
& \q^3 = (-2, -1, 0), \\
& \q^5 = (-1, -1, -4), \\
& \q = (-3, -2, -4).
\end{align*}
Now, we trade to the price $\ppost=(4/9,2/9,1/3)$.
At price $\ppost$, the liabilities according to the protocol are
\begin{align*}
    & \qpost^3 = (-1, -2, 0), \\
    & \qpost^5 = (-1, -1, -4), \\
    & \qpost = (-2, -3, -4).
\end{align*}
And the net trades are therefore
\begin{align*}
    & \r^3 = (1, -1, 0), \\
    & \r^5 = (0, 0, 0), \\
    & \r = (1, -1, 0).
\end{align*}
The fee assessed is
\begin{equation}
\begin{aligned} \label{eq:2}
    &\fee(\r,\q) \\
    =& \fee((1,-1,0),(-3,-2,-4))\\
    =& \fee_3(\r^3,\q^3)+\fee_5(\r^5,\q^5) \\
    =& \fee_3((1,-1,0),(-1,-2,0))+\fee_5(\vec{0},(-1,-1,-4)) \\
    =& \fee_3((1,-1,0),(-1,-2,0))
\end{aligned}
\end{equation}
Note that the fee for $\r^5=\mathbf{0}$ is 0, per Axiom \ref{axi:nn}.
Combining eqs.~\eqref{eq:1} and \eqref{eq:2}, we have that
\begin{equation}
\begin{aligned} \label{eq:3}
\fee_1(\r^1,\q^1)+\fee_2(\r^2,\q^2)=\fee_3(\r^3,\q^3)
\end{aligned}
\end{equation}

Now, we consider a scenario that is identical up to permutation of assets, with assets 2 and 3 being swapped.
We introduce two more generating functions $G_6$ and $G_7$, which are identical to $G_4$ and $G_5$ but parameterized by $p_2$ rather than $p_3$.
\begin{align*}
    & G_4 = 49\cdot\frac{p_2^2-p_2}{21p_2+4},\\
    & G_5 = 9(p_2^2-p_2).
\end{align*}
Let LPs LP$_1$, LP$_3$, and LP$_6$ enter the market with generating functions $G_1$, $G_3$, and $G_6$ respectively.
We set an initial price vector of $\bar{\p}=(1/7, 2/7, 4/7)$, and trade to a price vector of $\bppost=(4/7,2/7,1/7)$.
The initial liabilities $\bar{\q}^i$, trades $\bar{\r}^i$, and final liabilities $\bqpost^i$ are as follows:
\begin{align*}
    & \bar{\q}^1 = (0, -2, -1), \\
    & \bar{\q}^3 = (-2, -1, 0), \\
    & \bar{\q}^6 = (-1, -1, -1), \\
    & \bar{\q} = (-3, -4, -2), \\
    & \\
    & \bar{\r}^1 = (0, 1, -1),  \\
    & \bar{\r}^3 = (1, -1, 0), \\
    & \bar{\r}^6 = (0, 0, 0), \\
    & \bar{\r} = (1, 0, -1), \\
    & \\
    & \bqpost^1 = (0, -1, -2), \\
    & \bqpost^3 = (-1, -2, 0), \\
    & \bqpost^6 = (-1, -1, -1), \\
    & \bqpost = (-2, -4, -3).
\end{align*}
The fees assessed are
\begin{equation}
\begin{aligned} \label{eq:4}
    &\fee(\bar{\r},\bar{\q}) \\
    =& \fee((1,0,-1),(-3,-4,-2))\\
    =& \fee_1(\bar{\r}^1,\bar{\q}^1)+\fee_3(\bar{\r}^3,\bar{\q}^3)+\fee_6(\vec{0},\bar{\q}^6) \\
    =& \fee_1((0,1,-1),(0,-2,-1))+\fee_3((1,-1,0),(-2,-1,0)) \\
\end{aligned}
\end{equation}

Now, instead consider a market with LPs LP$_2$ and LP$_7$, with generating functions $G_2$ and $G_7$ respectively, an initial price of $\bar{\p}=(2/9, 1/3, 4/9)$, and a final price of $\bppost=(4/9, 2/9, 1/3)$.
The corresponding liabilities and trades are
\begin{align*}
    & \bar{\q}^2 = (-2, 0, -1), \\
    & \bar{\q}^7 = (-1, -4, -1), \\
    & \bar{\q} = (-3, -4, -2), \\
    & \\
    & \bar{\r}^2 = (1, 0, -1), \\
    & \bar{\r}^7 = (0, 0, 0), \\
    & \bar{\r} = (1, 0, -1), \\
    & \\
    & \bqpost^2 = (-1, 0, -2), \\
    & \bqpost^7 = (-1, -4, -1), \\
    & \bqpost = (-2, -4, -3).
\end{align*}
The fees assessed are
\begin{equation}
\begin{aligned} \label{eq:5}
    &\fee(\bar{\r},\bar{\q}) \\
    =& \fee((1,0,-1),(-3,-4,-2))\\
    =& \fee_2(\bar{\r}^2,\bar{\q}^2)+\fee_7(\bar{\r}^7,\bar{\q}^7) \\
    =& \fee_2((1,0,-1),(-1,0,-2))+\fee_7(\vec{0},(-1,-4,-1)) \\
    =& \fee_2((1,0,-1),(-1,0,-2))
\end{aligned}
\end{equation}
Combining eqs.~\eqref{eq:4} and \eqref{eq:5}, we have that
\begin{equation}
\begin{aligned} \label{eq:6}
\fee_1(\bar{\r}^1,\bar{\q}^1)+\fee_3(\bar{\r}^3,\bar{\q}^3) = \fee_2(\bar{\r}^2,\bar{\q}^2) \\
\end{aligned}
\end{equation}
From eqs.~\eqref{eq:3} and \eqref{eq:6}, we have that
\begin{align*}
\fee_1(\r^1,\q^1)+\fee_2(\r^2,\q^2) &= \fee_3(\r^3,\q^3), \\
\fee_1(\bar{\r}^1,\bar{\q}^1)+\fee_3(\bar{\r}^3,\bar{\q}^3) &= \fee_2(\bar{\r}^2,\bar{\q}^2).
\end{align*}
Observe that $\bar{\r}^2 = \r^2$, $\bar{\r}^3 = \r^3$, $\bar{\q}^2 = \q^2$, and $\bar{\q}^3 = \q^3$, giving us
\begin{align*}
\fee_1(\r^1,\q^1)+\fee_2(\r^2,\q^2) &= \fee_3(\r^3,\q^3), \\
\fee_1(\bar{\r}^1,\bar{\q}^1)+\fee_3(\r^3,\q^3) &= \fee_2(\r^2,\q^2).
\end{align*}
Because $\r^1\neq\0$, by Axiom~\ref{axi:nn}, $\fee_1(\r^1,\q^1)>0$, and therefore $\fee_2(\r^2,\q^2) < \fee_3(\r^3,\q^3)$.
Similarly, because $\bar{\r}^1 \neq\0$, $\fee_1(\bar{\r}^1,\bar{\q}^1)>0$, and therefore $\fee_3(\r^3,\q^3) < \fee_2(\r^2,\q^2)$.
But then $\fee_2(\r^2,\q^2) < \fee_3(\r^3,\q^3)$ and $\fee_3(\r^3,\q^3) < \fee_2(\r^2,\q^2)$, a contradiction.
\end{proof}

\section{Proofs and additional work from Section~\ref{sec:DeFi}}\label{sec:omitted_proofs-defi}

\subsection{General protocol for the exchange of two securities} \label{sec:app-general-two-asset}

Before we analyze Uniswap, we introduce a simpler version of Protocol~\ref{alg:general-n-asset-cost} for the two outcome case. This is our Protocol~\ref{alg:general-two-asset-g}.
This helps us reason about liquidity functions more conveniently as they apply to decentralized finance, as only two securities are exchanged.
Recall that we use lowercase $g$ and $c$ for the two outcome case instead of $G$ and $C$; see \S~\ref{sec:liquidity-scalar}.
We detail the specifications in the next paragraph.

Since $C(S_g(p,\cdot))=0$ and $\nabla C(S_g(p,\cdot))=(p,1-p)$, we can replace $\liabilityf(C)$ on Line 3 of Protocol~\ref{alg:general-n-asset-cost} with $\liabilityf(g,p)=S_g(p,\cdot)$ where $p$ is the current price of asset 1. 
Similarly, we can more easil compute the trade check on Line 14, and optimal split on Line 17 in Protocol~\ref{alg:general-n-asset-cost} using $S_g(p,\cdot)$ without needing to compute infimal convolutions explicitly.

The formal proof of equivalence of Protocol~\ref{alg:general-n-asset-cost} and Protocol~\ref{alg:general-two-asset-g} follows from the equivalence between the corresponding $n$-asset protocols (Theorem~\ref{thm:equivalence-interpretations}).
Let $\G$ be the set of functions $g:[0,1]\to\reals_{\leq 0}$ which are convex, continuously differentiable, and bounded.
Let $\G^* \subseteq \G$ be those which additionally have $|g'(p)| \to \infty$ as $p\to 0$ or $p\to 1$.

\begin{proposition}\label{prop:equivalence-two-asset-general}
 Let $\Ginit \subseteq \G^*$ be a set of functions which are strictly convex.
  Then for $n=2$ Protocol~\ref{alg:general-n-asset-cost} is equivalent to Protocol~\ref{alg:general-two-asset-g} for the choices $c_i = g_i^*$ where $C_i(\q) = c_i(q_1-q_2) + q_2$.
\end{proposition}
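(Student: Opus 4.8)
The plan is to verify, line by line, that under the stated substitutions $C_i(\q) = c_i(q_1 - q_2) + q_2$ with $c_i = g_i^*$, each subroutine of Protocol~\ref{alg:general-n-asset-cost} performs exactly the same computation as the corresponding subroutine of Protocol~\ref{alg:general-two-asset-g}. The key translation dictionary is already assembled in the excerpt: from the displayed conjugate computation in \S~\ref{sec:equiv-full-prot} we have $G^*(\q) = g^*(q_1-q_2) + q_2$ when $G(\p) = g(p_1)$, and from Lemma~\ref{lem:cost-scoring-duality} together with Lemma~\ref{lem:smooth-G-unique-subgradients-mod-ones} we have that the unique (modulo $\ones$) element $\q$ of $C^{-1}(0)$ with $\nabla C(\q) = \p$ is precisely $S_g(p_1,\cdot) = (g'(p_1),0) + (g(p_1) - p_1 g'(p_1))\ones$. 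So the first step is to record that for $g \in \G^*$ strictly convex, $c = g^*$ is differentiable (by strict convexity of $g$) and strictly convex (by $|g'|\to\infty$ at the endpoints, i.e.\ $g$ is essentially smooth, so $c = g^*$ is essentially strictly convex on its domain $\reals$), and moreover $c' = (g')^{-1}$ wherever both are defined — exactly the fact invoked for $\pricef$ at the end of \S~\ref{sec:equiv-full-prot}. This also confirms $C_i = c_i(q_1-q_2)+q_2 = G_i^*$ lies in $\C_2$, so the hypotheses of the earlier lemmas and of Theorem~\ref{thm:equivalence-interpretations} apply with $n=2$.

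The second step handles \textsf{Initialize} and \textsf{ModifyLiquidity}. In Protocol~\ref{alg:general-n-asset-cost}, $\liabilityf(C)$ returns the $\q$ with $\nabla C_0(\q^0) \in \partial C(\q)$ and $C(\q) = 0$; by the translation above this $\q$ equals $(g'(p),0) + (g(p)-pg'(p))\ones = \liabilityf(g,p)$ where $p = \pricef(g_0,\q^0) = (g_0')^{-1}(q^0_1 - q^0_2)$, matching line~\ref{item:two-asset-g-liabilityf} and line~\ref{item:request_q_general_g} of Protocol~\ref{alg:general-two-asset-g} verbatim. Here I would note the one subtlety: the price ``$\nabla C_0(\q^0)$'' in Protocol~\ref{alg:general-n-asset-cost} is well-defined only when $C_0$ is differentiable at $\q^0$, which is guaranteed by $\Ginit \subseteq \G^*$ strictly convex (forcing $c_0$ differentiable), and the $\liabilityf$ output is unique modulo $\ones$ by Lemma~\ref{lem:smooth-G-unique-subgradients-mod-ones}; choosing the representative with $C(\q)=0$ pins it down exactly as $\liabilityf(g,p)$ does.

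The third step handles \textsf{ExecuteTrade}, which is where I expect the main work. I need to show the trade check ``$C(\q+\r) = C(\q)$ for $C = \wedge_i C_i$'' (line~\ref{item:general-n-asset-cost-trade-check}) is equivalent to the check in lines~\ref{item:general_aggregate_check}, namely $\sum_i \q^i + \r = \liabilityf(\sum_j g_j, p')$ where $p' = \pricef(\sum_j g_j, \sum_i \q^i + \r)$, and that the split $\r = \sum_i \r^i$ with $C_i(\q^i+\r^i) = C_i(\q^i)$ coincides with $\r^i = \liabilityf(g_i, p') - \q^i$. For the aggregate cost function: since $C_i = G_i^*$, Proposition~\ref{prop:infimal-conv} gives $C = \wedge_i C_i = (\sum_i G_i)^*$, and $\sum_i G_i$ corresponds to the $1$-dimensional function $\sum_i g_i$, so $C(\q) = (\sum_i g_i)^*(q_1-q_2) + q_2$ — the aggregate is again in the same family, reducing the aggregate check to the $1$-dimensional statement that $\q + \r$ lies on the zero level set at price $p'$, which is exactly what $\liabilityf(\sum_j g_j, p')$ encodes (using that a coherent market state already has $\sum_i \q^i$ on that level set, by the ``check'' maintained inductively and by Lemma~\ref{lem:infimal-conv-consistent-price}). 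For the split: Theorem~\ref{thm:equivalence-interpretations} (interpretation 3) already guarantees existence of a valid split, and Lemma~\ref{lem:infimal-conv-consistent-price} forces that any such split leaves all constituent markets at a common price, which must be $p'$; combined with Lemma~\ref{lem:cost-scoring-duality} applied to each $C_i$ (after the $\ones$-shift that normalizes $C_i(\q^i + \r^i)$ to match $C_i(\q^i)$, using $\ones$-invariance) this pins down $\q^i + \r^i = S_{g_i}(p',\cdot) = \liabilityf(g_i,p')$, i.e.\ $\r^i = \liabilityf(g_i,p') - \q^i$. The main obstacle is bookkeeping around the $\ones$-invariance: Protocol~\ref{alg:general-n-asset-cost} works with level sets $C_i(\cdot) = C_i(\q^i)$ rather than the normalized $C_i(\cdot) = 0$, so at each step I must carry along the additive constants $\alpha_i = C_i(\q^i)$, exactly as in the proof of interpretation~4 in Theorem~\ref{thm:equivalence-interpretations}, and check they cancel so that the induced $1$-dimensional trades agree. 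Once the loop body is shown to agree, matching \textsf{RegisterLP} is trivial (both set the trivial/zero liquidity: $g_i = 0$ versus $C_i = \max = 0^* $, and $0$ is the $1$-dimensional counterpart of the constant-$0$ generating function whose conjugate is $\max$), and the fee functions are identical by hypothesis, completing the equivalence.
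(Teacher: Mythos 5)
Your proposal is correct and follows essentially the same route as the paper's argument in Appendix~\ref{sec:equiv-full-prot}: both reduce to the one-dimensional translation via the conjugate identity $G^*(\q)=g^*(q_1-q_2)+q_2$, identify $\liabilityf(C)$ with $S_g(p,\cdot)=\liabilityf(g,p)$ through Lemmas~\ref{lem:cost-scoring-duality} and~\ref{lem:smooth-G-unique-subgradients-mod-ones}, invoke Theorem~\ref{thm:equivalence-interpretations} and Proposition~\ref{prop:infimal-conv} for the trade check and split, and use strict convexity of $g_0$ to make $\pricef$ via $c'=(g')^{-1}$ well-defined. Your version is more explicit about the level-set normalization $C_i(\q^i)=0$ and the uniqueness of the split, but the ingredients and structure are the same.
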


\begin{algorithm}[t]
  \caption{General two-asset protocol via liquidity functions}
  \label{alg:general-two-asset-g}
  \begin{algorithmic}[1]
    \State \textbf{global constant} $\Ginit \subseteq \G^*$, $\GLP \subseteq \G$, $\fee(),\fee_i() \in\reals_{\geq 0}$
    \State \textbf{global variables} $k\in\N$, $\{\q^i\in\reals^2\}_{i=0}^k$, $\{g_i\in\GLP\}_{i=0}^k$
    \State $\pricef(g,\q) := (g')^{-1}(q_1-q_2)$
    \State $\liabilityf(g,p) := S_g(p,\cdot)$ \hfill where $ S_g(p,\cdot) = (g'(p),0) + (g(p)-p\cdot g'(p))\ones$.
    \label{item:two-asset-g-liabilityf}
    \medskip
    
    \Function{Initialize}{$\q \in \reals^2,g\in\Ginit$} \Comment{Equivalently, the market creator can specify $\ell = g''$}
    \State $(k,\q^0,g_0) \gets (0,\q,g)$
    \State \textbf{check} $\q^0 = \liabilityf(g_0,\pricef(g_0,\q))$
    \EndFunction
    \medskip

    \Function{RegisterLP}{$i=k+1$}
    \State $(k,\q^i,g_i) \gets (k+1,0,0)$
    \EndFunction
    \medskip

    \Function{ModifyLiquidity}{$i \in \N, g \in \GLP$} \Comment{Equivalently, the LP can specify $\ell = g''$}
    
    \State \textbf{request} $\r^i = \q^i - \liabilityf(g,\pricef(g_0,\q^0))$ from LP $i$
    
    \label{item:request_q_general_g}
    \State $(\q^i,g_i) \gets (\q^i - \r^i,g)$
    \EndFunction
    \medskip
    
    \Function{ExecuteTrade}{$\r \in \reals^2$}

    \State $\hat{p} \gets \pricef(\sum_{j=0}^kg_j,\sum_{i=0}^k \q^i+\r)$ \Comment{The price after this trade}
    \State \textbf{check} $\sum_{i=0}^k \q^i+\r = \liabilityf(\sum_{j=0}^kg_j,\hat{p})$
    \label{item:general_aggregate_check}
    \State \textbf{trader pays} $\fee(\r,\q,g)$
    
    cash in fee
    \State\textbf{ Give} $r$ to trader
    \For{each LP $i$}
    \State $\r^i \gets  \liabilityf(g_i,\hat{p}) - \q^i$.
    \label{item:check_trade_general_g}
    \State LP $i$ gets $\fee_i(\r,\q,\{g_i\}_{i=1}^k)$ fees
    \label{item:two-asset-fee-split}
    \State $\q^i \gets \q^i + \r^i$
    \EndFor
    \EndFunction
  \end{algorithmic}
\end{algorithm}

\subsection{Proofs related to Uniswap V2 in Section \ref{sec:DeFi}} \label{sec:app-uniswapv2}

In this section, we show that Uniswap V2, outlined in Protocol~\ref{alg:V2-two-asset}, is a special case of Protocol~\ref{alg:general-two-asset-g}.

\begin{algorithm}[ht]
  \caption{Uniswap V2}
  \label{alg:V2-two-asset}
  \begin{algorithmic}[1]
    \State \textbf{global constants} $\beta$.
    \State \textbf{global variables} $\res\in\reals^2$, $k\in\N$, $\{\alpha^i\in\reals_{\geq 0}\}_{i=0}^k$.
    \State $\pricef(\res) := \frac{\xb}{\xa+\xb}$
    \medskip
    
    \Function{Initialize}{$\res^0 \in \reals^2,\beta$}
    \State $(k,\beta) \gets (0,\beta)$
    \State $\alpha^0 \gets \sqrt{\xa^0\cdot \xb^0}$
    \EndFunction
    \medskip
    
    \Function{RegisterLP}{$i=k+1$}
    \State $(k,q^i,\alpha^i) \gets (k+1,0,0)$ 
    \EndFunction
    \medskip
    
    \Function{ModifyLiquidity}{$i \in \N, \alpha' \geq 0$}
    \State $p=\textproc{price}(\res)$
    
    \State \textbf{request }$\res' = \left((\alpha'-\alpha^i)\sqrt{\frac{1-p}{p}},(\alpha'-\alpha^i)\sqrt{\frac{p}{1-p}}\right)$ from LP $i$.
    \label{item:request-q-v2}
    \State $(\res,\alpha^i) \gets (\res+\res',\alpha')$ 
    \EndFunction
    \medskip
    
    \Function{ExecuteTrade}{$\r \in \reals^2$}
    \label{item:V2-trade-check}
    \State \textbf{check} $\varphi(\res)=\varphi(\res-\r)$, where $\varphi(\res)=\xa\xb$.
    \State \textbf{pay} $\frac{\beta\alpha^i}{\alpha}(-\r)_+$ to LP $i$, where $\alpha=\sum_i \alpha^i$.
    \label{item:V2-fee-split}
    \State $\res\gets\res-\r $. 
    \EndFunction
  \end{algorithmic}
\end{algorithm}

\begin{proposition}\label{prop:is_uniswap} 
  For $\Ginit = \{\alpha g_0 \mid \alpha > 0\}$, $\GLP = \{\alpha g_0 \mid \alpha \geq 0\}$ where $g_0(p)=
  -2\sqrt{p(1-p)}$ in Protocol \ref{alg:general-two-asset-g}, $\liabilityf(g,p) = -\alpha \left(\sqrt{\frac{1-p}{p}}, \sqrt{\frac{p}{1-p}}\right)^\top$ for $g=\alpha g_0$. The vector $\x$ of reserves in Protocol \ref{alg:V2-two-asset} satisfies $\xa\cdot \xb = \alpha^2$ for some $\alpha > 0$ iff $\q=\liabilityf(g,\pricef(g,p))$, where $\res = - \q$.
\end{proposition}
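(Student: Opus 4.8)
The plan is to check three things in sequence: the closed form for $\liabilityf(\alpha g_0,p)$; that $\alpha g_0$ is a legitimate choice, i.e.\ lies in $\G^*$ so that Protocol~\ref{alg:general-two-asset-g} is well posed for $\Ginit,\GLP$ as given; and the claimed two-way correspondence between the constant-product level set $\{\xa\xb=\alpha^2\}$ and the image of $p\mapsto\liabilityf(\alpha g_0,p)$, together with agreement of the two price conventions in play.

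First I would differentiate $g_0(p)=-2\sqrt{p(1-p)}$ to get $g_0'(p)=(2p-1)/\sqrt{p(1-p)}$ on $(0,1)$, so that $g'(p)=\alpha g_0'(p)$ for $g=\alpha g_0$. Plugging this into the definition $S_g(p,\cdot)=(g'(p),0)+(g(p)-p\,g'(p))\ones$ and combining terms over the common denominator $\sqrt{p(1-p)}$, the computation collapses (the $\sqrt{p(1-p)}$ part of $g$ cancels against part of $(1-p)g'$, resp.\ $-p\,g'$) to $S_g(p,\cdot)=\bigl(-\alpha(1-p)/\sqrt{p(1-p)},\,-\alpha p/\sqrt{p(1-p)}\bigr)=-\alpha\bigl(\sqrt{(1-p)/p},\,\sqrt{p/(1-p)}\bigr)^\top$, which is the asserted formula. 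Here I would also record that $g_0$ is convex and in fact $g_0''>0$ on $(0,1)$, hence strictly convex (so $\pricef$ is well defined), nonpositive, bounded by $1$, continuously differentiable on $(0,1)$, with $|g_0'(p)|\to\infty$ as $p\to0$ or $p\to1$; thus $\alpha g_0\in\G^*$.

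For the equivalence, the ``if'' direction is a one-line substitution: $\q=\liabilityf(g,p)$ gives $\res=-\q=\alpha(\sqrt{(1-p)/p},\sqrt{p/(1-p)})$, whose coordinates multiply to $\alpha^2$ since $\sqrt{(1-p)/p}\cdot\sqrt{p/(1-p)}=1$. For the ``only if'' direction I would start from $\xa\xb=\alpha^2$ with $\xa,\xb>0$ (reserves in Protocol~\ref{alg:V2-two-asset} are positive), set $p:=\xb/(\xa+\xb)\in(0,1)$ so that $\xa/\xb=(1-p)/p$, and solve this together with $\xa\xb=\alpha^2$ for the unique positive solution $\xa=\alpha\sqrt{(1-p)/p}$, $\xb=\alpha\sqrt{p/(1-p)}$; by the formula from the first step this is exactly $\q=-\res=\liabilityf(g,p)$. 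To close the loop I would check that this $p$ is the price the general protocol assigns, i.e.\ $\pricef(g,\q)=(g')^{-1}(q_1-q_2)$ equals $p$: a short computation gives $q_1-q_2=\xb-\xa=\alpha(2p-1)/\sqrt{p(1-p)}=g'(p)$, and since $g'$ is strictly increasing this yields $\pricef(g,\q)=p$, hence $\q=\liabilityf(g,\pricef(g,\q))$.

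I do not expect a substantive obstacle: everything is elementary calculus and algebra. The only points needing care are bookkeeping ones — keeping the reserve/liability sign flip $\res=-\q$ straight, reconciling the normalized price $p=\xb/(\xa+\xb)$ of Protocol~\ref{alg:V2-two-asset} with the convex-analytic price $(g')^{-1}(q_1-q_2)$ of Protocol~\ref{alg:general-two-asset-g}, and excluding the endpoints $p\in\{0,1\}$ (consistent both with positivity of reserves and with $\alpha g_0\in\G^*$). For sharpness one might also note that $p\mapsto\liabilityf(\alpha g_0,p)$ is a bijection from $(0,1)$ onto the positive branch $\{\x\succ\0:\xa\xb=\alpha^2\}$, which is what makes the correspondence tight.
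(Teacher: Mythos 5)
Your proposal is correct and follows essentially the same route as the paper's proof: compute $g'(p)=\alpha(2p-1)/\sqrt{p(1-p)}$, simplify $S_g(p,\cdot)$ to $-\alpha(\sqrt{(1-p)/p},\sqrt{p/(1-p)})^\top$, get one direction by direct multiplication, and get the other by recovering $p$ from the reserves/liabilities via $p=\xb/(\xa+\xb)=q_2/(q_1+q_2)$ and matching it with $(g')^{-1}(q_1-q_2)$. Your added checks that $\alpha g_0\in\G^*$ and that $p\mapsto\liabilityf(\alpha g_0,p)$ is a bijection onto the positive branch are not in the paper's proof but are harmless (and mildly clarifying) extras.
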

\begin{proof}
  Observe that any change in liability vector in the \textsf{ModifyLiquidity} or \textsf{Initialize} phases results in a liability vector that takes the form $\liabilityf(g,p)$ for some $g\in \GLP$, $p\in[0,1]$. 
  Let this $g(p)=\alpha g_0(p)=-2\alpha\sqrt{p(1-p)}$ and thereby $g'(p) = -\alpha\frac{1-2p}{\sqrt{p(1-p)}}$. So,
  \begin{align*}
    \liabilityf(g,p)&= (g(p)-p\cdot g'(p)\ones + \begin{pmatrix}
      g'(p) \\ 0
    \end{pmatrix}\\
                    &= \alpha\left(\left(-2\sqrt{p(1-p)}+p\frac{1-2p}{\sqrt{p(1-p)}}\right)\ones-\begin{pmatrix}\frac{1-2p}{\sqrt{p(1-p)}}\\0\end{pmatrix}\right)= \alpha\begin{pmatrix}-\sqrt{\frac{1-p}{p}}\\-\sqrt{\frac{p}{1-p}}\end{pmatrix}~.
  \end{align*}
  Hence if $\q=\liabilityf(g,\pricef(g,p))$, 
  \begin{align*}
      \begin{pmatrix}
          q_1 \\ q_2
      \end{pmatrix} ~= \liabilityf(g,\pricef(g,p)) &= -\alpha \begin{pmatrix}\sqrt{\frac{1-\pricef(g,p)}{\pricef(g,p)}}\\\sqrt{\frac{\pricef(g,p)}{1-\pricef(g,p)}}\end{pmatrix}
  \end{align*}
  which implies $q_1\cdot q_2 = \alpha^2 = x_1\cdot x_2$.

  For the if direction, price of market at $q$ is given by $(g')^{-1}(q_1-q_2)$, call this $p$.
    \begin{align*}
         q_1-q_2 &= g'(p) =
        -\alpha\frac{1-2p}{\sqrt{p(1-p)}}
    \end{align*}
  Solving this, we get that $p=\frac{q_2}{q_1+q_2}$. 
  \begin{align*}
      \liabilityf(g,\pricef(g,p)) &= \liabilityf(g,\frac{q_2}{q_1+q_2})\\
      &= \alpha\begin{pmatrix}-\sqrt{\frac{q_1}{q_2}}\\-\sqrt{\frac{q_2}{q_1}}\end{pmatrix} =  \begin{pmatrix}q_1\\q_2\end{pmatrix}.
  \end{align*}
  The last line uses the fact that $q_1\cdot q_2 = \alpha^2$.
\end{proof}
\begin{proposition}\label{prop:protocol_equal} 
  Protocol \ref{alg:V2-two-asset} is equivalent to Protocol \ref{alg:general-two-asset-g} for $\Ginit = \{\alpha g_0 \mid \alpha > 0\}$, $\GLP = \{\alpha g_0 \mid \alpha \geq 0\}$ where $g_0(p)=-2\sqrt{p(1-p)}$, $\fee(\r,\q,g)=\beta\r_+$ and $\fee_i(\r,\q,\{g_i\}_i)=\frac{\beta\alpha_i}{\alpha}\r_+$ for $\beta >0$.
\end{proposition}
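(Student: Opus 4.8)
The plan is to prove equivalence by exhibiting an explicit bijection between the reachable states of the two protocols and then checking, by induction on the sequence of operations (\textsf{Initialize}, \textsf{RegisterLP}, \textsf{ModifyLiquidity}, \textsf{ExecuteTrade}), that matched inputs produce matched states and identical fee transfers. Concretely, I would identify the Uniswap~V2 state $(\res, k, \{\alpha^i\}_{i=0}^k)$ with the Protocol~\ref{alg:general-two-asset-g} state $(k, \{\q^i\}_{i=0}^k, \{g_i\}_{i=0}^k)$ precisely when $g_i = \alpha^i g_0$ for every $i$ (with $g_0(p)=-2\sqrt{p(1-p)}$) and $\q^i = \liabilityf(\alpha^i g_0, p)$ for the common price $p$; since $\liabilityf(\alpha^i g_0,p) = -\alpha^i(\sqrt{(1-p)/p},\,\sqrt{p/(1-p)})^\top$ by Proposition~\ref{prop:is_uniswap}, this forces $\res = -\sum_i\q^i$ and makes all the $\q^i$ positive multiples of a single vector. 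I would carry this proportionality along as an invariant (``coherence''), which guarantees that $p = \pricef(g_0,\q^0) = x_2/(x_1+x_2) = \pricef(\res)$ is the same quantity however it is computed, so the two \textsf{ModifyLiquidity} routines really reference the same current price.

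For the base case, V2's $\textsf{Initialize}(\res^0)$ (which sets $\alpha^0 = \sqrt{x_1^0 x_2^0}$) is matched with $\textsf{Initialize}(-\res^0, \alpha^0 g_0)$ in Protocol~\ref{alg:general-two-asset-g}; the only thing to check is that the latter's consistency condition $\q^0 = \liabilityf(g_0,\pricef(g_0,\q^0))$ holds, which is exactly the ``if'' direction of Proposition~\ref{prop:is_uniswap} applied to $\res^0$. \textsf{RegisterLP} is immediate, since $\alpha^i\gets 0$ corresponds to $g_i\gets 0 = 0\cdot g_0$. For \textsf{ModifyLiquidity}$(i,\alpha')$ I would substitute the closed form of $\liabilityf$ from Proposition~\ref{prop:is_uniswap} into line~\ref{item:request_q_general_g} to obtain $\r^i = \q^i - \liabilityf(\alpha' g_0,p) = (\alpha'-\alpha^i)(\sqrt{(1-p)/p},\,\sqrt{p/(1-p)})^\top$, which is precisely the bundle $\res'$ requested in line~\ref{item:request-q-v2} of Protocol~\ref{alg:V2-two-asset}; since the update $\q^i\gets\q^i-\r^i$ shifts $\res=-\sum_j\q^j$ by $+\r^i$, this matches $\res\gets\res+\res'$, and the new state is again coherent at the unchanged price $p$.

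For \textsf{ExecuteTrade}$(\r)$, Proposition~\ref{prop:is_uniswap} applied to the aggregate $g=\sum_j g_j = \alpha g_0$ with $\alpha=\sum_j\alpha^j$ shows that the check in line~\ref{item:general_aggregate_check}, $\sum_j\q^j+\r = \liabilityf(\alpha g_0,p')$, is equivalent to $(q_1+r_1)(q_2+r_2) = \alpha^2 = q_1q_2$, i.e.\ under $\res=-\q$ to $(x_1-r_1)(x_2-r_2)=x_1x_2$, which is V2's invariant check. After a valid trade, line~\ref{item:check_trade_general_g} and the ensuing update set $\q^i\gets\liabilityf(\alpha^i g_0,p')$, so the new aggregate liability is $\sum_j\q^j + \r$ (by the check) and the new reserves $-(\sum_j\q^j+\r)=\res-\r$ agree with V2's update; coherence at $p'$ is preserved. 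Finally the fee paid by the trader and the share $\tfrac{\beta\alpha^i}{\alpha}$ routed to each LP~$i$ are the same expressions in both protocols, so the equivalence of fee transfers is a direct comparison. Concluding the induction gives the claim; Proposition~\ref{prop:equivalence-two-asset-general} then also yields that V2 is a special case of the $n=2$ instance of Protocol~\ref{alg:general-n-asset-cost}.

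I expect the main obstacle to be not any isolated step but the bookkeeping: keeping the two sign/orientation conventions consistent ($\res=-\q$ and trades oriented toward the trader), verifying that coherence is genuinely preserved by every operation so that ``the current price'' is unambiguous and agrees across the two protocols, and confirming that the (implicit) \textsf{Initialize} consistency check in Protocol~\ref{alg:general-two-asset-g} is automatically met by the V2 initialization. All the analytic content---the explicit forms of $\liabilityf$ and $\pricef$ and the identification of the constant-product level sets---has already been done in Proposition~\ref{prop:is_uniswap}, so the remaining work is essentially careful verification.
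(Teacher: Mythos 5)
Your proposal is correct and follows essentially the same route as the paper's proof: both rely on Proposition~\ref{prop:is_uniswap} for the closed form of $\liabilityf$ and the constant-product characterization, then verify routine by routine (the \textsf{Initialize} check, the \textsf{ModifyLiquidity} request vector, the \textsf{ExecuteTrade} invariant check, and the proportional split $\r^i=\tfrac{\alpha^i}{\alpha}\r$ underlying the fee accounting), with the state invariant $\q=\liabilityf(g,\pricef(g,\q))$ maintained by induction. Your explicit state bijection and ``coherence'' bookkeeping are just a slightly more formal packaging of the same argument.
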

\begin{proof}
  Showing that Protocol \ref{alg:general-two-asset-g} gives Protocol \ref{alg:V2-two-asset}, for the choices of $g$ specified, involves showing the equivalence of three specific components.
  That is, we wish to show that the initialization check in Line 7 of Protocol \ref{alg:general-two-asset-g} is satisfied and that the request vectors in \textsf{ModifyLiquidity} and \textsf{ExecuteTrade} routines match.

  First, we note that modifying $\x$ preserves the invariant $\xa\cdot \xb=\alpha^2$ for some $\alpha$ in Protocol \ref{alg:V2-two-asset}. As shown in the proof of Proposition \ref{prop:is_uniswap}, $ (g')^{-1}(q_1-q_2) = \pricef(g_0,\q) = \frac{q_2}{q_1+q_2}$ where $\q=-\res$, showing that the Uniswap price function is a special case.

  For the initialization phase, $\xa^0\cdot \xb^0=(\alpha^0)^2$ and Proposition $\ref{prop:is_uniswap}$ give us that $\liabilityf(g_0,\pricef(g_0,q^0))=\q^0$ hence satisfying the check.
  By induction, this statement holds for all states $\q$ i.e. $\q=\liabilityf(g,\pricef(g,\q))$.

  A liquidity change of $\alpha^i$ to $\alpha'$ at price $p$ reflects a change of $g_i$ from $-2\cdot\alpha^i\sqrt{p(1-p)}$ to $\hat{g}=-2\cdot\alpha'\sqrt{p(1-p)}$ in Uniswap V2.
  The quantity of assets requested by Protocol \ref{alg:general-two-asset-g} is given by
  \begin{align*}
    \r^i=-(\liabilityf(\hat g,p) - \q^i)&=-(\liabilityf(\hat g,p) - \liabilityf(g_i,p)) \\
                                   &= (\alpha'-\alpha^i)\begin{pmatrix}\sqrt{\frac{1-p}{p}}\\\sqrt{\frac{p}{1-p}}\end{pmatrix} =x'
  \end{align*}
  Now, we show that the check in Line \ref{item:check_trade_general_g} of Protocol \ref{alg:general-two-asset-g} for the given $g$ gives us the condition $\xa\cdot \xb = (\xa-r_1)\cdot(\xb-r_2)$ that appears in Uniswap V2.

  Let $\q=\sum_i\q^i$ and $\r=\sum_i\r^i$.
  The check in Protocol \ref{alg:general-two-asset-g} can be rewritten as 
  \begin{align*}
    \q+\r&= \liabilityf\left(\sum_{j=0}^k g_j,\hat{p}\right)\\
         &= \liabilityf\left(\sum_{j=0}^k g_j,\pricef\left(\sum_{j=0}^k g_j,\q+\r\right)\right)
  \end{align*}
  From this, by applying Proposition \ref{prop:is_uniswap}, $(q_1+r_1)(q_2+r_2)=(-\xa+r_1)(-\xb+r_2)=\alpha^2=\xa\cdot \xb$.
  Again the note the difference in sign conventions for liability and reserves.

  The last fact we want to show to complete the proof is that $\r^i=\frac{\alpha^i}{\alpha} \r$ for Uniswap V2. This can be obtained by observing that $\r = \liabilityf(g,\hat{p})-\q = \liabilityf(g,\hat{p})-\liabilityf(g,p)$ from Proposition \ref{prop:is_uniswap} and $g=\frac{\alpha^i}{\alpha} g_i$ being true for Uniswap V2.

\end{proof}
\subsection{Liquidity vectors for the general bucketing mechanism} \label{sec:bucket_liquidity_vec}

In this section, we derive the expressions for the liquidity vectors for general bucketing mechanisms.
We are given, for a specified $g$, an $\ell^{(j)}$ function of the form below,
\begin{align*}
  \ell^{(j)}(p)&=g''(p)\ones_{[a_j,b_j]}(p) =\begin{cases}
    0 & p<a_j \\
    g''(p) & p\in[a_j,b_j] \\
    0 & p>b_j
  \end{cases}.
\end{align*}
Then, $(\hat{g}^{(j)})'(p)$ is given by
\begin{align*}
  (\hat{g}^{(j)})'(p)&=\int_0^p \ell^{(j)}(s)\,ds
  =\begin{cases}
    0 & p < a_j \\
    g'(p)-g'(a_j) & p\in[a_j,b_j] \\
    g'(b_j)-g'(a_j) & p>b_j
  \end{cases}.
\end{align*}
We integrate from $0$ to $p$ to see that
\begin{align*}
  \hat{g}^{(j)}(p)=\int_0^p  (\hat{g}^{(j)})'(s)ds
  &=\begin{cases}
    0 & p < a_j \\
    g(p)-g(a_j)-g'(a_j)(p-a_j) & p\in[a_j,b_j] \\
    (g'(b_j)-g'(a_j))(p-b_j)+g(b_j)-g(a_j)-g'(a_j)(b_j-a_j) & p>b_j
  \end{cases}.
\end{align*}
Then,
\begin{align*}
  g^{(j)}(p)&=\hat{g}^{(j)}(p)-p\hat{g}^{(j)}(1) \\
  &=\begin{cases}
    p(g(a_j)-g(b_j)-g'(a_j)(a_j-1)+g'(b_j)(b_j-1)) & p < a_j \\
    g(p)+g(a_j)(p-1)-pg(b_j)-a_jg'(a_j)(p-1)+pg'(b_j)(b_j-1) & p\in[a_j,b_j] \\
    (p-1)(g(a_j)-g(b_j)-a_jg'(a_j)+b_jg'(b_j)) & p>b_j
  \end{cases}.
\end{align*}
From this we can calculate the liability vector as follows
\begin{align*}
  \liabilityf(g^{(j)},p)&=(g'(p),0)+(g(p)-pg'(p))\ones\\
  &=\begin{cases}
    \begin{pmatrix} 
      g(a_j)-g(b_j)-g'(a_j)(a_j-1)+g'(b_j)(b_j-1)
      \\
      0
    \end{pmatrix} & p < a_j \\
    \begin{pmatrix}
      g(p)-g(b_j)-g'(p)(p-1)+g'(b_j)(b_j-1) \\
      g(p)-g(a_j)-pg'(p)+a_jg'(a_j)
    \end{pmatrix} & p\in[a_j,b_j] \\
    \begin{pmatrix}
      0 \\ g(b_j)-g(a_j)+a_jg'(a_j)-b_jg'(b_j)
    \end{pmatrix} & p>b_j
  \end{cases}\\
&=\begin{cases}
      \begin{pmatrix} 
      S(a_j,1)-S(b_j,1)
      \\
      0
    \end{pmatrix} & p < a_j \\
    \begin{pmatrix}
      S(p,1)-S(b_j,1) \\
      S(p,0)-S(a_j,0)
    \end{pmatrix} & p\in[a_j,b_j] \\
    \begin{pmatrix}
      0 \\ S(b_j,0)-S(a_j,0)
    \end{pmatrix} & p>b_j
  \end{cases}\\
  &=\begin{pmatrix}
      S_g(\max\{a_j,p\},1)-S_g(\max\{b_j,p\},1)
      \\
      S_g(\min\{b_j,p\},0)-S_g(\min\{a_j,p\},0)
  \end{pmatrix}
\end{align*}
where $S(p,y)=g(p)+g'(p)(y-p)$.
\subsection{A bucketing scheme for logarithmic market scoring rule (LMSR)}\label{sec:bucket_lmsr}
Here, we apply the techniques of the previous section to consider an interesting new protocol.
What if we used $g$ from LMSR but with a bucketing scheme similar to Uniswap V3? 
That is, LPs can deposit according to $g(p)=p\log p+(1-p)\log(1-p)$ on discrete buckets analogous to what we see in Uniswap V3.
We see that $g'(p)=\log p-\log(1-p)=\log\left(\frac{p}{1-p}\right)$, so
\begin{equation*}
  g^{(j)}(p)=\begin{cases}
    p\log\frac{a_j}{b_j} & p < a_j \\
    p\log \frac{p}{b_j}+(1-p)\log\frac{(1-p)}{(1-a_j)} & p\in[a_j,b_j] \\
    (1-p)\log\frac{(1-b_j)}{(1-a_j)} & p>b_j
  \end{cases}.
\end{equation*}
\begin{align*}
  \liabilityf(g^{(j)},p)&=(g'(p),0)+(g(p)-pg'(p))\ones\\
  &=\begin{cases}
    \begin{pmatrix} 
      \log\frac{a_j}{b_j}
      \\
      0
    \end{pmatrix} & p < a_j \\
    \begin{pmatrix}
      \log\frac{p}{b_j} \\
      \log\frac{(1-p)}{(1-a_j)}
    \end{pmatrix} & p\in[a_j,b_j] \\
    \begin{pmatrix}
      0 \\ \log\frac{(1-a_j)}{(1-b_j)}
    \end{pmatrix} & p>b_j
  \end{cases}.
\end{align*}
\subsection{Discussion on Uniswap V3} \label{sec:uniswapV3}

\begin{algorithm}
  \caption{Uniswap V3}
  \label{alg:V3-two-asset}
  \begin{algorithmic}[1]
    \State \textbf{global constants} $\beta$, $m$, $\{B^j = [a_j,b_j]\}_{j=0}^m$. 
    \State \textbf{global variables} $\res\in\reals^2$, $k\in\N$, $\{\alpha^{ij}\in\reals_{\geq 0}\}_{i\in \{0,\ldots,k\},j\in \{0,\cdots,m\}}$
    \Function{price}{$\res \in \reals^2$}
    \State \textbf{return } $\frac{\xb+\alpha_j\sqrt{\frac{a_j}{1-a_j}}}{\xa+\alpha_j\sqrt{\frac{1-b_j}{b_j}}+\xb+\alpha_j\sqrt{\frac{a_j}{1-a_j}}}$ 
    \EndFunction
    \Function{Initialize}{$\res \in \reals^2,\alpha>0,\beta$}
    \State $(k,\beta) \gets (0,\beta)$
    \State $\textproc{ModifyLiquidity}(0, \res, \alpha,[0,1])$
    \Comment{We technically have to split this into function call for each price interval.}
    \EndFunction
    \medskip
    
    \Function{RegisterLP}{\mbox{}}
    \State $k \gets k+1$
    \State $\alpha^{kj} \gets 0,\forall j\in\{0,\ldots,m\}$
    \State \textbf{return} $k$ \Comment{ID of the new LP }
    \EndFunction
    \medskip
    
    \Function{ModifyLiquidity}{$i \in \N, \alpha' \geq 0,j\in \{0,\ldots,m\}$}
    \State $p=\textproc{price}(\res)$
    
    \State \textbf{request }$\res' = \begin{cases} \label{item:V3-liquidity-request}
      \left((\alpha'-\alpha^{ij})\left(\sqrt{\frac{1-a_j}{a_j}}-\sqrt{\frac{1-b_j}{b_j}}\right),0\right) & \text{ if } p<a_j \\
      \left(0,(\alpha'-\alpha^{ij})\left(\sqrt{\frac{b_j}{1-b_j}}-\sqrt{\frac{a_j}{1-a_j}}\right)\right) & \text{ if } p>b_j \\
      \left((\alpha'-\alpha^{ij})\left(\sqrt{\frac{1-p}{p}}-\sqrt{\frac{1-b_j}{b_j}}\right),(\alpha'-\alpha^{ij})\left(\sqrt{\frac{p}{1-p}}-\sqrt{\frac{a_j}{1-a_j}}\right)\right) & \text{ if } p\in[a_j,b_j]
    \end{cases}$
    \State $(\res,\alpha^{ij}) \gets (\res+\res',\alpha')$ 
    \EndFunction
    \medskip
    \Function{ExecuteTrade}{$\r \in \reals^2$}
    \State Let $p=\textproc{price}(\res),\quad \hat{p}=\textproc{price}(\res-\r)$ \footnotemark
    \State Let $l,u$ be such that $a_l\leq p \leq b_l$ and $a_u\leq \hat{p}\leq b_u$.
    
    \State \textbf{check} 
    \tiny{\begin{equation*}
        \frac{1}{(\sum_{i=0}^k\alpha^{il})^2}\left(\xa+\sum_{i=0}^k\alpha^{il}\sqrt{\frac{1-b_l}{b_l}}\right)\left(\xb+\sqrt{\frac{a_l}{1-a_l}}\sum_{i=0}^k\alpha^{il}\right)=\frac{1}{(\sum_{i=0}^k\alpha^{iu})^2}\left(\xa-r_1+\sum_{i=0}^k\alpha^{iu}\sqrt{\frac{1-b_u}{b_u}}\right)\left(\xb-r_2+\sqrt{\frac{a_u}{1-a_u}}\sum_{i=0}^k\alpha^{iu}\right)
      \end{equation*}}
    \normalsize{\label{item:V3-trade-check}
      \State \textbf{pay} $\beta\frac{\sum_j\alpha^{ij}}{\sum_j\sum_{o}\alpha^{oj}}(-\r)_+$ to each LP $i$ where $j$ sums over buckets in $[B^l,B^u]$.\Comment{WLOG assume that the $B^u$ bucket comes later than $B^l$}
      \label{item:V3-fee-split}
      \State $\res\gets\res-\r $
      \EndFunction}
  \end{algorithmic}
\end{algorithm}
Readers familiar with the original protocol may recognize Protocol \ref{alg:V3-two-asset} as Uniswap V3 mechanics but with minor changes coming from using normalized prices.
Line \ref{item:V3-liquidity-request} comes from \cite{fan2022differentialliquidity}'s analysis of Uniswap V3, and Line \ref{item:V3-trade-check} comes from the shifted reserve curve characteristic of Uniswap V3 as seen in both \cite{fan2022differentialliquidity} and \cite{adams2021whitepaperv3}.
For clarity, we want to reiterate that \citet{fan2022differentialliquidity} uses an exchange rate price $\hat{p}$, and we use its normalized version $p$. The two quantities are related by $\hat{p}=\frac{p}{1-p}$.
\begin{proposition}\label{prop:is_uniswap_V3}
  For $\Ginit = \{\sum_j\alpha_j g^{(j)} \mid \alpha_j > 0\}$, $\GLP = \{\sum_j\alpha_j g^{(j)} \mid \alpha_j \geq 0\}$ where 
  \begin{align*}
    g^{(j)}(p)=\begin{cases}
      p(\sqrt{\frac{1-b_j}{b_j}}-\sqrt{\frac{1-a_j}{a_j}}) & \text{ if } p\leq a_j \\
      -2\sqrt{p(1-p)}+p\sqrt{\frac{1-b_j}{b_j}} +(1-p)\sqrt{\frac{a_j}{1-a_j}} & \text{ if } a_j \leq p \leq b_j\\
      (1-p)(\sqrt{\frac{a_j}{1-a_j}}-\sqrt{\frac{b_j}{1-b_j}}) &\text{ if } p \geq b_j
    \end{cases},
  \end{align*}
  the vector $\q$ of liability in Protocol \ref{alg:general-two-asset-g} always satisfies $\left(\xa+\alpha_j\sqrt{\frac{1-b_j}{b_j}}\right)\cdot \left(\xb+\alpha_j\sqrt{\frac{a_j}{1-a_j}}\right) = \alpha_j^2$ for some $\alpha_j > 0$, where $\res=-\q$.
\end{proposition}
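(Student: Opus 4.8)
The plan is to follow the proof of Proposition~\ref{prop:is_uniswap} almost verbatim, replacing the constant-product shape $g_0$ by the shifted shape $g^{(j)}$ and tracking each bucket separately. First I would observe, by exactly the induction used there, that every liability vector reachable in Protocol~\ref{alg:general-two-asset-g} has the form $\liabilityf(g,p)$ at the current price $p$: \textsf{Initialize} and \textsf{ModifyLiquidity} set $\q^i$ directly to such a vector, and line~\ref{item:check_trade_general_g} of \textsf{ExecuteTrade} resets $\q^i \gets \liabilityf(g_i,p')$. Since $S_g(p,\cdot) = (g'(p),0) + (g(p)-p\,g'(p))\ones$ is linear in $g$, the aggregate liability is $\q = \sum_i \q^i = \liabilityf\!\big(\sum_i g_i,\,p\big)$, and writing $\sum_i g_i = \sum_j \alpha_j g^{(j)}$ with $\alpha_j = \sum_i \alpha_j^{(i)}$ the total liquidity LPs have placed in bucket $j$, the contribution of bucket $j$ to the liability is exactly $\liabilityf(\alpha_j g^{(j)},p)$. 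It therefore suffices to show that $\x^{(j)} := -\liabilityf(\alpha_j g^{(j)},p)$ obeys the shifted invariant for every $p\in[0,1]$; summing over $j$ then recovers the global market state, with only the bucket containing $p$ contributing a genuine hyperbola (the others sitting at a corner), exactly as in standard Uniswap~V3.

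Second, I would compute $\liabilityf(\alpha_j g^{(j)},p) = \alpha_j\big((g^{(j)})'(p),0\big) + \alpha_j\big(g^{(j)}(p) - p\,(g^{(j)})'(p)\big)\ones$ directly from the given closed form, in the three regimes $p<a_j$, $p\in[a_j,b_j]$, $p>b_j$. Using $\tfrac{d}{dp}\big({-}2\sqrt{p(1-p)}\big) = -\tfrac{1-2p}{\sqrt{p(1-p)}}$ together with the elementary radical identities
\[
  -2\sqrt{p(1-p)} + \frac{p(1-2p)}{\sqrt{p(1-p)}} = -\sqrt{\tfrac{p}{1-p}}, \qquad
  -\frac{1-2p}{\sqrt{p(1-p)}} - \sqrt{\tfrac{p}{1-p}} = -\sqrt{\tfrac{1-p}{p}}~,
\]
the middle regime collapses to $\liabilityf(\alpha_j g^{(j)},p) = \alpha_j\big(\sqrt{\tfrac{1-b_j}{b_j}} - \sqrt{\tfrac{1-p}{p}},\ \sqrt{\tfrac{a_j}{1-a_j}} - \sqrt{\tfrac{p}{1-p}}\big)^\top$, which is precisely the middle Uniswap~V3 row of Table~\ref{tab:buckets}. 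In the outer regimes $g^{(j)}$ is affine, so $g^{(j)}(p) - p\,(g^{(j)})'(p)$ equals the $y$-intercept of the relevant linear piece (namely $0$ for $p<a_j$ and $\sqrt{\tfrac{a_j}{1-a_j}} - \sqrt{\tfrac{b_j}{1-b_j}}$ for $p>b_j$), and $\liabilityf(\alpha_j g^{(j)},p)$ reduces to $\alpha_j\big(\sqrt{\tfrac{1-b_j}{b_j}} - \sqrt{\tfrac{1-a_j}{a_j}},\,0\big)^\top$ and $\alpha_j\big(0,\,\sqrt{\tfrac{a_j}{1-a_j}} - \sqrt{\tfrac{b_j}{1-b_j}}\big)^\top$, matching the other two Uniswap~V3 rows of Table~\ref{tab:buckets}. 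Evaluating the second radical identity at $p=a_j$ and $p=b_j$ also shows that $(g^{(j)})'$ glues continuously across the bucket endpoints, so $g^{(j)}$ is $C^1$; since $(g^{(j)})'$ is constant on each outer piece, strictly increasing on the middle piece, and continuous, it is non-decreasing, hence $g^{(j)}$ is convex, and since $g^{(j)}(0)=g^{(j)}(1)=0$ it is nonpositive, so $g^{(j)}\in\GLP$.

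Third, I would substitute into the claimed invariant. In the middle regime $x_1^{(j)} + \alpha_j\sqrt{\tfrac{1-b_j}{b_j}} = \alpha_j\sqrt{\tfrac{1-p}{p}}$ and $x_2^{(j)} + \alpha_j\sqrt{\tfrac{a_j}{1-a_j}} = \alpha_j\sqrt{\tfrac{p}{1-p}}$, whose product is $\alpha_j^2$. In the outer regime $p<a_j$ (resp.\ $p>b_j$) one shifted coordinate equals $\alpha_j\sqrt{\tfrac{1-a_j}{a_j}}$ (resp.\ $\alpha_j\sqrt{\tfrac{1-b_j}{b_j}}$) and the other equals $\alpha_j\sqrt{\tfrac{a_j}{1-a_j}}$ (resp.\ $\alpha_j\sqrt{\tfrac{b_j}{1-b_j}}$), so the product is again $\alpha_j^2$. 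Combined with the reachability reduction of the first paragraph, this proves the invariant for every state of the protocol.

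The main difficulty here is not conceptual but organizational: carrying the three piecewise regimes and the two endpoint-continuity checks cleanly, and being careful that the statement concerns the per-bucket reserve contribution $\x^{(j)}$, with the bucket containing the current price supplying the actual shifted hyperbola while the inactive buckets contribute degenerate ``corner'' reserves that still satisfy the same equation. Once $\liabilityf(\alpha_j g^{(j)},p)$ has been computed, the invariant is a one-line substitution, and the reachability argument is identical to that of Proposition~\ref{prop:is_uniswap}.
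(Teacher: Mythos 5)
Your proposal is correct and follows essentially the same route as the paper's proof: reduce to showing every reachable liability has the form $\liabilityf(\alpha_j g^{(j)},p)$, compute that vector explicitly in the three regimes $p<a_j$, $p\in[a_j,b_j]$, $p>b_j$ (recovering the Uniswap~V3 rows of Table~\ref{tab:buckets}), and substitute into the shifted product invariant. The only additions beyond the paper's argument are your explicit verification of the $C^1$ gluing, convexity, and nonpositivity of $g^{(j)}$ (i.e., that $g^{(j)}\in\GLP$) and the slightly more careful linearity-in-$g$ treatment of the multi-bucket sum, both of which the paper leaves implicit.
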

\begin{proof}
  Observe that any change in liability vector in the \textsf{ModifyLiquidity} or \textsf{Initialize} phases results in a liability vector that results in the vector taking the form $\liabilityf(g,p)$ for some $g\in \GLP$, $p\in[0,1]$. 
  Let this $g$ be $\alpha_jg^{(j)}(p)$ where $\alpha_j$ is the total liquidity in $j$th price interval which is $\sum_i\alpha^{ij}$ in Protocol \ref{alg:V3-two-asset}. We have
  \begin{align*}
    g'(p)=\begin{cases}
      \alpha_j(\sqrt{\frac{1-b_j}{b_j}}-\sqrt{\frac{1-a_j}{a_j}}) & \text{ if } p\leq a_j \\
      \alpha_j(-\frac{1-2p}{\sqrt{p(1-p)}}+\sqrt{\frac{1-b_j}{b_j}} -\sqrt{\frac{a_j}{1-a_j}}) & \text{ if } a_j \leq p \leq b_j\\
      (\sqrt{\frac{b_j}{1-b_j}}-\sqrt{\frac{a_j}{1-a_j}}) &\text{ if } p \geq b_j
    \end{cases}.
  \end{align*}
  Solving for $\liabilityf(g,p)=(g(p)-p\cdot g'(p)\ones + \begin{pmatrix}
    g'(p) \\ 0
  \end{pmatrix}$ for these three cases gives us
  \begin{align*}
    \liabilityf(g,p)&= \begin{cases}
      \alpha_j\begin{pmatrix}
        \sqrt{\frac{1-b_j}{b_j}}-\sqrt{\frac{1-a_j}{a_j}}\\0
      \end{pmatrix}& \text{ if } p\leq a_j \\
      \alpha_j\begin{pmatrix}
        -\sqrt{\frac{1-p}{p}}+\sqrt{\frac{1-b_j}{b_j}}\\-\sqrt{\frac{p}{1-p}}+\sqrt{\frac{a_j}{1-a_j}}
      \end{pmatrix}& \text{ if } a_j \leq p \leq b_j\\
      \alpha_j\begin{pmatrix}
        0\\\sqrt{\frac{a_j}{1-a_j}}-\sqrt{\frac{b_j}{1-b_j}})
      \end{pmatrix} &\text{ if } p \geq b_j
    \end{cases}.
  \end{align*}
  \footnotetext{We also note that price in Uniswap V3 is not calculated on demand like we do here but is a state variable thats maintained throughout the implementation.
    Uniswap V3 also implements the Line \ref{item:V3-trade-check} by passing through all price ranges consecutive to $p$ and checking which price interval satisfies this check.
    We abstract away from this to avoid being caught up in technicalities as this is not the main problem we tackle.}
  In each of these cases, with a bit of algebra we can see that 
  \begin{equation*}
    \left(q_1-\alpha_j\sqrt{\frac{1-b_j}{b_j}}\right)\cdot \left(q_2-\alpha_j\sqrt{\frac{a_j}{1-a_j}}\right)= \left(\xa+\alpha_j\sqrt{\frac{1-b_j}{b_j}}\right)\cdot \left(\xb+\alpha_j\sqrt{\frac{a_j}{1-a_j}}\right) = \alpha_j^2,
  \end{equation*}
  as $\q$ in Protocol \ref{alg:general-two-asset-g} is liability which is negative of reserves in Protocol \ref{alg:V3-two-asset}.
\end{proof}

\begin{proposition}
  Protocol \ref{alg:V3-two-asset} is equivalent to  Protocol \ref{alg:general-two-asset-g} for $\Ginit = \{\sum_j\alpha_j g^{(j)} \mid \forall j\, \alpha_j > 0\}$, $\GLP = \{\sum_j\alpha_j g^{(j)} \mid \forall j\, \alpha_j \geq 0\}$ where 
  \begin{align*}
    g^{(j)}(p)=\begin{cases}
      p(\sqrt{\frac{1-b_j}{b_j}}-\sqrt{\frac{1-a_j}{a_j}}) & \text{ if } p\leq a_j \\
      -2\sqrt{p(1-p)}+p\sqrt{\frac{1-b_j}{b_j}} +(1-p)\sqrt{\frac{a_j}{1-a_j}} & \text{ if } a_j \leq p \leq b_j\\
      (1-p)(\sqrt{\frac{a_j}{1-a_j}}-\sqrt{\frac{b_j}{1-b_j}}) &\text{ if } p \geq b_j
    \end{cases}~.
  \end{align*}
\end{proposition}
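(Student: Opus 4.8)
The plan is to reprise the proof of Proposition~\ref{prop:protocol_equal} verbatim in structure, with Proposition~\ref{prop:is_uniswap_V3} playing the role that Proposition~\ref{prop:is_uniswap} played there. Write $\alpha_j := \sum_i \alpha^{ij}$ for the total weight on bucket $j$, so the aggregate generating function maintained by Protocol~\ref{alg:general-two-asset-g} is $g = \sum_j \alpha_j g^{(j)}$ and each LP's is $g_i = \sum_j \alpha^{ij} g^{(j)}$ (registering a new LP sets all its bucket weights to zero, i.e.\ $g_i\equiv 0$, in both protocols). Each $g^{(j)}$ is affine outside $[a_j,b_j]$ and contains the strictly convex term $-2\sqrt{p(1-p)}$ inside it, so when every $\alpha_j > 0$ the sum $g$ is strictly convex on $[0,1]$, and because the full-range bucket forces $|g'(p)|\to\infty$ as $p\to 0,1$ it lies in $\G^*$; thus Proposition~\ref{prop:equivalence-two-asset-general} applies, and it suffices to match, one component at a time, the \textproc{price} function, the \textsc{Initialize} check, the \textsc{ModifyLiquidity} request vector, and the \textsc{ExecuteTrade} check (with its fee split).

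For the price function, note that when the current price $p$ lies in bucket $l$, every $g^{(j)}$ with $j\neq l$ is affine at $p$, so $\liabilityf(\alpha_j g^{(j)},p)$ is one of the two constant ``virtual reserve'' vectors computed in the proof of Proposition~\ref{prop:is_uniswap_V3}, and only bucket $l$ contributes curvature. Summing those formulas gives $q_1-q_2 = \bigl(\sum_j \alpha_j g^{(j)}\bigr)'(p)$, and inverting — exactly as $p=q_2/(q_1+q_2)$ was recovered in the V2 proof — yields $p$ as the ratio of bucket-$l$ virtual reserves, matching \textproc{price} in Protocol~\ref{alg:V3-two-asset}. The \textsc{Initialize} check $\q^0 = \liabilityf(g_0,\pricef(g_0,\q^0))$ holds because \textsc{Initialize} (which just calls \textsc{ModifyLiquidity} on the full bucket) and \textsc{ModifyLiquidity} only ever set a liability of the form $\liabilityf(g,p)$, and Proposition~\ref{prop:is_uniswap_V3} shows every such vector satisfies the shifted constant-product relation; an induction over the sequence of calls propagates $\q = \liabilityf(g,\pricef(g,\q))$ to every reachable state, mirroring the induction in Proposition~\ref{prop:protocol_equal}.

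For \textsc{ModifyLiquidity}, since the LP's current liability is $\q^i = \liabilityf(g_i,p)$ with $g_i = \sum_j \alpha^{ij} g^{(j)}$, changing bucket $j$'s weight from $\alpha^{ij}$ to $\alpha'$ requests $\r^i = \q^i - \liabilityf(\hat g,p) = \liabilityf(\alpha^{ij} g^{(j)},p) - \liabilityf(\alpha' g^{(j)},p)$, as all other buckets' terms are unchanged and cancel. Plugging in the three-case expression for $\liabilityf(\alpha_j g^{(j)},p)$ from the proof of Proposition~\ref{prop:is_uniswap_V3} and collecting the factor $(\alpha^{ij}-\alpha')$ reproduces, up to the liability/reserve sign flip, exactly the piecewise request $\res'$ on line~\ref{item:V3-liquidity-request} (the cases $p<a_j$, $p>b_j$, $p\in[a_j,b_j]$ lining up).

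Finally, for \textsc{ExecuteTrade}, the general protocol accepts $\r$ iff $\q+\r = \liabilityf(\sum_j g_j, p')$ for the post-trade price $p'$ (line~\ref{item:general_aggregate_check}); by Proposition~\ref{prop:is_uniswap_V3}, within the bucket $u$ containing $p'$ this says precisely that the post-trade state lies on bucket $u$'s shifted constant-product curve with weight $\sum_i\alpha^{iu}$, while the pre-trade state is inductively on bucket $l$'s curve, so the left side of line~\ref{item:V3-trade-check} already equals $1$ and the displayed equality is equivalent to the general-protocol condition. The fee functions are instantiated as $\fee(\r,\q,g) = \beta\r_+$ and $\fee_i(\r,\q,\{g_i\}) = \beta\,\tfrac{\sum_j \alpha^{ij}}{\sum_j\sum_o \alpha^{oj}}\,\r_+$, with the sums ranging over the traversed buckets $[B^l,B^u]$, matching line~\ref{item:V3-fee-split}; and the per-LP updates $\r^i = \liabilityf(g_i,p')-\q^i$ sum to $\r$ and split proportionally to each LP's liquidity in those buckets. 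The main obstacle is the bookkeeping for a trade crossing several bucket boundaries: one must argue that the shifted constant-product curves of adjacent buckets glue into a single $C^1$ level set of $\sum_j g_j$ (so that Uniswap V3's ``walk through consecutive buckets'' faithfully implements the single level-set check), and one must read the reserve vector entering \textproc{price} and the trade check as the \emph{active} bucket's virtual reserves, since out-of-range buckets contribute fixed token amounts but no curvature — a technicality Protocol~\ref{alg:V3-two-asset} already declares it is abstracting away. Everything else is the same telescoping algebra as in the Uniswap V2 proof.
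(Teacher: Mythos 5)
Your proposal is correct and follows essentially the same route as the paper's proof: both mirror the structure of the Uniswap~V2 argument, use Proposition~\ref{prop:is_uniswap_V3} to establish the shifted constant-product invariant and the three-case liability vector, and then match the initialization check, the \textsf{ModifyLiquidity} request, the \textsf{ExecuteTrade} check, and the proportional fee/trade split $\r^i = \tfrac{\alpha^i}{\alpha}\r$. Your added remarks on the bucket-crossing bookkeeping correspond to the technicality the paper explicitly abstracts away in its footnote.
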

\begin{proof}
  To show that Protocol \ref{alg:general-two-asset-g} gives us Protocol \ref{alg:V3-two-asset}, for the choices of $g$ specified, involves showing three specific components are equivalent.
  They include showing that the initialization check satisfies and the request vectors in \textsf{ModifyLiquidity} and \textsf{ExecuteTrade} routines match.

  We saw $\alpha_j$ to mean total liquidity in $j$th interval i.e. $\sum_{i=0}^k\alpha^{ij}$ and let $g(p)=\alpha_j g^{(j)}$.
  Firstly, to show the initialization check, we derive that 
  \begin{align*}
    (g')^{-1}(q_1-q_2)=(g')^{-1}(\xb-\xa)=\frac{\xb+\alpha_j\sqrt{\frac{a_j}{1-a_j}}}{\xa+\alpha_j\sqrt{\frac{1-b_j}{b_j}}+\xb+\alpha_j\sqrt{\frac{a_j}{1-a_j}}}.
  \end{align*}
  We consider only the case when this price falls in a bucket $j$ as for all other buckets, $(g')^{-1}$ would not give a definitive result.
  \begin{align*}
    \liabilityf(g_0,\pricef(g_0,q^0))&=\alpha_j\begin{pmatrix}
      -\sqrt{\frac{\xa^0+\alpha_j\sqrt{\frac{1-b_j}{b_j}}}{\xb^0+\alpha_j\sqrt{\frac{a_j}{1-a_j}}}}+\sqrt{\frac{1-b_j}{b_j}}\\-\sqrt{\frac{\xb^0+\alpha_j\sqrt{\frac{a_j}{1-a_j}}}{\xa^0+\alpha_j\sqrt{\frac{1-b_j}{b_j}}}}+\sqrt{\frac{a_j}{1-a_j}}
    \end{pmatrix}\\
                                     &=\alpha_j\begin{pmatrix}-\frac{\xa^0+\alpha_j\sqrt{\frac{1-b_j}{b_j}}}{\alpha_j}+\sqrt{\frac{1-b_j}{b_j}}\\-\frac{\xb^0+\alpha_j\sqrt{\frac{a_j}{1-a_j}}}{\alpha_j}+\sqrt{\frac{a_j}{1-a_j}}
                                     \end{pmatrix}\\
                                     &= \begin{pmatrix}
                                       -\xa^0 \\ -\xb^0
                                     \end{pmatrix}  = \begin{pmatrix}
                                       q_1^0 \\ q_2^0
                                     \end{pmatrix}
  \end{align*}
  The first two steps are a result of Proposition \ref{prop:is_uniswap_V3}.
  
  Now, we show that the quantity of assets requested from a LP to change the liquidity level from $\alpha^{ij}$ to $\alpha'$ given by Line \ref{item:V3-liquidity-request} in Protocol \ref{alg:V3-two-asset} is equivalent to Line \ref{item:request_q_general_g} of Protocol \ref{alg:general-two-asset-g}.

  We first note that the price $p$, does not change in this operation, as liquidity must be added by keeping the price constant.
  Another way to see it is that $\pricef(g_0,\q^0)$ remains unchanged.
  A liquidity change of $\alpha^{ij}$ to $\alpha'$ at price $p\in B_j$ reflects a change of $g_i$ from $\alpha^{ij}g^{(j)}$ to $\hat{g}=\alpha'g^{(j)}$ in Uniswap V3.
  The quantity of asset requested by Protocol \ref{alg:general-two-asset-g} is given by
  \begin{align*}
    -(\liabilityf(\hat g,p) - \q^i)&=\liabilityf(g_i,p) - \liabilityf(\hat{g},p) \\
                                   &=  \begin{cases}
                                     \alpha^{ij}\left(-\sqrt{\frac{1-a_j}{a_j}}+\sqrt{\frac{1-b_j}{b_j}},0\right)-\alpha'\left(-\sqrt{\frac{1-a_j}{a_j}}+\sqrt{\frac{1-b_j}{b_j}},0\right) & \text{ if } p<a_j \\
                                     \alpha^{ij}\left(0,\sqrt{\frac{a_j}{1-a_j}}+\sqrt{\frac{b_j}{1-b_j}}\right)-\alpha'\left(0,\sqrt{\frac{a_j}{1-a_j}}+\sqrt{\frac{b_j}{1-b_j}}\right) & \text{ if } p>b_j \\
                                     (\alpha^{ij}-\alpha')\left(-\sqrt{\frac{1-p}{p}}+\sqrt{\frac{1-b_j}{b_j}},\sqrt{\frac{p}{1-p}}+\sqrt{\frac{a_j}{1-a_j}}\right) & \text{ if } p\in[a_j,b_j]
                                   \end{cases}\\
                                   &= \begin{cases}
                                     \left((\alpha'-\alpha^{ij})\left(\sqrt{\frac{1-a_j}{a_j}}-\sqrt{\frac{1-b_j}{b_j}}\right),0\right) & \text{ if } p<a_j \\
                                     \left(0,(\alpha'-\alpha^{ij})\left(\sqrt{\frac{b_j}{1-b_j}}-\sqrt{\frac{a_j}{1-a_j}}\right)\right) & \text{ if } p>b_j \\
                                     \left((\alpha'-\alpha^{ij})\left(\sqrt{\frac{1-p}{p}}-\sqrt{\frac{1-b_j}{b_j}}\right),(\alpha'-\alpha^{ij})\left(\sqrt{\frac{p}{1-p}}-\sqrt{\frac{a_j}{1-a_j}}\right)\right) & \text{ if } p\in[a_j,b_j]
                                   \end{cases}.
  \end{align*}

  Now, we show that the check in Line \ref{item:general_aggregate_check} of Protocol \ref{alg:general-two-asset-g} for the given $g$ gives us the condition $\left(\xa+\alpha_{l}\sqrt{\frac{1-b_l}{b_l}}\right)\left(\xb+\sqrt{\frac{a_l}{1-a_l}}\alpha_l\right)=\left(\xa-r_1+\alpha_u\sqrt{\frac{1-b_u}{b_u}}\right)\left(\xb-r_2+\sqrt{\frac{a_u}{1-a_u}}\alpha_u\right)$ where $\alpha_x = \sum_{i=0}^k \alpha^{ix}$ that appears in Uniswap V3.

  Let $\q=\sum_i\q^i$ and $\r=\sum_i\r^i$.
  The check in Protocol \ref{alg:general-two-asset-g} can be rewritten as 
  \begin{align*}
    \q+\r&= \liabilityf\left(\sum_{i=0}^k g_i,\hat{p}\right)\\
         &= \liabilityf\left(\sum_{i=0}^k \alpha^{iu}g^{(j)},\hat{p}\right)\\
         &= \sum_{i=0}^k \alpha^{iu}\liabilityf(g^{(j)},\hat{p}).
  \end{align*}
  From Proposition \ref{prop:is_uniswap_V3}, we can see that
  \begin{equation*}
    \left(\xa-r_1+\left(\sum_{i=0}^k \alpha^{iu}\right)\sqrt{\frac{1-b_u}{b_u}}\right)\cdot \left(\xb-r_2+\left(\sum_{i=0}^k \alpha^{iu}\right)\sqrt{\frac{a_u}{1-a_u}}\right) = (\sum_{i=0}^k \alpha^{iu})^2
  \end{equation*}
  and 
  \begin{equation*}
    \left(\xa+\sum_{i=0}^k \alpha^{il}\sqrt{\frac{1-b_l}{b_l}}\right)\cdot \left(\xb+\sum_{i=0}^k \alpha^{il}\sqrt{\frac{a_l}{1-a_l}}\right) = \left(\sum_{i=0}^k \alpha^{il}\right)^2,
  \end{equation*}
  proving what we need.

  The last fact we want to show to complete the proof is that $r^i=\frac{\alpha^i}{\alpha} r$ for Uniswap V3, where $\alpha_i=\sum_j\alpha^{ij}$, $\alpha=\sum_j\sum_{o=0}^k\alpha^{oj}$ for $j$ summing over baskets $B^l$ to $B^u$.
  We see that 
  \begin{align*}
    \r^i &=\liabilityf(g_i,\hat{p})-\q^i\\
         &= \alpha^i(\liabilityf(g^{(j)},\hat{p})-\liabilityf(g^{(j)},p))\\
         &= \frac{\alpha^i}{\alpha}\left(\liabilityf(g,\hat{p}) -\q \right) \text{\quad where $g=\sum_{i=0}^k g_i$ and $\q=\sum_{i=0}^k\q^i$}\\
         &= \frac{\alpha^i}{\alpha}\r,
  \end{align*}
  as desired.
\end{proof}
\section{On new protocols} \label{sec:new_protocols_detailed}

In some cases, the generalized bucketing scheme we provide in \S~\ref{sec:general_bucketing} is still overly restrictive on the expressivity of LPs, depending on the size of the price intervals.
It may also be unnatural for LPs to specify their liquidity allocation via discontinuous liquidity functions.
Fortunately, our general protocol easily allows one to generate more expressive restricted protocols, for particular families of $C$ (equivalently $g$ for two assets) functions which still allow for efficient computations.
For example, one could use ``soft'' liquidity buckets where liquidity continuously ``fades'' in and out around a target price $a_j$.

In the case of $n$ assets, one can generalize the idea of buckets, though now the task of partitioning the $d=n-1$-dimensional price space becomes nontrivial.
One promising approach would be to partition into a cell complex of convex polytopes, specifically a power diagram, \citep{aurenhammer1987powerd}.
One could define indicator functions as above to allow LPs to allocate liquidity uniformly (or according to some base shape) across each of these regions.
We also give a piecewise linear protocol which could be similarly adapted to these polyhedral regions, where the linear pieces align with the regions.

\subsection{Soft buckets allowing richer functions than discrete buckets}

Define a set $a_0 < a_1 = 0 < a_2 < \cdots < a_k = 1 < a_{k+1}$.
Our ``buckets'' $B_j$ will be supported on the interval $[a_{j-1},a_{j+1}]$.
To capture the continuous fading, define the triangular function $T^{(j)}:[0,1]\to[0,1]$ as $T^{(j)}(p) = \left(\frac{p-a_{j-1}}{a_j-a_{j-1}}\right)\ones_{p\in[a_{j-1},a_j]}+\left(\frac{a_{j+1}-p}{a_{j+1}-a_{j}}\right)\ones_{p\in[a_{j},a_{j+1}]}$.
The corresponding base liquidity functions are then $\ell^{(j)} = (\ell T^{(j)})(p)$ 
where $\ell(p)=2(p(1-p))^{-\frac{3}{2}}$, to again use the same base ``shape'' as the constant product invariant $\varphi_\alpha(\res)=\xa\xb=\alpha^2$.
Let $g^{(j)}$ then be the corresponding base generating function for $\ell^{(j)}$.

\begin{figure}
 \begin{center}
   \begin{tabular}{cc}
     \includegraphics{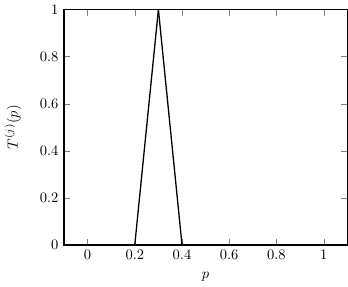}
     &
     \includegraphics{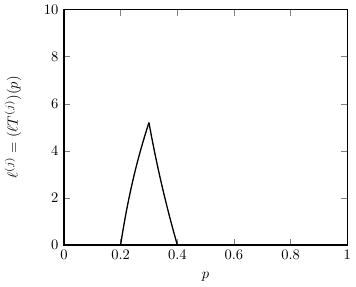}
   \end{tabular}
   \caption{$T^{(j)}(p)$ and $\ell^{(j)}(p)$ for $a_j=0.3$}
 \end{center}
\end{figure}

Now letting
$\Ginit = \{\sum_j\alpha_j g^{(j)} \mid \forall j\, \alpha_j > 0\}$
and $\GLP = \{\sum_j\alpha_j g^{(j)} \mid \forall j\, \alpha_j \geq 0\}$, Protocol~\ref{alg:general-two-asset-g} gives a new liquidity provisioning protocol.
While the corresponding computations appear to be essentially as light-weight as Uniswap V3, the liquidity function is better behaved. 
We can characterize the possible liquidity functions $\{g'' \mid g \in \GLP\}$ as the functions $f \ell$ where $f$ is an arbitrary nonnegative continuous function that is affine on each interval $[a_j,a_{j+1}]$.
(Simply take $\alpha_j = f(a_j)$.)
Thus, we have in particular that liquidity is always continuous in the price in this new protocol.
One can additionally innovate from here, adding more flexibility, while taking care to keep the various computations manageable.
\subsection{Piecewise linear market maker}

To demonstrate the robustness of our protocol, in this section we consider a market where the liquidity curves $\ell$ are not well-defined.
Yet, we show that our general scheme still gives rise to a sensible market maker. 

Consider a market maker that restricts the possible prices to $\{p^i\}_{i=1}^m$ where $0 < a_1 < \cdots < a_m < 1$ .
Consider a $(g^{(j)})'$  of the form 
\begin{equation*}
 (g^{(j)})'(p)=\begin{cases}
   a_j-1 & p \leq a_j \\
   [a_j-1,a_j]& p=a_j\\
   a_j & p\geq a_j
 \end{cases}.
\end{equation*}
$g'(p)=\sum_j \alpha_j (g^{(j)})'(p)$ where $\alpha_j = \sum_i\alpha^{ij}$.

Observe that liquidity is infinity at each price.
As prices only move discretely in this market, the market needs to be parameterized by reserves held / liability vector $\q$.
Hence the state of the market is $\{\alpha_{ij}\}_{i\in\{0,\cdots,k\},j\in\{1,\cdots ,m\}},\{\q^i\}_{i\in\{0,\cdots, k\}}$.

If the current market reserves are $\q$, the price in this market can be derived from the below formula 

\begin{align*}
   \pricef(g,\q) &= a_{j^*} \text{ where } j^*=\argmax_{j'} \text{ s.t. } \left\{\q-\sum_{j=1}^m\alpha_j(a_j-1)+\sum_{j=1}^{j'-1} \alpha_j \geq 0\right\}.
\end{align*}

For a given $\q$, let there exist $y\in[0,1)$ such that $\q=\sum_j \alpha_ja_j - \sum_{j=j^*}^m \alpha_j+y\alpha_{j^*}$.
We can maintain this relative liquidity in a bucket after changing the liquidity curves.

For the \textsf{ModifyLiquidity} function, let LP $i$ wants to change its liquidity levels from $\alpha_{ij}$ to $\alpha_{ij}'$ for price bucket $j$.
The new reserves that LP $i$ needs to deposit is given by $\hat{\q}-\q =(\alpha'_{ij}-\alpha_{ij})(a_j-\ones_{j\geq j^*}+y\ones_{j=j^*})$

\textsf{ExecuteTrade} takes in a trade $r$, computes the new price $\hat{p}$ corresponding to the new liability vector $\q+\r$ using the above given formula.

\section{On incomplete markets} \label{sec:app-incomplete-markets}

In the main body, we stated our general framework in terms of complete markets. Here, we describe how Protocol~\ref{alg:general-n-asset-cost} works for incomplete markets.
We follow \citet{abernethy2013efficient} to define an incomplete market with $k$ securities by specifying a function $\phi:\{1,\ldots,n\}\to\mathbb{R}^k$ where the $i$th security is $\phi_i$.
In \citet{abernethy2013efficient}, cost functions are given by the convex conjugate $\hat{C}:\mathbb{R}^k\to\mathbb{R}$ of some $F:\Pi\to\mathbb{R}$ where $\Pi=\conv(\{\phi(i):i\in\{1,\ldots,n\}\})$ is the price space (where $F$ is denoted by $R$ in their paper).
Alternatively, we can have securities live in $\mathbb{R}^n$, and \emph{extend} $F$ to $G:\Delta_n\to\mathbb{R}$ with
$$
G(\p)=F(E_{\p}\phi),\quad E_{\p}\phi=\sum_{i=1}^n p_i\phi_i,
$$
where $E_{\p}$ is the expected payoff of the securities.
Then, we can take the convex conjugate to get a cost function $C$ for Protocol~\ref{alg:general-n-asset-cost}.
We can observe that $G$ is flat along directions in which the expected payoff is constant, so $C$ has zero liquidity along these dual directions.
Hence, the share vectors $\q$ are effectively constrained to $\Span(\{\phi_1,\ldots,\phi_k\})$.
In this sense, incomplete markets are a special case of Protocol~\ref{alg:general-n-asset-cost}.

It is worth noting that it is easier to work in the lower-dimensional space $\mathbb{R}^k$ directly, as \citet{abernethy2013efficient} also do.
For our purposes, the protocols and equivalence results are easier to state, and \citet{abernethy2013efficient} provide similar equivalences but with stronger assumptions on $F$ and $\hat{C}$, namely strict convexity of $F$.
Since we do not require $G$ to be strictly convex in Theorem~\ref{thm:equivalence-interpretations}, our equivalence theorem applies to the incomplete market setting. 
\end{document}